\newcommand{\Din}{\mathcal{D}_{\text{in}}}
\newcommand{\Dout}{\mathcal{D}_{\text{out}}}
\newcommand{\Dother}{\mathcal{R}}
\newcommand{\Cgr}{\mathcal{C}}
\title{Fast tree inference with weighted fusion penalties}
\author{
  Julien Chiquet, Pierre Gutierrez and Guillem Rigaill
}
\begin{document}

\maketitle

\begin{abstract}
  Given a  data set with many  features observed in a  large number of
  conditions, it is  desirable to fuse and  aggregate conditions which
  are  similar  to  ease  the  interpretation  and  extract  the  main
  characteristics of the data.  This paper presents a multidimensional
  fusion penalty framework to address this question when the number of
  conditions  is  large.  If  the  fusion  penalty  is encoded  by  an
  $\ell_q$-norm,  we  prove  for  uniform weights  that  the  path  of
  solutions is a tree which is suitable for interpretability.  For the
  $\ell_1$ and  $\ell_\infty$-norms, the path is  piecewise linear and
  we derive  a homotopy  algorithm to recover  exactly the  whole tree
  structure.  For  weighted $\ell_1$-fusion penalties,  we demonstrate
  that distance-decreasing  weights lead to balanced  tree structures.
  For  a  subclass  of  these weights  that  we  call  ``exponentially
  adaptive'', we derive  an $\mathcal{O}(n\log(n))$ homotopy algorithm
  and we prove an asymptotic oracle property.  This guarantees that we
  recover the underlying structure of the data efficiently both from a
  statistical and  a computational point  of view.  We provide  a fast
  implementation  of the  homotopy  algorithm for  the single  feature
  case, as  well as  an efficient embedded  cross-validation procedure
  that takes advantage of the tree structure of the path of solutions.
  Our  proposal outperforms  its competing  procedures on  simulations
  both in terms of timings and  prediction accuracy.  As an example we
  consider  phenotypic   data:  given   one  or  several   traits,  we
  reconstruct a balanced tree structure  and assess its agreement with
  the known taxonomy.
\end{abstract}

\section{Introduction}
\label{sec:intro}

As data floods in, it is  now possible to compare many features across
a very  large number of conditions  in various fields of  science.  To
cite  but  a  few,  we   encounter  this  setting  in  genomics  where
high-throughput technologies allow us  to monitor the expression level
of many genes  (the features) at various stages of  a given biological
process  (the conditions);  this  also occurs  in phylogenetics  where
several  quantitative traits  (the  features) are  available for  many
species (the  conditions).  Beyond  biological sciences, sets  of data
gathered  in  astronomy are  now  routinely  composed of  millions  of
conditions for  hundreds of features.   An interesting question  is to
group  together --  or fuse  --  these conditions  across the  feature
space, arguing that  these conditions should not  really be considered
as different.  In  other words, we aim at  recovering an interpretable
clustering of those conditions.

There are basically two cases:  either a prior group structure between
the conditions is known, or it is not. %
In the  first case, one typically  applies one-way ANOVA --  or MANOVA
for  multiple  features --  to  test  for any  significant  difference
between  all  pairs  of  groups.   The  final  structure  between  the
conditions then depends on the level  of significance used to test for
differences.  However, when  the number $K$ of groups  is large, which
typically occurs for a large number $n$ of conditions, this leads to a
multiple-testing issue  and algorithmic  problems since the  number of
pairwise tests  is in  $\mathcal{O}(K^2)$.  Furthermore, each  test is
performed independently and the resulting structure
is   not  necessarily  simple   and  easily
interpretable.

In the  second case,  when no prior  group structure is  available, we
basically face a clustering problem over the multidimensional space of
the  features.    A  popular  heuristic  to  solve   this  problem  is
agglomerative  clustering,  which  defines  a  hierarchical  structure
between   the  conditions.    Hierarchies  are   very   appealing  for
interpretability.  A serious  bottleneck  of agglomerative  clustering
when   analyzing    large   data    sets   is   its    complexity   in
$\mathcal{O}(n^3)$, which  can be reduced  to $\mathcal{O}(n^2)$ using
single-linkage clustering.

There are two major issues for large  values of $n$: $i)$ the need for
an interpretable structure  between the conditions and  $ii)$ the need
for   a  computationally   and   statistically  efficient   estimation
procedure.  These two goals  cannot be reached simultaneously, neither
by  MANOVA   nor  by  agglomerative  clustering   algorithms,  due  to
restrictions either on the  interpretability of the inferred structure
or the computational  burden of the procedure.  This  paper presents a
unifying approach to tackle these two problems simultaneously by means
of a weighted fusion penalty  that constructs a hierarchical structure
on the  conditions at a  low computational  cost and reaching  the two
aforementioned goals.   Section \ref{sec:model} presents  our proposal
in detail and  puts it in perspective with existing  methods.  Then we
use the optimality conditions detailed in Section \ref{sec:optimality}
to     characterize      the     regularization      path     (Section
\ref{sec:pathANDtree}).   In  Section  \ref{sec:weights},  we  propose
weights  for which  the  path  is provably  without  splits.  For  the
$\ell_1$-norm some of those weights  lead to a desirable balanced tree
structure. In Section \ref{sec:optim}  we present a homotopy algorithm
which is in $\mathcal{O}(K \log K)$  for well chosen weights.  We also
provide an  efficient embedded  cross-validation procedure to  tune up
the level of  aggregation -- or fusion -- between  groups in the ANOVA
settings.  Numerical experiments  illustrate the extremely competitive
performance   of  our   algorithm  in   terms  of   timings.   Section
\ref{sec:consistency}   presents   consistency  results   that   bring
statistical guarantees for our approach.  We illustrate our theorem on
a simulation study that shows that our weights are more efficient than
those    of    its    competing    procedures.     Finally,    Section
\ref{sec:application}   is  dedicated   to  a   complete  example   in
phylogenetics where our  method is applied to the  reconstruction of a
balanced tree  structure across several phylogenetic  features between
many species.   We assess its  relevance by comparison with  the known
phylogeny.


\section{A penalized framework for tree inference}
\label{sec:model}

To  bring MANOVA  and  hierarchical clustering  together  in the  same
unifying penalized framework,  note that the latter  can be considered
as a  particular case of the  former when there is  only one condition
per group,  \textit{i.e}, when  $K=n$.  This  can be  thought of  as a
non-informative prior on the clustering between the conditions.

To be more  specific, we set $y_{ij}$ the observation  of a continuous
random variable that  describes the intensity of the  $j$th feature in
condition $i$,  with $i\in\set{1,\dots,n}$  and $j\in\set{1,\dots,p}$.
The $p$-dimensional vector $\by_i=(y_{i1}, \dots, y_{ip})$ encompasses
the data  related to  condition $i$  across the  $p$ features.  We are
given a partition with $K$ groups  as prior knowledge that is depicted
by   the    indexing   function   $\kappa   :    \set{1,\dots,n}   \to
\set{1,\dots,K}$.   In words,  $\kappa$ indicates  the group  to which
condition $i$ is allocated \emph{a priori}.  The number of elements
in  group $k$  is denoted  by $n_k  = \text{card}\set{i:\kappa(i)=k}$,
such that $\sum_k n_k = n$.

One-way  MANOVA  is a  multivariate  linear  regression problem  whose
parameters  are fitted  by  minimizing the  residual  sum of  squares,
\textit{i.e.},
\begin{equation*}
  \minimize_{\bbeta    \in    \Rset^{Kp}}   \sum_{i=1}^n    \sum_{j=1}^p
  \left(      y_{ij}     -     \beta_{\kappa(i)j}      \right)^2     =
  \argmin_{\bbeta\in\Rset^{Kp}}   \sum_{i=1}^n    \left\|   \by_i   -
    \bbeta_{\kappa(i)} \right\|_2^2 ,
\end{equation*}
where $\beta_{kj}$  is the  coefficient for the  $j$th feature  in the
$k$th        group,        such         that        $\bbeta_k        =
(\beta_{k1},\dots,\beta_{kp})\in\Rset^p$.  The final structure between
the  conditions is  obtained  by testing  for significant  differences
between        all        pairs       of        estimated        means
$(\hat{\beta}_{kj},\hat{\beta}_{\ell j})$ using Fisher statistics.

Compared to MANOVA, hierarchical clustering assumes one individual per
group,  that  is  $K=n$  or  equivalently  $\kappa(i)  =  i$  for  all
$i=1,\dots,n$.  It  performs agglomeration by recursively  joining the
closest   points.    As   suggested   by   \cite{HOCKING-clusterpath},
hierarchical  clustering aims  at solving  the following  optimization
problem:
\begin{equation}
  \label{eq:hierarchical_clustering}
  \minimize_{\bbeta\in\Rset^{np}} \sum_{i=1}^n  \left\| \by_i - \bbeta_i
  \right\|_2^2, \qquad \text{ s.t. } \sum_{i>i'}
  \I_{\bbeta_i \neq \bbeta_i'} \leq t.
\end{equation}
The complete hierarchy between the conditions is recovered by starting
from $t=n(n-1)/2$, where no constraint applies, then by decreasing $t$
until   all  points   agglomerate.    This   immediately  suggests   a
corresponding scheme for agglomerating  groups of conditions in MANOVA
just by using the prior grouping  knowledge encoded by $\kappa$ in the
square loss.  However,  Problem \eqref{eq:hierarchical_clustering} and
its  MANOVA  counterpart  are   difficult  combinatorial  problems  in
general.   To overcome  this  restriction, we  consider the  following
convexified Lagrangian formulation which  includes the whole family of
optimization problems discussed throughout this paper:
\begin{equation}
  \label{eq:criterion_general}
  \minimize_{\bbeta \in \Rset^{Kp}} \frac{1}{2} \sum_{i=1}^n \left\| \by_i -
    \bbeta_{\kappa(i)} \right\|_2^2
  + \lambda \ \sum_{k,\ell:k\neq\ell} w_{k\ell} \ \Omega( \bbeta_k - \bbeta_\ell).
\end{equation}
In general, $\Omega$ is a norm and $w_{k\ell}$ are positive, symmetric
weights  over  all pairs  of  groups  in $\set{1,\dots,K}$  such  that
$w_{k\ell}>0$  and $w_{k\ell}=w_{\ell  k}$. The  penalty term  and the
choice of  $\Omega$ is designed  to encourage elements of  $\bbeta$ to
``fuse''   by   enforcing   similarity  between   pairs   of   vectors
$(\bbeta_k,\bbeta_\ell)$  as in  the  fused-Lasso signal  approximator
\citep{friedman2007pathwise},  which   is  an   $\ell_1$-based  method
designed to  aggregate pairs of  elements.  As  such, we refer  to the
penalty term in~\eqref{eq:criterion_general}  as a ``fusion'' penalty.
In the  multidimensional case though,  other choices are  possible for
$\Omega$ that induce a fusion effect.  The level of fusion is tuned by
two parameters:  the global level  of penalty $\lambda$ and  the group
specific weights  $w_{k\ell}$, the choice  of which is of  the highest
importance. It conditions both $i)$ the ability of the method to infer
an interpretable structure between the conditions, $ii)$ the existence
of fast algorithms  to fit the parameters $\bbeta$  for various values
of $\lambda$  and $iii)$ the  existence of statistical  guarantees for
the estimator.  The  main objective of this paper is  to study classes
of weights that reach these three goals simultaneously.

\paragraph*{Links            to            existing            works.}
Problem~\eqref{eq:criterion_general}  is   a  generalization   of  two
interesting existing procedures related to  ours.  The first one arose
in the  clustering framework  and is  known as  the \emph{Clusterpath}
\citep{HOCKING-clusterpath}.  The  \emph{Clusterpath} covers  cases in
\eqref{eq:criterion_general} where $K=n$ and $\Omega(\cdot) = \| \cdot
\|_q$ with $q\in\set{1,2,\infty}$.  Still, for general weights, the complexity
of the associated  algorithms does not improve  over the agglomerative
clustering, and the  inferred structure is not a  tree.  However, when
$q=1$, the path of solutions is linear with respect to $\lambda$ and a
homotopy  algorithm  is  used by  \citeauthor{HOCKING-clusterpath}  to
recover the  solutions over  all the values  of $\lambda$  that either
correspond   to  events   of  fusion   or  split   between  a   couple
$(\bbeta_k,\bbeta_\ell)$.  Moreover, if  $w_{k\ell}=1$ and $q=1$, they
showed that no split event can occur and that a homotopy algorithm can
be  implemented  in $\mathcal{O}(n  \log(n))$.   In  other words,  the
reconstructed structure is  a tree in this case.   However the unitary
weights typically lead  to unbalanced hierarchies which  are not fully
satisfactory.

A  second close  cousin to  our  approach is  the \emph{Cas-ANOVA}  of
\cite{bondell2008simultaneous}.       \emph{Cas-ANOVA}       is      a
$\ell_1$-penalized  version   of  the   ANOVA  which   corresponds  to
\eqref{eq:criterion_general} in the univariate setting where $p=1$ and
$\Omega(\cdot)  =  \| \cdot  \|_1$.   The  main contribution  of  this
proposal    is    statistical:    \citeauthor{bondell2008simultaneous}
introduce   adaptive   weights    $w_{kl}   \varpropto   \sqrt{n_k   +
  n_\ell}/(\bar{y}_{k}-\bar{y}_{\ell})$, where $n_k$  is the number of
conditions in group $k$ and $\bar{y}_k = \sum_{i:\kappa(i)=k} y_i/n_k$
is  the  corresponding  empirical  mean.  Similar  weights  have  been
proposed   in   \cite{gertheiss2010sparse}   to  cope   with   ordered
categorical variables.   These weights have an  adaptive property such
that  the  corresponding  estimator   of  $\bbeta$  enjoys  asymptotic
consistency,    in    the    manner     of    the    adaptive    Lasso
\citep{zou2006adaptive}. Still,  \emph{Cas-ANOVA} weights do  not lead
to a tree when the number  of individuals per condition is unbalanced,
\textit{i.e.}, $n_k \neq n_\ell$ for any couple $(k,\ell)$.  Moreover,
the optimization procedure is  in $\mathcal{O}(K^2)$ and only provides
the solution  for a  given $\lambda$.   We also  experienced numerical
instability using \emph{Cas-ANOVA} weights.

\paragraph*{Contributions.}   Compared  to   these   two  works,   our
contributions are the following:
\begin{itemize}
\item We prove that no split can  occur along the path of solutions in
  \eqref{eq:criterion_general} when  $w_{k\ell} = n_k  \cdot n_{\ell}$
  and  $\Omega(\cdot)$ is  an $\ell_q$-norm.   As a  consequence, this
  proves  that  the  \emph{Clusterpath}  does not  split  for  unitary
  weights,  whatever  the  choice  of  the  norm  (as  conjectured  by
  \citeauthor{HOCKING-clusterpath} for the $\ell_2$-norm).
\item  When Problem  \eqref{eq:criterion_general} is  separable across
  the features (\textit{e.g.}, when $\Omega$ is the $\ell_1$-norm), we
  introduce distance-decreasing  weights for  which we prove  that the
  path is a  tree.  From an interpretation point of  view, this family
  of   weights   is   particularly   interesting  as   it   leads   to
  \emph{balanced} tree structures.
\item  For  the  $\ell_1$-norm, we  introduce  exponentially  adaptive
  weights that  enter the family of  distance-decreasing weights. They
  enjoy asymptotic  oracle properties that guarantee  selection of the
  true underlying structure  for a large scale  of possible $\lambda$.
  This shows that our estimator  shares the same asymptotic properties
  as \emph{Cas-ANOVA},  but for a larger  range of $\lambda$ and  at a
  much lower computational cost.
\item    We    provide    a    general    homotopy    algorithm    for
  \eqref{eq:criterion_general}  when  $\Omega(\cdot)  =  \|\cdot\|_1$,
  whatever  the  choice of  $w_{k\ell}$.   On  a single  feature,  the
  initialization for unspecified weights  is in $\mathcal{O}(K^2)$ and
  the  homotopy itself  is in  $\mathcal{O}(K \log(K))$.   However, we
  propose a faster initialization procedure for exponentially adaptive
  weights  such that  the  whole  complexity for  $p$  features is  in
  $\mathcal{O}(p K \log(K))$ -- or  $\mathcal{O}(p n \log (n))$ in the
  clustering framework.
\item  When  the number  $K$  of  prior  groups  is smaller  than  $n$
  (\textit{e.g.},  in   the  ANOVA  settings,  when   there  are  some
  replicates  per condition/group),  a  natural cross-validation  (CV)
  error can  be defined.  In  this case,  we develop a  fast procedure
  that takes advantage  of the DAG (directed  acyclic graph) structure
  of the path of solutions along $\lambda$.  This approach has a lower
  complexity than a standard CV procedure.
\end{itemize}

In     short,     we     propose    choices     for     weights     in
\eqref{eq:criterion_general}  that induce  a  balanced tree  structure
between the  conditions such that the  associated estimation procedure
enjoys     the     good     computational    properties     of     the
$\ell_1$-\emph{Clusterpath}  with   unitary  weights,   with  stronger
statistical guarantees than \emph{Cas-ANOVA}.

\paragraph*{Motivating example in phylogeny.}   As a simple motivating
example, we  consider a  univariate problem in  phylogeny. We  want to
reconstruct a tree between many  species based on some simple features
(like the height,  or the weight of individuals).   Ideally, this tree
should resemble the  known phylogeny.  We illustrate this  task on the
``Animal  Ageing Longevity  Database''\footnote{publicly available  at
  \url{http://genomics.senescence.info/species/}},    which   provides
various  features   for  many  animal  species.    Here,  we  consider
classifying  bird species  based  on their  birth  weight.  The  known
phylogeny groups  these $n=184$  individuals into $40$  bird families,
themselves grouped into $15$ orders.  We reconstruct the tree based on
the weights  and check whether  it matches  the orders and  the family
classification.         Recovered       solution        paths       of
\eqref{eq:criterion_general}      are      plotted      in      Figure
\ref{fig:phylo_trees}     for    $a)$     the    Cas-ANOVA     weights
\citep{bondell2008simultaneous}  ;  $b)$ the  ``default''  Clusterpath
weights \citep{HOCKING-clusterpath}; and $c)$  our own weights that we
call ``fused-ANOVA'' weights.   On the left panel,  the Cas-ANOVA path
includes many  splits which make interpretation  rather difficult.  On
the middle panel, default Clusterpath  weights, as expected, provide a
tree structure.  Still,  the structure of this tree  is unbalanced and
thus not fully satisfactory in the  sense that small groups often fuse
with  very large  ones. Specifically,  the Clusterpath  tree does  not
capture  the  simple   fact  that  there  are   visibly  three  groups
corresponding  to  light,  medium  or heavy  birds.   Conversely,  the
fused-ANOVA  tree in  the right  panel  is more  balanced and  clearly
exhibits these three  groups.  Furthermore, it is  in better agreement
with the  known phylogenetic classification, improving  the rand index
by $5\%$ compared to ClusterPath.
\begin{figure}[htbp!]
  \centering
  \noindent\begin{tabular}{@{}c@{}c@{}c@{}cc@{}}
    \rotatebox{90}{\hspace{.25em}estimated coefficients $\hatbbeta_{\lambda}$} 
    & \includegraphics[width=.275\textwidth]{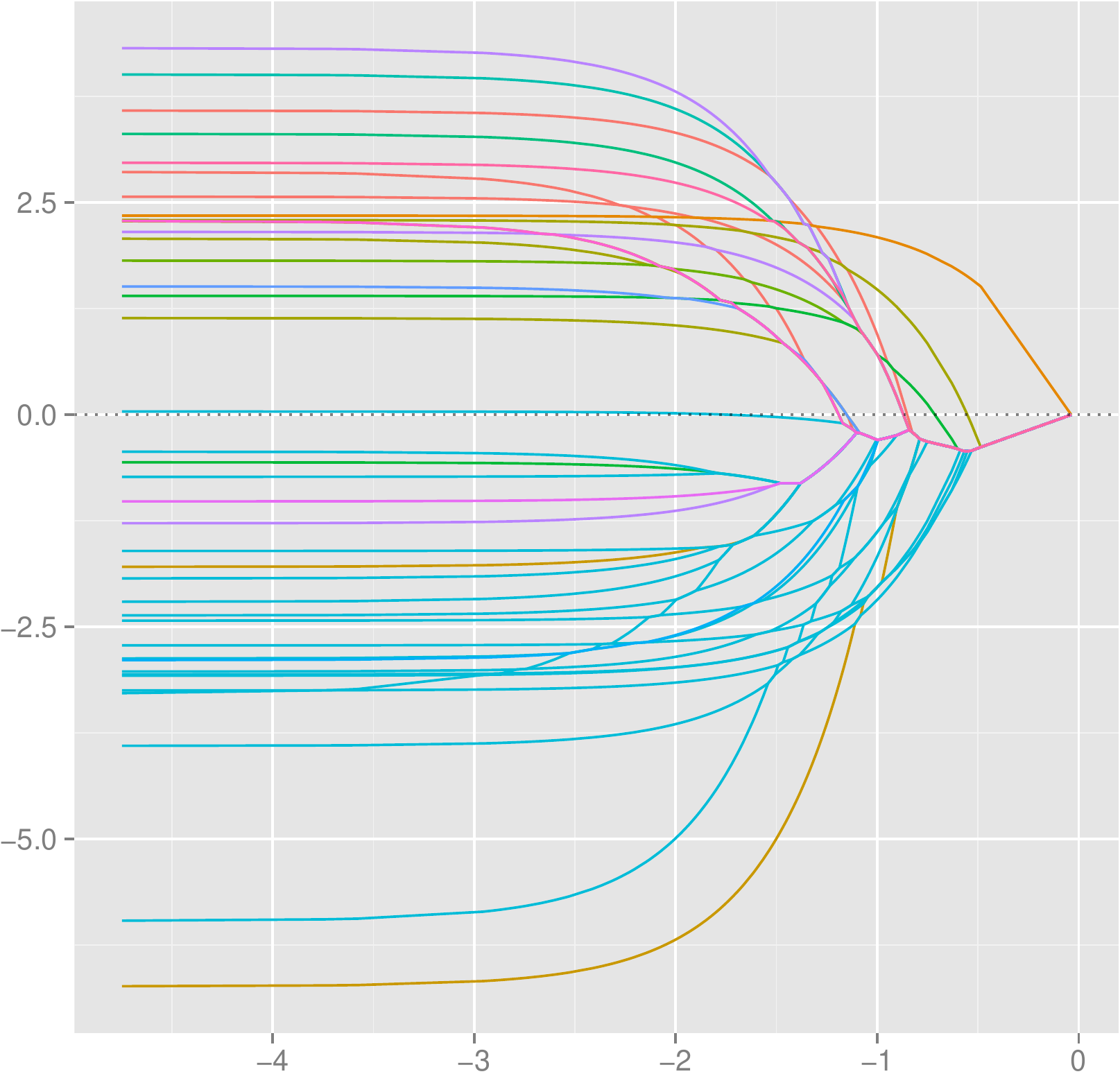}
    & \includegraphics[width=.275\textwidth]{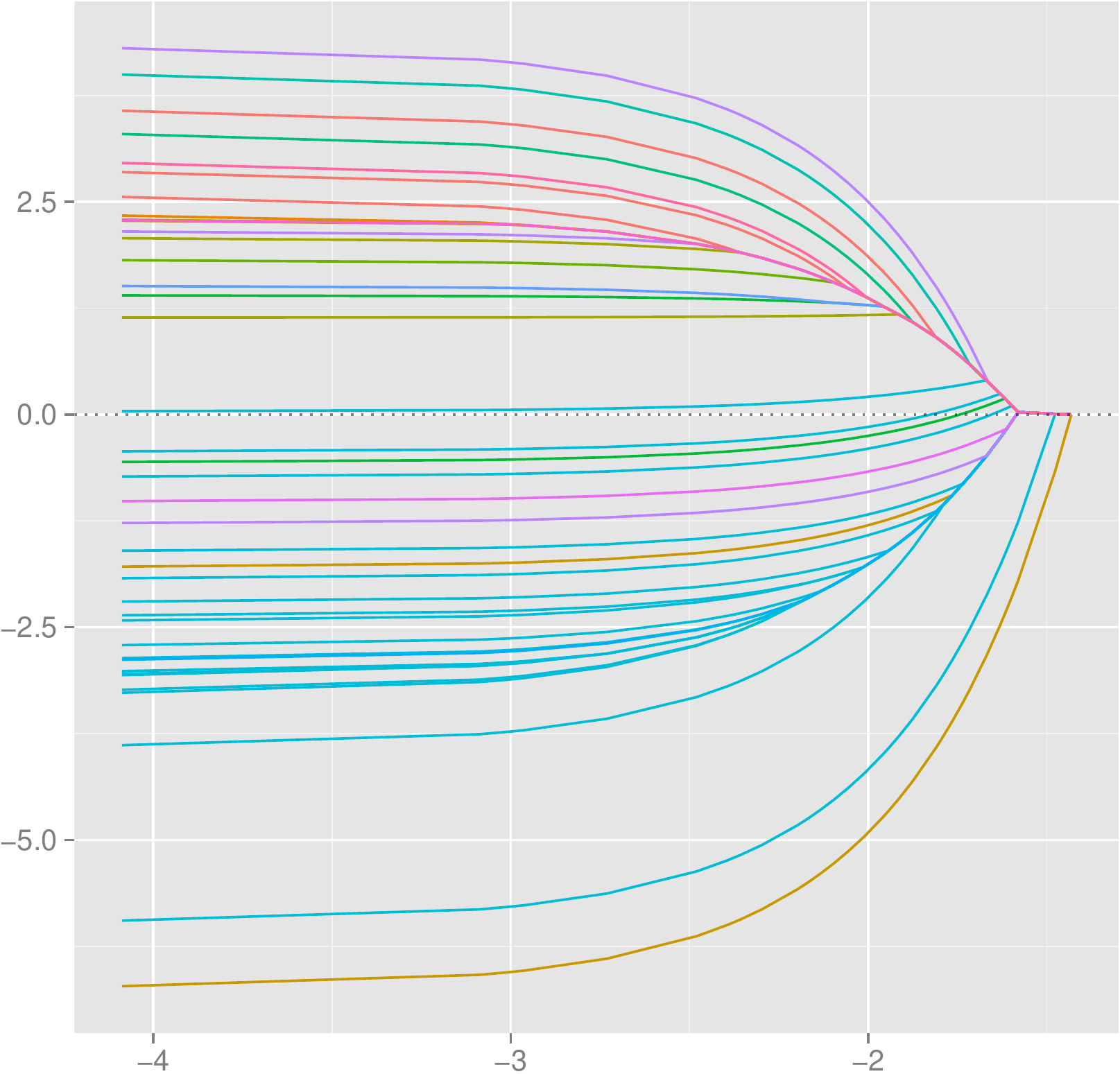}
    & \includegraphics[width=.275\textwidth]{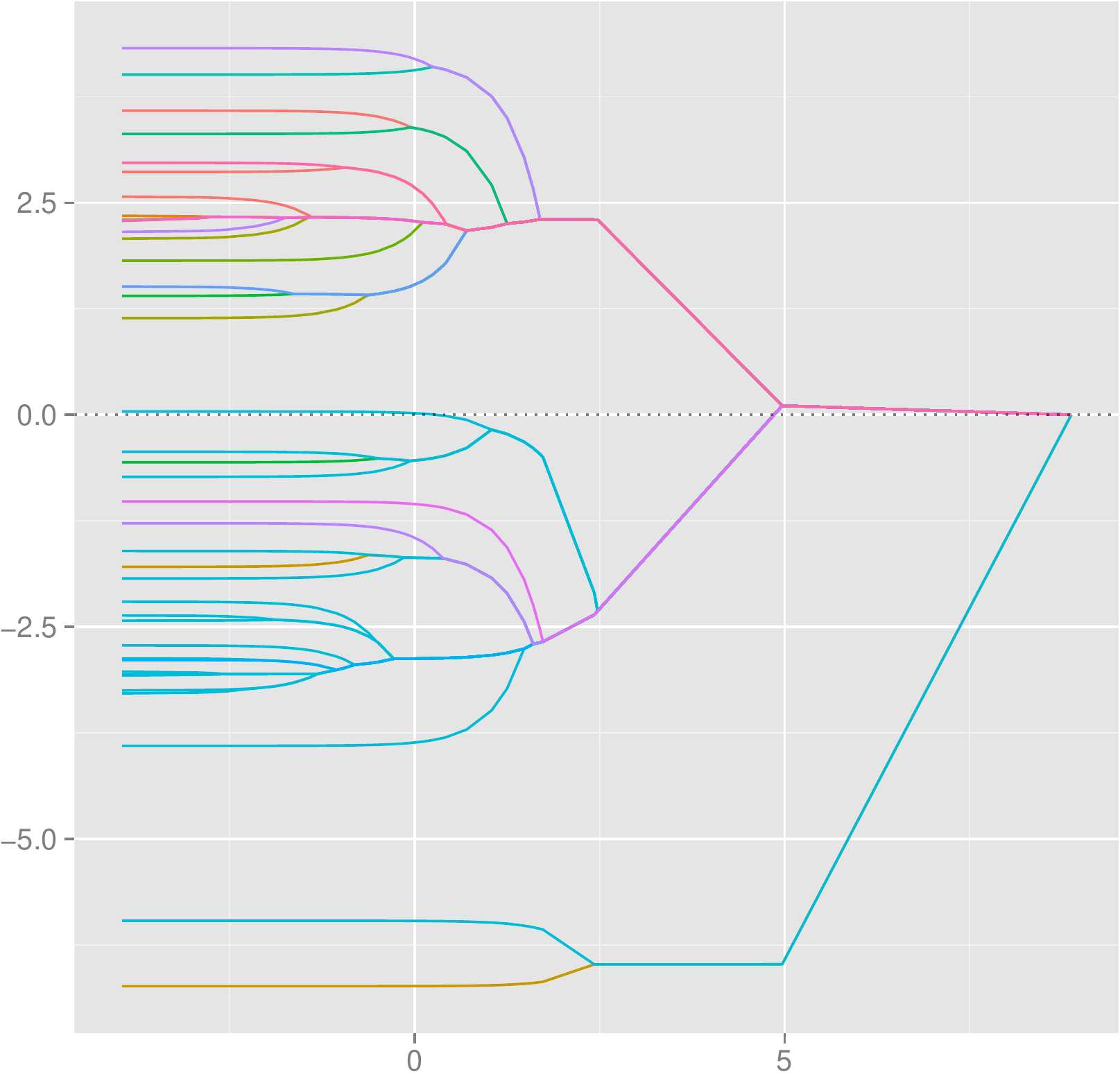} 
    & \includegraphics[clip=true, trim= 0 -13pt 0 0,width=.072\textwidth]{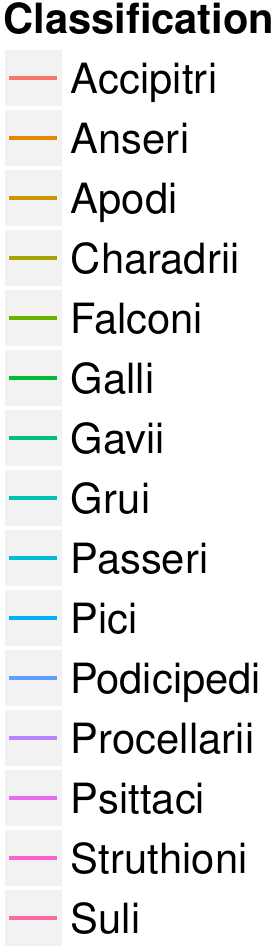} \\
    & \multicolumn{4}{c}{tuning parameter $\lambda$ (log scale)} \\
    & $a)$ Cas-ANOVA & $b)$ Clusterpath & $c)$ Fused-ANOVA & \\
  \end{tabular}
  \caption{Reconstructed  phylogenetic  trees  for  various  weighting
    schemes.   Families classified in  the same  order share  the same
    color.}
  \label{fig:phylo_trees}
\end{figure}

\paragraph*{Multidimensional $\ell_1$ Clusterpath and fused-ANOVA.}

In the previous  example, we consider only one  feature.  In practice,
one often has to consider multiple  features at the same time. This is
possible with our proposed weighted $\ell_1$-penalty. Indeed, as noted
by \cite{HOCKING-clusterpath}, Problem~\eqref{eq:criterion_general} is
separable on  dimensions when considering the  $\ell_1$-penalty, which
is  also  the  case  for   our  weighted  fused-ANOVA  scheme.   Thus,
Clusterpath  and  fused-ANOVA  algorithms solve  the  multidimensional
problem in two steps:
\begin{enumerate}
\item   First,   they  recover   $p$   independent   trees  (one   per
  dimension). This task can be easily executed in parallel.
\item Second they aggregate those $p$  trees in a consensus tree. This
  is  done   by  considering   the  same  penalty   value  ($\lambda$)
  corresponding to a given height  in those trees. Two individuals $k$
  and $\ell$  are in  the same multidimensional  cluster if  they have
  been fused on every dimension.
\end{enumerate}

This  multidimensional  classification  is  recovered  on  a  grid  of
$\lambda$   in   the   \textbf{clusterpath}   package   and   in   the
\textbf{fusedanova} package. 

Note however that the classification recovered over all the dimensions
is not necessarily better than  those recovered on single, well-chosen
features.     We   illustrate    this    point   at    the   end    in
Section~\ref{sec:application}  on phylogenetic  data: in  a number  of
cases, the  best agreement with the  known phylogeny is obtained  by a
single-feature-based tree.


\section{Optimality conditions and consequences}
\label{sec:optimality}

We  start  by   characterizing  Problem  \eqref{eq:criterion_general},
giving elementary facts which are at the basis of most of our results.
Note that the objective  function in \eqref{eq:criterion_general} is a
nonsmooth  function which  is  strictly convex  in  $\bbeta$ and  thus
admits a unique  solution when $\lambda\geq 0$.  This  solution can be
characterized by  the KKT (Karush-Kuhn-Tucker) conditions  that may be
derived     thanks     to     subgradient     calculus     \citep[see,
\textit{e.g.},][]{cvx_optim}.   In  the  case  at  hand,  $\bbeta$  is
optimal  if, for  all $k\in\set{1,\dots,K}$,  $\bbeta_k$ verifies  the
following subgradient equations:
\begin{equation}
  \label{eq:KKT_general}
  \bzr_p  = -  n_k (\bar{\by}_k  - \bbeta_k)  +  \lambda \sum_{\substack{\ell:
      \ell\neq k \\ \bbeta_k = \bbeta_\ell}}
  w_{k\ell} \boldsymbol\tau_{k\ell} + \lambda
  \sum_{\substack{\ell:
      \ell\neq k \\ \bbeta_k \neq \bbeta_\ell}}
  w_{k\ell} \frac{\partial \Omega (\bbeta_k - \bbeta_\ell)}{\partial \bbeta_k},
\end{equation}
where $\bar{\by}_k = \sum_{i:\kappa(i)=k}  \by_i/n_k$ is the vector of
empirical  means  for  the  $k$th group  across  every  feature.   The
$p$-dimensional vectors  $\boldsymbol\tau_{k\ell}$ are such  that, for
any $k$, there exists $\ell\neq  k$ with $\bbeta_k = \bbeta_\ell$ such
that   $\boldsymbol\tau_{k\ell}  =   -\boldsymbol\tau_{\ell  k}$   and
$\Omega(\boldsymbol\tau_{k\ell}) \leq 1$. We omit the proof as it is a
straightforward  adaptation of  the fused-Lasso  subgradient equations
\citep{hoefling2010path} to the multidimensional  case, with a general
norm $\Omega$.
\\

Interesting consequences arise  when summing the subgradient equations
\eqref{eq:KKT_general} for  all $\bbeta_k$ which are  ``fused'' in the
same cluster, as stated in the following Lemma.
\begin{lemma}
  \label{lem:KKT_sum_beta}
  Consider  a  cluster  $C=\set{k:\bbeta_k=\bbeta_C}$ formed  by  some
  $\bbeta_k$,     where     $\bbeta$     is    the     solution     to
  \eqref{eq:criterion_general}. Then we have
  \begin{equation}
    \label{eq:KKT_sum_beta}
    \bbeta_C  = \bar{\by}_C  -  \frac{\lambda}{n_C} \sum_{\ell\notin  C}
    w_{C\ell} \frac{\partial \Omega(\bbeta_C-\bbeta_\ell)}{\partial\bbeta_C},
  \end{equation}
  where  $n_C  =  \sum_{k\in  C}  n_k,  \bar{\by}_C  =  \sum_{k\in  C}
  \bar{\by}_k / n_C$ and $w_{C\ell} = \sum_{k\in C} w_{k\ell}$.
\end{lemma}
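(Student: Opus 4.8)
The plan is to derive \eqref{eq:KKT_sum_beta} by summing the stationarity conditions \eqref{eq:KKT_general} over every group $k$ belonging to the cluster $C$, and then exploiting the antisymmetry of the within-cluster subgradients to make most of the terms collapse. Adding \eqref{eq:KKT_general} over all $k\in C$ produces three contributions: a data-fitting term coming from the quadratic loss, a ``fused'' term gathering the $\boldsymbol\tau_{k\ell}$ for pairs lying inside $C$, and a ``differentiable'' term gathering the gradients $\partial\Omega/\partial\bbeta_k$ for pairs straddling the boundary of $C$. I would treat these three blocks separately.

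For the data-fitting block, since $\bbeta_k=\bbeta_C$ for every $k\in C$, the sum $\sum_{k\in C} n_k(\bar{\by}_k-\bbeta_k)$ reduces to $n_C\bar{\by}_C-n_C\bbeta_C$ after substituting $\bbeta_k=\bbeta_C$ and recognizing $\sum_{k\in C} n_k\bar{\by}_k=n_C\bar{\by}_C$ as the cluster mean together with $n_C=\sum_{k\in C}n_k$. For the differentiable block, I would first note that a pair $(k,\ell)$ with $k\in C$ and $\bbeta_k\neq\bbeta_\ell$ necessarily has $\ell\notin C$; hence for a fixed $\ell\notin C$ the gradient $\partial\Omega(\bbeta_k-\bbeta_\ell)/\partial\bbeta_k=\partial\Omega(\bbeta_C-\bbeta_\ell)/\partial\bbeta_C$ is independent of $k\in C$, so summing the weights over $k\in C$ yields exactly $w_{C\ell}=\sum_{k\in C}w_{k\ell}$ and leaves $\lambda\sum_{\ell\notin C}w_{C\ell}\,\partial\Omega(\bbeta_C-\bbeta_\ell)/\partial\bbeta_C$.

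The step I expect to be the crux is showing that the fused block vanishes. For $k\in C$, the inner sum runs over $\ell\neq k$ with $\bbeta_\ell=\bbeta_k=\bbeta_C$, i.e.\ precisely over $\ell\in C\setminus\set{k}$; thus, after summing over $k$, the double sum ranges over all ordered pairs of distinct elements of $C$. Grouping each unordered pair $\set{k,\ell}$ and invoking the two symmetry properties $w_{k\ell}=w_{\ell k}$ and $\boldsymbol\tau_{k\ell}=-\boldsymbol\tau_{\ell k}$ recorded below \eqref{eq:KKT_general}, the paired contribution $w_{k\ell}\boldsymbol\tau_{k\ell}+w_{\ell k}\boldsymbol\tau_{\ell k}$ is zero, so the entire fused block cancels. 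Collecting the three blocks gives $\bzr_p=-n_C\bar{\by}_C+n_C\bbeta_C+\lambda\sum_{\ell\notin C}w_{C\ell}\,\partial\Omega(\bbeta_C-\bbeta_\ell)/\partial\bbeta_C$, and dividing by $n_C$ and rearranging yields \eqref{eq:KKT_sum_beta}. The only subtlety worth checking is that $\Omega$ is differentiable at every argument $\bbeta_C-\bbeta_\ell\neq\bzr_p$, so that the boundary term is a genuine gradient; this holds away from the origin, which is exactly the regime $\bbeta_k\neq\bbeta_\ell$ selecting that block.
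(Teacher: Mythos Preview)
Your proposal is correct and follows essentially the same route as the paper: sum the KKT conditions \eqref{eq:KKT_general} over $k\in C$, cancel the within-cluster subgradient block via the antisymmetry $\boldsymbol\tau_{k\ell}=-\boldsymbol\tau_{\ell k}$ together with $w_{k\ell}=w_{\ell k}$, and collapse the cross-cluster gradients using $\partial\Omega(\bbeta_k-\bbeta_\ell)/\partial\bbeta_k=\partial\Omega(\bbeta_C-\bbeta_\ell)/\partial\bbeta_C$ for $k\in C$, $\ell\notin C$. Your write-up is in fact slightly more careful than the paper's, since you make explicit why the fused block ranges exactly over ordered pairs in $C$ and you flag the differentiability caveat for $\Omega$ away from the origin.
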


\begin{proof}
  By summing \eqref{eq:KKT_general} for all $k\in C$, we have
  \begin{equation*}
    \bzr_p   =  -   n_C   \bar{\by}_C  +   n_C   \bbeta_C  +   \lambda
    \sum_{k,\ell\in C: k\neq \ell} w_{k\ell} \boldsymbol\tau_{k\ell} + \lambda
    \sum_{k\in C, \ell\notin C} w_{k\ell} \frac{\partial \Omega (\bbeta_k - \bbeta_\ell)}{\partial \bbeta_k}.
  \end{equation*}
  Then, by the KKT conditions, we must have $\boldsymbol\tau_{k\ell} =
  -\boldsymbol\tau_{\ell k}$  for some  $k,\ell\in C$. Thus  the third
  term  on the  left-hand side  of  the above  expression vanishes  by
  symmetry  of the  weights $w_{k\ell}$.   Also notice  that $\partial
  \Omega(\bbeta_k   -   \bbeta_\ell)/\partial  \bbeta_k   =   \partial
  \Omega(\bbeta_{k'} - \bbeta_{\ell}) /  \partial \bbeta_{k'}$ for any
  $k,k'\in C, \ell\notin C$, and we easily get the desired result.
\end{proof}


\section{Regularization path and tree structure}
\label{sec:pathANDtree}

Characterization        of       the        minimization       Problem
\eqref{eq:criterion_general} in terms of  its optimality conditions is
essential in  many ways.  In particular,  Lemma \ref{lem:KKT_sum_beta}
allows  us  to  characterize  the  regularization  path  of  solutions
$\set{\bbeta(\lambda),  \lambda  > 0}$  depending  on  the choices  of
$\Omega$ and $w_{k\ell}$.  This is important for our problem since the
shape  of the  path is  actually the  structure recovered  between the
conditions.   This  is  also  important   since  it  may  induce  some
computational  properties  that  guarantee  a low  complexity  of  the
associated  fitting   procedure.   This  section   investigates  which
conditions  must be  imposed on  the regularization  path to  ensure a
structure  that is  fully satisfactory  both in  terms of  algorithmic
complexity and interpretability, namely, a balanced tree structure.

The  mildest  condition  which   is  required  is  continuity  of  the
regularization   path,    that   is   to   say,    of   the   function
$\set{\bbeta(\lambda),    \lambda   >    0}$:    without   continuity,
interpretability of the recovered structure is obviously out of reach.
This property is straightforward for solutions of problems of form
\eqref{eq:criterion_general} which is strictly convex.
However,  continuity  of  the  path   is  not  enough  to  provide  an
interpretable structure, and we  shall investigate conditions ensuring
that the  inferred structure  is a tree.   In terms  of regularization
path, it requires that any couple  of parameters which have fused at a
certain   time   $\lambda_0$    such   that   $\bbeta_k(\lambda_0)   =
\bbeta_\ell(\lambda_0)=\bbeta_C$  cannot  ``split''   anymore  in  the
future,  that is,  for  any  value $\lambda  >  \lambda_0$ that  would
correspond to a  higher level in the hierarchy of  the tree.  Insights
on this remark  can be found in Figure  \ref{fig:paths}, where various
regularization paths are plotted in  the univariate case. Paths on the
top and bottom left panels contain splits, while the remainders do not
and lead  to trees with different  shapes the properties of  which are
discussed later in this section.
\begin{figure}[htbp!]
  \centering
  \begin{tabular}{@{}l@{\hspace{.75em}}c@{}c@{}c@{}}
    \rotatebox{90}{\hspace{.6em}\small Piecewise quadratic path} &
    \includegraphics[width=.3\textwidth]{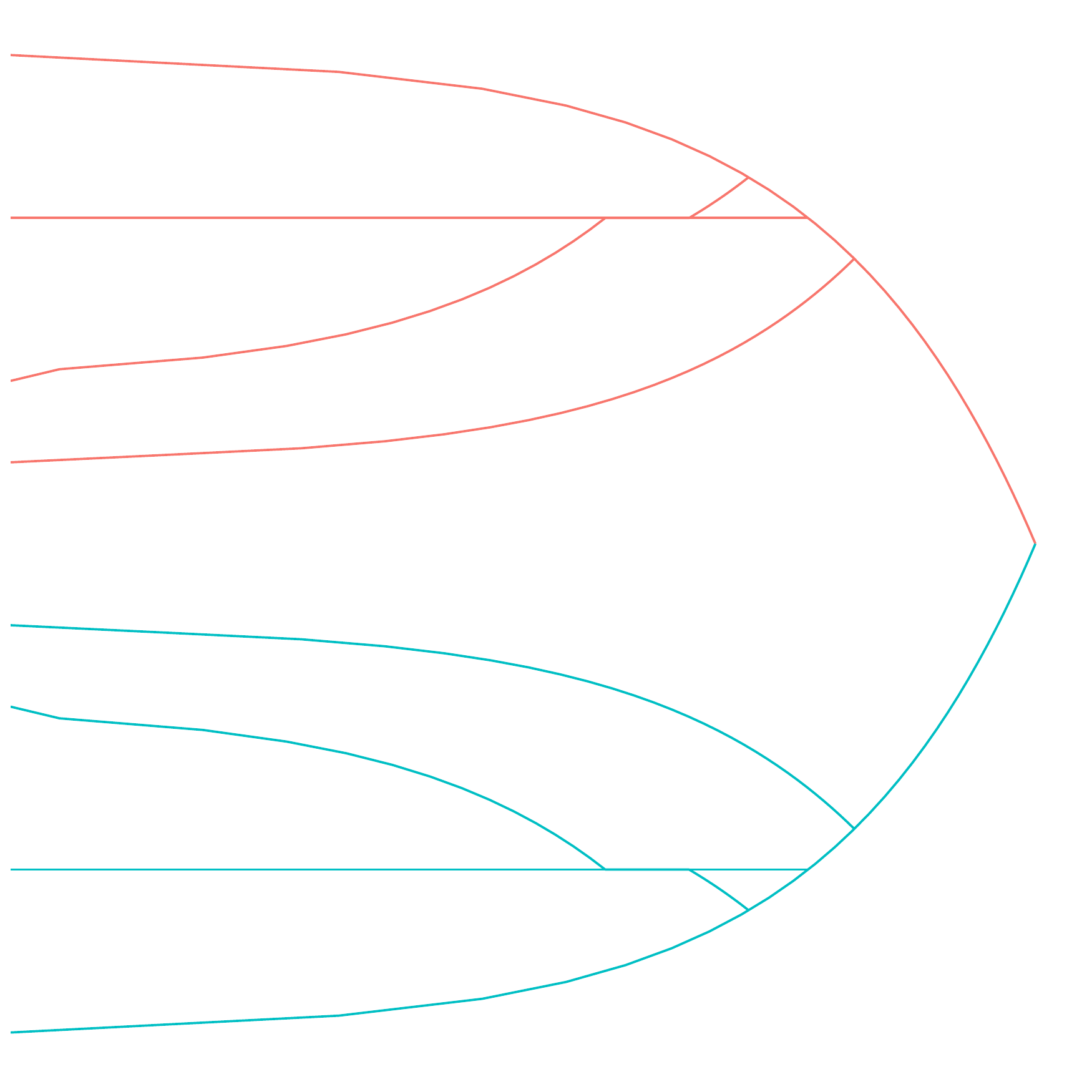} &
    \includegraphics[width=.3\textwidth]{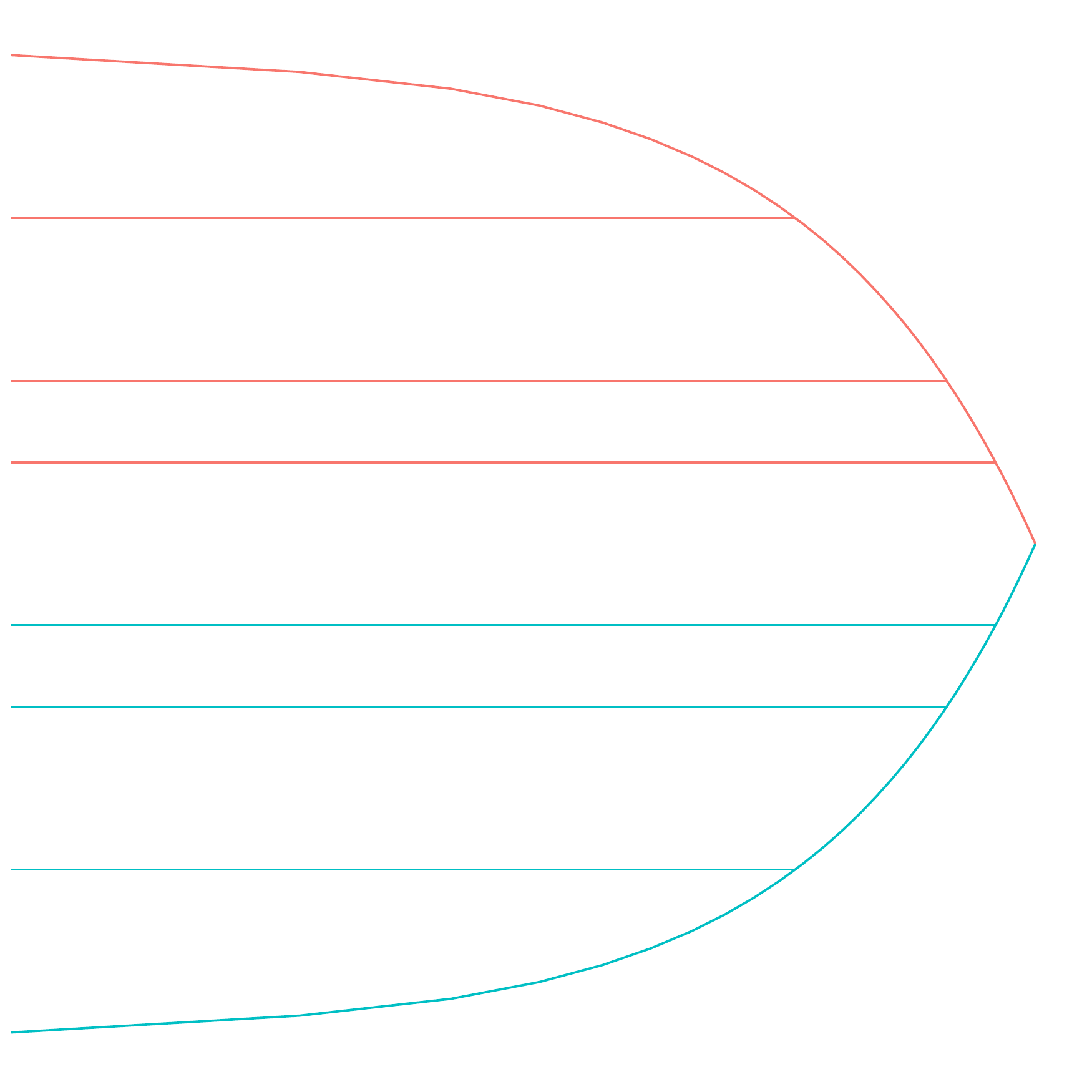} &
    \includegraphics[width=.3\textwidth]{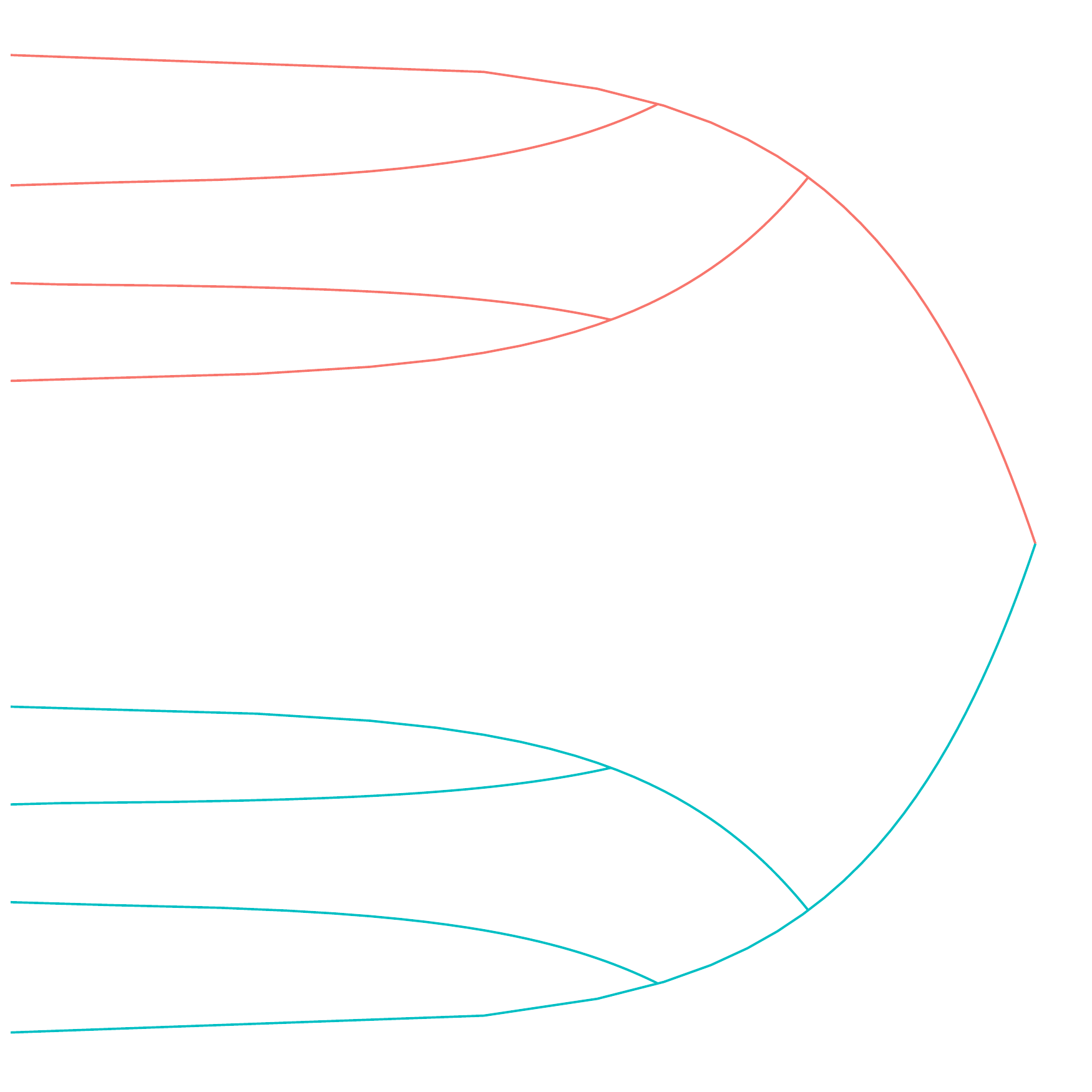} \\
    \rotatebox{90}{\hspace{1.1em}\small Piecewise linear path} &
    \includegraphics[width=.3\textwidth]{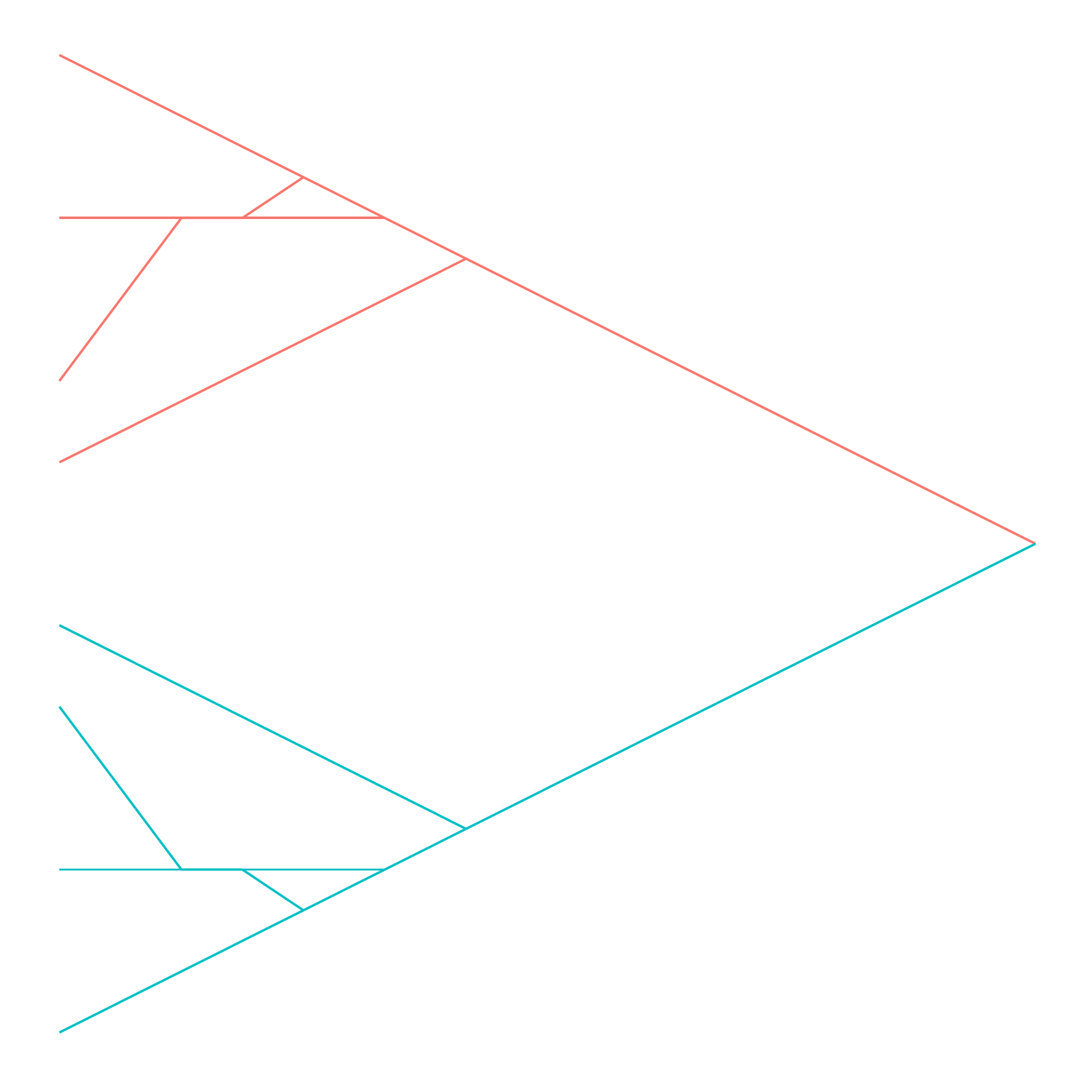} &
    \includegraphics[width=.3\textwidth]{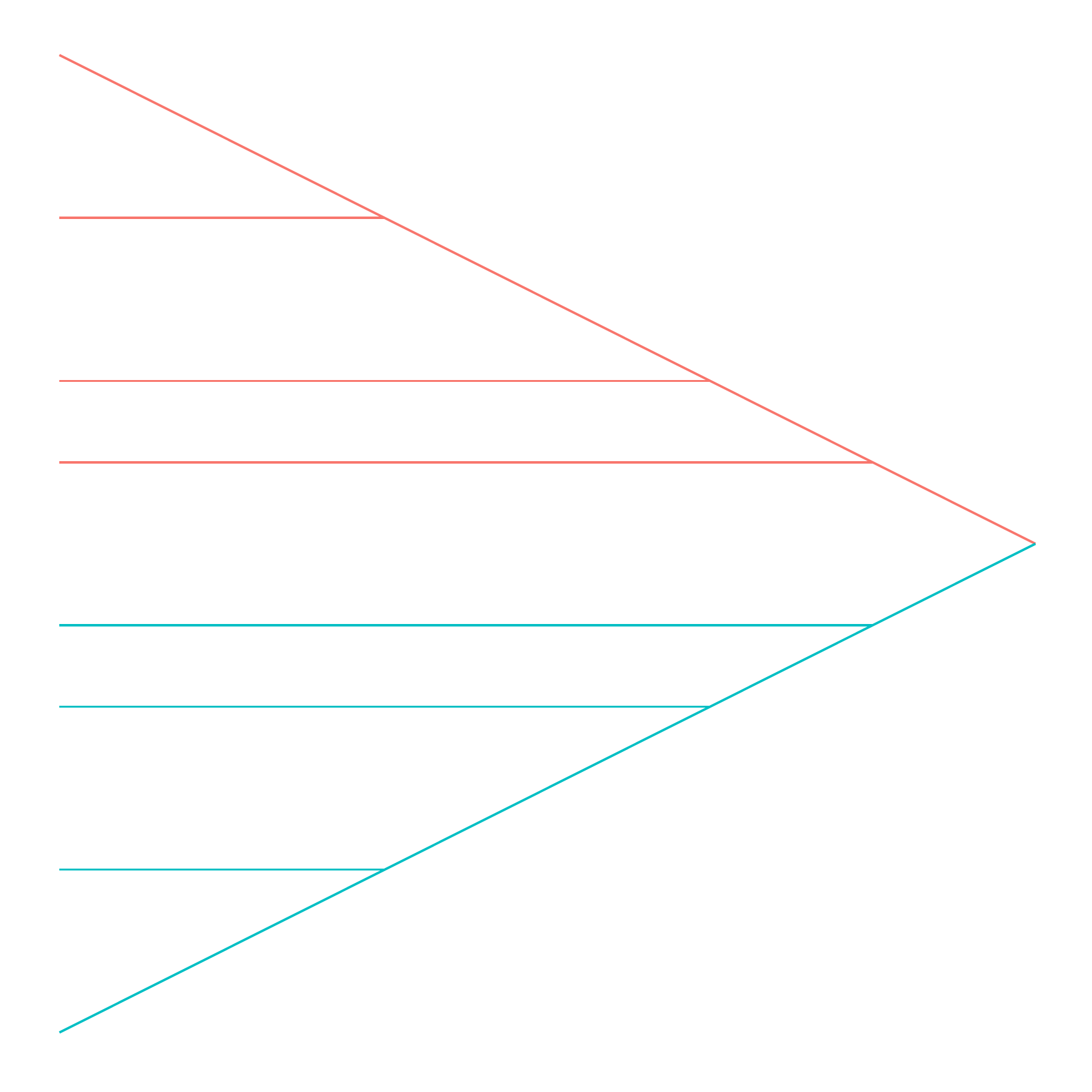} &
    \includegraphics[width=.3\textwidth]{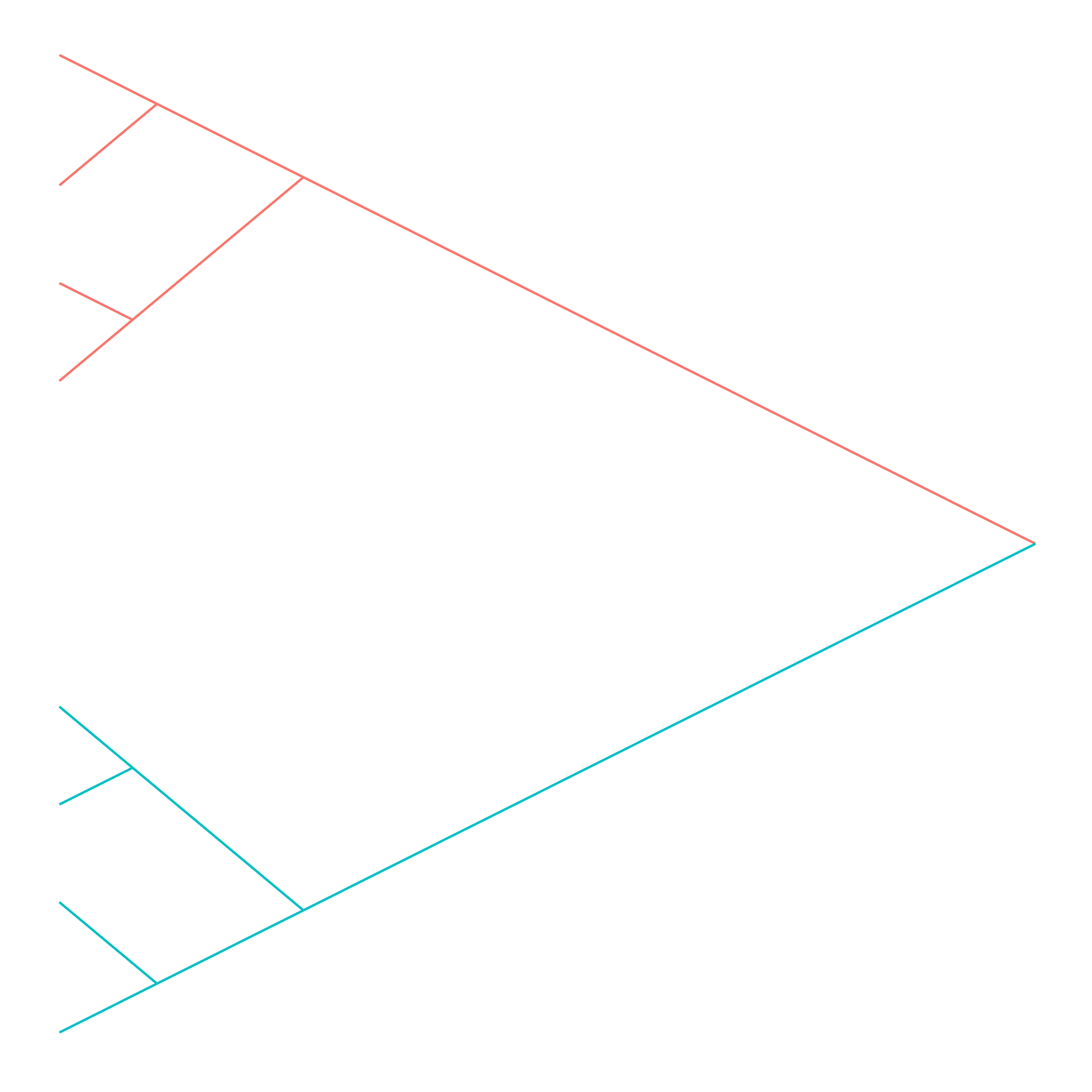} \\
    & \small not a tree (splits) &
    \small unbalanced tree & \small balanced tree \\
   \end{tabular}

   \caption{Various  typologies  of the  regularization  paths in  the
     single  feature case  that  lead to  more  or less  interpretable
     structures.}
  \label{fig:paths}
\end{figure}

Though highly  desirable, guaranteeing  a tree  is complicated  as the
absence of splits in  the path of \eqref{eq:criterion_general} depends
jointly on  the choice of  the weights  $w_{k\ell}$ and on  the fusing
norm $\Omega(\cdot)$.  In  the following Theorem, we  provide a simple
generic choice for  the weights that ensures the absence  of splits in
the general formulation with $\ell_q$-norms.
\begin{theorem}
  \label{thm:norm-nosplit}
  If $\Omega$ is an  $\ell_q$-norm with $q\in\set{1,\dots,\infty}$ and
  $w_{kl}   =    n_k   \cdot   n_\ell$,   the    path   of   solutions
  $\set{\bbeta(\lambda):\lambda>0}$   of  \eqref{eq:criterion_general}
  contains no splits.
\end{theorem}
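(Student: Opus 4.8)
The plan is to prove the statement by a uniqueness argument: I will show that whenever two groups are fused into a common cluster $C$ at a penalty level $\lambda_0$, the fully fused configuration keeps satisfying the optimality conditions \eqref{eq:KKT_general} for every $\lambda>\lambda_0$. Since \eqref{eq:criterion_general} is strictly convex its minimizer is unique, so the path must coincide with this fused configuration and $C$ can never split. Everything hinges on the self-similarity of the weights: summing $w_{k\ell}=n_k n_\ell$ over $k\in C$ gives $w_{C\ell}=\sum_{k\in C} n_k n_\ell=n_C\,n_\ell$, so the weights seen by the cluster are again of product type.

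First I would write down the reduced problem obtained by freezing $C$ to a single value while leaving all other groups free. Using the parallel-axis identity
\begin{equation*}
  \sum_{k\in C} n_k\,\|\bar{\by}_k-\bbeta_C\|_2^2 = \sum_{k\in C} n_k\,\|\bar{\by}_k-\bar{\by}_C\|_2^2 + n_C\,\|\bar{\by}_C-\bbeta_C\|_2^2,
\end{equation*}
the fused data term equals $\frac{1}{2} n_C\|\bar{\by}_C-\bbeta_C\|_2^2$ up to an additive constant, the within-$C$ penalty vanishes, and the $C$-to-outside penalties collapse with weight $w_{C\ell}=n_C n_\ell$. Hence the reduced problem is again an instance of \eqref{eq:criterion_general}; its stationarity condition for $\bbeta_C$ reproduces \eqref{eq:KKT_sum_beta} of Lemma~\ref{lem:KKT_sum_beta}, and splitting the outer sum over $k\in C$ shows its stationarity conditions for each external $\bbeta_\ell$ reproduce the full external conditions of \eqref{eq:KKT_general}. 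The external blocks therefore take care of themselves, and the only thing left to verify is the \emph{individual} equation \eqref{eq:KKT_general} for each $k\in C$.

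The crux is an exact cancellation. Substituting $w_{k\ell}=n_k n_\ell$ into \eqref{eq:KKT_general} for $k\in C$ and dividing by $n_k$, the external force becomes $\lambda\sum_{\ell\notin C} n_\ell\,\partial\Omega(\bbeta_C-\bbeta_\ell)/\partial\bbeta_C$, which no longer depends on $k$; by Lemma~\ref{lem:KKT_sum_beta} it is precisely $\bar{\by}_C-\bbeta_C$. After substitution the centroid and all external terms disappear and the individual conditions collapse to the purely internal system
\begin{equation*}
  \sum_{\ell\in C,\ \ell\neq k} n_\ell\,\boldsymbol\tau_{k\ell} = \frac{\bar{\by}_k-\bar{\by}_C}{\lambda},\qquad k\in C,
\end{equation*}
to be solved by antisymmetric $\boldsymbol\tau_{k\ell}=-\boldsymbol\tau_{\ell k}$ obeying the subgradient bound of \eqref{eq:KKT_general}. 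The right-hand side is a \emph{fixed} direction $\bar{\by}_k-\bar{\by}_C$ (the data is frozen) rescaled only by $1/\lambda$, and it involves neither other $\lambda$-dependent quantities nor the outside structure.

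Finally I would run the scaling argument. Because $C$ is the actual cluster of the solution at $\lambda_0$, the displayed system is feasible there with some $\{\boldsymbol\tau_{k\ell}^{0}\}$ meeting the norm bound. For $\lambda>\lambda_0$ its right-hand side is merely $(\lambda_0/\lambda)$ times the one at $\lambda_0$, so $\boldsymbol\tau_{k\ell}=(\lambda_0/\lambda)\,\boldsymbol\tau_{k\ell}^{0}$ solves it while preserving antisymmetry and shrinking the norm by the factor $\lambda_0/\lambda<1$; as the constraint is a norm constraint it remains satisfied. Thus the fused configuration satisfies all the optimality conditions \eqref{eq:KKT_general} for every $\lambda>\lambda_0$, and uniqueness closes the argument. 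I expect the algebraic cancellation to be the main obstacle: one must invoke the product structure twice — once to get $w_{C\ell}=n_C n_\ell$ and once to make the per-unit-mass external force $k$-independent — so that Lemma~\ref{lem:KKT_sum_beta} can be plugged in to erase the external and centroid contributions; once the internal system is isolated, its monotonicity in $\lambda$ is immediate. Note that this handles all $q\in\set{1,\dots,\infty}$ and any number of features $p$ at once, since the scaling acts identically on vector subgradients and for every norm.
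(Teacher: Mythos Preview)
Your proof is correct and follows the same route as the paper: subtract the cluster equation of Lemma~\ref{lem:KKT_sum_beta} from the individual KKT conditions to isolate an internal system whose right-hand side is $(\bar{\by}_k-\bar{\by}_C)/\lambda$, then rescale the $\lambda_0$-subgradients by $\lambda_0/\lambda$ and invoke uniqueness of the strictly convex minimizer. The only differences are organizational --- the paper collapses all clusters simultaneously into one ``cluster-level'' problem whereas you collapse a single cluster $C$ and let your reduced problem handle the exterior --- and you work directly with general group sizes $n_k$, which the paper relegates to a remark.
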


The        proof        is         postponed        to        Appendix
\ref{sec:norm-nosplit}. Schematically, it investigates the subgradient
equations  of  \eqref{eq:criterion_general}  and shows  that  given  a
solution  at  $\lambda_0$,  we  can always  explicitly  construct  for
$\lambda  >\lambda_0$ a  valid  subgradient not  involving any  split.
Theorem    \ref{thm:norm-nosplit}   generalizes    the   results    of
\cite{HOCKING-clusterpath} obtained for $\Omega(\cdot) = \|\cdot \|_1$
in  the  clustering  case  when  $w_{k\ell}  =  n_k=n_\ell=1$  to  any
$\ell_q$-norm $\Omega$.

A consequence  of Theorem~\ref{thm:norm-nosplit} is that  the existing
implementation of  the $\ell_2$ Clusterpath  -- or any  other $\ell_q$
solver -- can  be simplified by no longer  considering the eventuality
of   splits    with   default    weights   \citep[see    Algorithm   1
in ][]{HOCKING-clusterpath}.
\\

As  said  before, guaranteeing  a  tree-structure  is the  first  step
towards  interpretability.   As such,  Theorem  \ref{thm:norm-nosplit}
characterizes an interesting family of problems.  Still, the scope of
arbitrary  norms  with  uniform  weights  is  not  fully  satisfactory
because, even when the structure is a tree,
\begin{itemize}
\item the  path is not a  linear function of $\lambda$  in general, as
  illustrated  on the first  row of  Figure \ref{fig:paths}.   In this
  situation, detecting the events of fusion may be expensive. It might
  be  impossible  to  provide  an  efficient algorithm  to  infer  the
  structure at a low computational cost.
\item the inferred structure may  be highly unbalanced. By unbalanced,
  we mean a  tree where two parameters initially close  to one another
  at $\lambda=0$ fuse relatively late  in the path of solutions.  Such
  situations   are   depicted  on   the   second   column  of   Figure
  \ref{fig:paths}. It is obvious that disequilibrium may significantly
  narrow the potential for interpretability of the tree.
\end{itemize}

First, equilibrium  of the  inferred structure is  a property  that is
mainly controlled  by the $w_{k\ell}$.   We cannot limit  ourselves to
$w_{k\ell} = n_k  \cdot n_\ell$ and must exhibit  weights sharing both
the equilibrium and  the non-split property. This will lead  us to the
distance-decreasing weights described in the next section.

Second,  piecewise-linearity   --  and   thus  existence  of   a  fast
path-following algorithm  -- is  a property of  the norm  $\Omega$.  A
solution path  which is piecewise  linear can be  computed efficiently
(and exactly)  with a homotopy algorithm  like the LARS for  the LASSO
\citep{efron2004least}.   More generally,  \citet{2007_AS_rosset} give
conditions for the  existence of such a property in  a broad penalized
framework.   These results  are easily  adapted to  the case  at hand,
where     we    roughly     have    to     differentiate    Expression
\eqref{eq:KKT_sum_beta}    over    $\lambda$    to    conclude:    for
$\Omega(\cdot)=\|\cdot\|_q$ any $q$-norm with $q\geq1$, then
  \begin{equation}
    \label{eq:beta_derivative}
    \frac{\partial  \bbeta_C}{\partial \lambda}  = \frac{1}{n_C}
    \sum_{\ell\notin  C} w_{C\ell}  \  \signs(\bbeta_\ell -  \bbeta_C)
    \circ \frac{\left| \bbeta_\ell - \bbeta_C \right|^{q-1}}{\left\| \bbeta_\ell -  \bbeta_C\right\|_q^{q-1}},
  \end{equation}
where $|\cdot|$  and $\signs(\cdot)$ apply  element-wise and $\circ$
is the element-wise product.
Application of  Proposition 1 of \citeauthor{2007_AS_rosset}  to these
expressions  implies  that  the  path is  piecewise  linear  only  for
$q\in\set{1,\infty}$.   In other  words, there  must exist  a homotopy
algorithm  to  infer the  structure  between  the conditions  for  the
$\ell_1$ and  $\ell_\infty$-norms.  More  generally, we could  use any
norm $\Omega$  that builds on  $\ell_1$ and $\ell_\infty$ such  as the
OSCAR \citep{Oscar2008}.   Note, however,  that there is  no guarantee
that the number  of steps will be small in  the homotopy algorithm for
general   weights.   In   fact,  \citet{ICML2012Mairal_202}   exhibits
pathological cases for the LARS algorithm where the number of kinks in
the piecewise  linear path of  solutions grows exponentially  with the
number of  variables.  Such  cases can be  transposed to  the weighted
fusion penalty  with $\Omega(\cdot) = \|\cdot\|_1$,  which corresponds
to situations where there is a  large number of splits along the path.
To  overcome  this  restriction  and  guarantee  that  the  number  of
iterations required to fit the whole  path of solutions will be small,
we introduce in  the next section a family of  weights that ensures no
split  along the  path of  solutions for  the particular  case of  the
$\ell_1$-norm.


\section{Distance-decreasing weights guaranteeing no split}
\label{sec:weights}

In this section, we focus  on the $\ell_1$-norm and generalize Theorem
\ref{thm:norm-nosplit}  to a  larger  class of  weights  that we  call
distance-decreasing       weights,       defined      in       Theorem
\ref{thm:weight-nosplit}. Indeed, although  uniform weights ensure the
absence of  split, the recovered  tree structure is  often unbalanced.
Intuitively,  distance-decreasing  weights  should ensure  that  close
neighbors fuse  quickly.  Here, we  demonstrate that for  such weights
there   is    no   split.     Thus,   the   algorithm    proposed   by
\cite{hoefling2010path} for the generalized fused-Lasso is 
considerably simplified since there is no need to check for possible split
events, and  thus there  is no need  to solve  potentially numerically
unstable maximum flow problems.

\begin{remark}
  Note that  the absence of splits  does not ensure a  fast algorithm.
  Indeed, the initialization of  the generalized fused-Lasso algorithm
  is for most weights in $K^2$.  We exhibit in Section \ref{sec:optim}
  a subset of distance-decreasing  weights for which initialization is
  linear and for which we can guarantee good statistical properties in
  Section \ref{sec:consistency}.
\end{remark}

Another advantage of the $\ell_1$-norm  is that it brings separability
across the  $p$ features  in \eqref{eq:criterion_general}, that  is to
say,  that  the $p$-dimensional  problem  splits  into $p$  univariate
problems. To  recover a consensus  classification, we first  infer $p$
independent trees  (one per  dimension) and  then aggregate  those $p$
trees by considering  the same penalty value  $\lambda$. Thus, without
loss  of  generality, we  restrict  the  discussion to  the  following
$\ell_1$   univariate  problem   which  is   a  weighted   generalized
fused-Lasso problem:
\begin{equation}
  \label{eq:criterion_univariate_l1}
  \minimize_{\bbeta \in \Rset^K} \frac{1}{2} \sum_{k=1}^K n_k \left( \bar{y}_k -
    \beta_k \right)^2
  + \lambda \ \sum_{k,\ell:k\neq\ell} w_{k\ell} |\beta_k - \beta_\ell|.
\end{equation}

For this problem, we get the following result:
\begin{theorem}\label{thm:weight-nosplit}
  The  path of  solutions does  not  contain splits  when weights  are
  chosen such that
  \begin{equation*}
    w_{k\ell} = n_k n_\ell \ f(\left|\bar{y}_k - \bar{y}_\ell \right|),
  \end{equation*}
  where $f(\cdot)$ is a decreasing positive function.
\end{theorem}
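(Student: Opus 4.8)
The plan is to show that the \emph{no-split candidate path} --- the path obtained by only ever merging clusters (never splitting them) and moving each cluster $C$ at the velocity prescribed by \eqref{eq:beta_derivative} between merge events --- satisfies the KKT conditions \eqref{eq:KKT_general} at every $\lambda$. By strict convexity the solution is unique, so this candidate must coincide with the genuine path, which then has no splits by construction. Since for $\Omega=\|\cdot\|_1$ the path is piecewise linear, it suffices to verify feasibility of the subgradients cluster by cluster on each linear segment. I first record a monotonicity fact: the solution preserves the order of the data, $\bar y_k \le \bar y_\ell \Rightarrow \beta_k(\lambda)\le \beta_\ell(\lambda)$ (a standard exchange/velocity argument, using that two approaching clusters fuse rather than cross). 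Consequently every cluster $C$ is an interval of the $\bar y$-ordering, and the only admissible breakups are \emph{interval cuts} $C=P\sqcup Q$ with $P$ the lower and $Q$ the upper part.

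The central object is the net subgradient flow across such a cut. Summing \eqref{eq:KKT_general} over $k\in P$ (internal edges cancel by antisymmetry of $\tau$) and eliminating $\beta_C$ via Lemma~\ref{lem:KKT_sum_beta} yields, on each segment, the explicit expression
\begin{equation*}
  T_P(\lambda) = \sum_{k\in P,\,\ell\in Q} w_{k\ell}\,\tau_{k\ell} = -\frac{n_P(\bar y_C-\bar y_P)}{\lambda} + \Big(\frac{n_P}{n_C}G_C - G_P\Big),
\end{equation*}
where $G_P = \sum_{\ell\notin C} w_{P\ell}\,\signs(\beta_C-\beta_\ell)$ and $G_C$ is its analogue for the whole cluster. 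Because $|\tau_{k\ell}|\le 1$, feasibility (no split across this cut) is exactly $|T_P(\lambda)|\le W_P$ with $W_P=\sum_{k\in P,\ell\in Q} w_{k\ell}$, by the min-cut/Hoffman characterization underlying \cite{hoefling2010path}. The key structural feature is that, since $\bar y_P<\bar y_C$, the map $\lambda\mapsto T_P(\lambda)$ is strictly \emph{increasing} on each segment.

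The decreasing character of $f$ enters precisely through the sign of the limit $\frac{n_P}{n_C}G_C-G_P$. Writing $w_{P\ell}=n_\ell\sum_{k\in P}n_k f(|\bar y_k-\bar y_\ell|)$ and grouping by external cluster $\ell$, this limit becomes $\sum_\ell \Delta_\ell$ with $\Delta_\ell \propto n_\ell\,n_P\,n_Q\big(\bar f_Q(\ell)-\bar f_P(\ell)\big)$, where $\bar f_P(\ell)$ is the $n_k$-weighted average of $f(|\bar y_k-\bar y_\ell|)$ over $k\in P$. Order preservation forces every external $\ell$ below $C$ to satisfy $\bar y_\ell<\bar y_k$ for all $k\in C$, so the closer (hence larger-$f$) groups are those of $P$, giving $\bar f_P(\ell)>\bar f_Q(\ell)$ and $\Delta_\ell<0$; symmetrically $\Delta_\ell>0$ for $\ell$ above $C$. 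Since below-clusters contribute $+\Delta_\ell$ and above-clusters $-\Delta_\ell$ to the limit, every term is negative, whence $\frac{n_P}{n_C}G_C-G_P<0$ (it is $0$ when $f$ is constant, recovering Theorem~\ref{thm:norm-nosplit}). Combined with the monotonicity above, $T_P$ therefore stays strictly below $0<W_P$: the \emph{upper} capacity is never active, and only the lower bound $-W_P$ could ever bind.

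It remains to control the lower bound at the instant a cluster is created. When $A$ (lower) and $B$ (upper) fuse into $C$ at $\lambda_0$, the principal cut $A\mid B$ has all cross-edges at $\tau_{k\ell}=-1$, so $T_A(\lambda_0)=-W_A$, exactly at capacity; for a finer cut $P\subsetneq A$ the edges from $P$ to $B$ likewise enter at $-1$, giving $T_P^{(C)}(\lambda_0)=T_P^{(A)}(\lambda_0)-w_{PB}\ge -W_P^{(A)}-w_{PB}=-W_P^{(C)}$ by the feasibility $A$ already enjoyed (and symmetrically for cuts inside $B$). Thus at formation every interval cut satisfies $-W_P\le T_P\le W_P$, and as $\lambda$ grows $T_P$ increases toward its negative limit, \emph{away} from $-W_P$, so no cut re-saturates and no split occurs on the segment. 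Propagating this through the successive merge events by induction establishes the theorem. I expect the main difficulty to be exactly this inductive bookkeeping across events: verifying that when external clusters merge (changing $G_C$, $G_P$ and the partition into clusters ``above'' and ``below'' $C$) the negative-limit sign computation and the formation inequality remain valid, together with a fully rigorous proof of the order-preservation property on which the whole interval structure rests.
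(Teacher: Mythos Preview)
Your primal flow computation is correct and the sign analysis of the limit $\tfrac{n_P}{n_C}G_C-G_P$ is exactly where the distance-decreasing hypothesis bites, but the argument has a real gap. You invoke the ``min-cut/Hoffman characterization'' to reduce KKT feasibility of the candidate path to $|T_P|\le W_P$, yet Gale--Hoffman requires this for \emph{every} subset $S\subseteq C$, not only for lower intervals $P$. Your monotonicity argument needs $\bar y_P<\bar y_C$ and your sign argument needs $P$ to sit entirely on one side of $Q$; both fail for a middle singleton or a non-contiguous $S$, and a three-point example with generic capacities already shows the two interval cuts can be slack while $\{2\}\mid\{1,3\}$ is violated. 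So checking interval cuts alone does not certify the KKT of the no-split candidate. You also take order preservation of the actual solution as a preliminary ``standard'' fact, but the paper's Lemma~\ref{lem:nosplit_order} shows order preservation is \emph{equivalent} to the absence of splits, so this cannot be assumed upfront.

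The paper takes a genuinely different route that avoids both issues: it passes to the generalized-Lasso dual of \cite{2011_AS_Tibshirani}, where the subgradients are individual coordinates $\hat u_{ij}=\lambda\tau_{ij}$, one per pair. Using an explicit Moore--Penrose inverse it computes the slope of every intra-cluster $\hat u_{ij}$ and shows each is nonnegative when $f$ is decreasing, so no $\hat u_{ij}$ can ever reach $-\lambda$. This is a \emph{per-pair} statement, strictly stronger than your per-cut inequality, and it immediately gives feasibility across all cuts. If you want to rescue the primal approach you would need either to extend your monotonicity/sign argument to arbitrary subsets $S\subseteq C$, or to prove that for these particular demands and capacities the minimum cut is always an interval; neither is obvious, and neither is addressed in the sketch.
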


Schematically, the proof is based on two ingredients:
\begin{enumerate}
\item first, using geometrical arguments,  it is possible to show that
  absence of splits  is equivalent to preservation of  the order along
  the   path,   that  is   to   say,   $\bar{y}_k  \leq   \bar{y}_\ell
  \Leftrightarrow              \hat{\beta}_k(\lambda)             \leq
  \hat{\beta}_\ell(\lambda)$;
\item   second,   by  considering   a   problem   that   is  dual   to
  \eqref{eq:criterion_univariate_l1}  as  in \cite{2011_AS_Tibshirani}
  for the generalized Lasso,  we show that distance-decreasing weights
  preserve the order.
\end{enumerate}

The proof is detailed in Appendix \ref{sec:nosplit}.


\section{Fast homotopy algorithm for $\ell_1$ weighted penalties}
\label{sec:optim}

In this section,  we consider algorithmic issues when  $\Omega$ is the
$\ell_1$-norm.   As  in  Section~\ref{sec:weights},  we  restrict  the
discussion  to  univariate Problem  \eqref{eq:criterion_univariate_l1}
and  thus give  the numerical  complexity in  the case  $p=1$.  For  a
$p$-dimensional  problem, we  aggregate  the $p$  univariate trees  by
considering the same values of $\lambda$ for all trees.

\paragraph*{An algorithm for general weights and its limitations.}
Optimization problem  \eqref{eq:criterion_univariate_l1} can be solved
for general weights $w_{k\ell}$  by the homotopy algorithm proposed in
\cite{hoefling2010path} for the generalized fused-Lasso.  This is also
the    solution   retained    in   the    clustering    framework   by
\cite{HOCKING-clusterpath}.   A  schematic   view  of  this  algorithm
adapted   to   \eqref{eq:criterion_univariate_l1}   is   depicted   in
\ref{algo:homotopy}.

\begin{algorithm}[H]
  \DontPrintSemicolon
  \BlankLine
  \KwIn{data, weights and initial groups $\set{y_i,w_{k\ell},\kappa}$}
  \textbf{Initialization for $\lambda=0$}\;
  Initialize $\beta_k$ parameters (equal to the empirical means $\bar{y}_k$)\;
  Initialize the list of possible next events (only fusion at this stage) \;

  \While{all groups are not fused} {
    Find the next  event (having the smallest $\lambda$),  it can be a
    split or a fusion \;
    Update $\beta_k$ parameters accordingly \;
    Update the list of possible next events (fusion and split)
  }
  \KwOut{Directed acyclic graph (DAG) of fusion and split events and associated values of the
    parameters}

  \caption{Homotopy      algorithm       for      the      generalized
    fused-Lasso}\label{algo:homotopy}
\end{algorithm}

This procedure for  general weights has two major  flaws that may have
detrimental effects on its computational performance:
\begin{itemize}
\item By piecewise-linearity of the solution path, the total number of
  iterations  (that is,  the total  number  of events  before all  the
  groups   have   fused)   is    bounded.    However,   by   rewriting
  \eqref{eq:criterion_univariate_l1} as a Lasso  problem -- which only
  requires straightforward  algebra --  we may  construct pathological
  cases where there  are $(3^K + 1)/2$ linear segments  in the path of
  solutions  \citep[see][]{ICML2012Mairal_202}, a  complexity that  we
  cannot afford even for a moderate number of conditions $K$.
\item While  detecting fusion events in  Algorithm \ref{algo:homotopy}
  may  be cheap  since it  roughly  only requires  calculation of  the
  slopes  $\partial \beta_k(\lambda)/\partial  \lambda$, checking  for
  the possibility of split events  boils down to maximum-flow problems
  the resolution of  which at large scale may clearly  be a bottleneck
  \citep[see][]{hoefling2010path}.
\end{itemize}

To  circumvent  these  limitations,  we shall  consider  weights  that
prevent split  events.  Although the  choice $w_{k\ell} =  n_k n_\ell$
has been shown to prevent splits in Theorem \ref{thm:norm-nosplit}, it
will typically lead to fusion events occurring very late (that is, for
large $\lambda$),  even between  groups having close  empirical means.
This  corresponds   to  an  unbalanced  tree   structure  between  the
conditions, which is hardly interpretable.  On the contrary, using the
family   of  distance-decreasing   weights,   introduced  in   Section
\ref{sec:weights}, prevents split events and  leads to a balanced tree
structure.  In this case the total  number of events is exactly $K-1$,
which is the number of iterations  required to fuse $K$ groups into 1,
assuming that  there cannot  be a  fusion of more  than two  groups at
once. As  for the  maximum-flow problems,  they are  completely eluded
from the  algorithm with these  weights. Still,  we have to  take into
account the cost of detecting successive fusion events and of updating
the coefficients $\beta_k(\lambda)$ along the $K-1$ steps. In the next
paragraph,  we propose  a  solution inducing  a  global complexity  of
$\mathcal{O}(K\log(K))$ for a given choice of weights belonging to the
family of distance-decreasing weights.

\paragraph*{Weights with an $\mathcal{O}(K\log(K))$ implementation.}
First we  need to  define the  next time  a fusion  event is  going to
happen.   We  proceed mainly  as  in  \cite{hoefling2010path} for  the
one-dimensional  fused-Lasso  signal  approximator,  except  that  the
initial  ordering  is not  defined  by  the neighborhood  between  the
coefficients, but by the ordering  of the empirical means $\bar{y}_k$.
And thanks  to the property  of the distance-decreasing  weights, this
ordering remains the same throughout the algorithm, which allows us to
compute the path in $\mathcal{O}(K\log  K)$ operations.  Here are some
details.

At the initialization step, one has $\lambda_0=0$, and the next time a
fusion occurs is
\begin{equation}
  \label{eq:fusion_time}
  t(\lambda) = \argmin_{t_{k\ell}(\lambda) > \lambda_0} t_{k\ell} , \quad
  t_{k\ell}(\lambda) = \lambda_0 - (\beta_k(\lambda_0) - \beta_\ell(\lambda_0))
  \left(\frac{\partial \beta_k}{\partial \lambda}(\lambda_0)
    -   \frac{\partial  \beta_\ell}{\partial\lambda}(\lambda_0)   \right)^{-1}.
\end{equation}
In words, it  is the smallest value of $\lambda$  among all the values
such    that   two    coefficients    fuse.   The    main   cost    in
\eqref{eq:fusion_time}  is  due  to  the  calculation  of  the  slopes
$\partial  \beta_k/\partial \lambda$  at $\lambda_0  = 0$.   Note that
$\beta_k(0)  =  \bar{y}_k$, and  by  Lemma \ref{lem:KKT_sum_beta}  and
\eqref{eq:beta_derivative}, one has
\begin{equation}
  \label{eq:slopes}
  \frac{\partial \beta_k}{\partial \lambda}(0) = - \frac{1}{n_k}
  \sum_{\ell\neq k} w_{k\ell} \ \signs(\bar{y}_k - \bar{y}_\ell).
\end{equation}
For general  weights $w_{k\ell}$, computing  these slopes for  all $k$
requires $\mathcal{O}(K^2)$  operations and is the  limiting factor of
the  algorithm.   However,  we  provide a  $\mathcal{O}(K  \log  (K))$
procedure for a  special case of our  distance-decreasing weights that
we   call  ``exponentially   adaptive  weights''   because  of   their
statistical properties (see  Section \ref{sec:consistency}).  They are
defined by
\begin{equation}
  \label{eq:fa_weights}
  w_{k\ell}  =  n_k  n_\ell  \exp\{-\alpha  \sqrt{n}
  |\bar{y}_k - \bar{y}_\ell|\}, \quad \alpha > 0,
\end{equation}
for  $\alpha$   a  positive  constant.    The  key  idea   to  achieve
$\mathcal{O}(K\log(K))$  complexity with  these weights  is that  each
slope can be computed as the sum  of two terms, for which there exists
a  simple recurrence  formula: first,  we order  the $\bar{y}_{k}$  in
decreasing  order, which  can  be done  in  $\mathcal{O}(K \log  (K))$
operations. Assuming this is done, we obtain


\begin{align*}
  \frac{\partial \beta_k}{\partial \lambda}(0) & = -
  \sum_{\ell\neq k} n_\ell \ \signs(\bar{y}_k - \bar{y}_\ell) \ \exp\set{-\alpha \sqrt{n}|\bar{y}_k - \bar{y}_\ell|} \\
  &  =  \sum_{\ell  <  k  }  n_\ell  \exp\set{-\alpha \sqrt{n}(\bar{y}_\ell  -    \bar{y}_k)}
  - \sum_{\ell > k } n_\ell \exp\set{-\alpha \sqrt{n}(\bar{y}_k  -    \bar{y}_\ell)}\\
  & =   \exp\set{\alpha  \sqrt{n}\bar{y}_k} \underbrace{\sum_{\ell <  k }
    n_\ell \ \exp\set{-\alpha \sqrt{n}\bar{y}_\ell}}_{L_k}
  - \exp\set{-\alpha \sqrt{n}\bar{y}_k}\underbrace{\sum_{\ell > k } n_\ell  \ \exp\set{\alpha \sqrt{n}\bar{y}_\ell}}_{R_k}.
\end{align*}
The recurrence  formulae are $R_{k+1}  = R_{k} +  n_k \exp\set{-\alpha
  \bar{y}_k}$ and $L_{k-1} = L_k + n_k \exp\set{\alpha \bar{y}_k}$. By
this means,  the initial slopes  \eqref{eq:slopes} and thus  the first
fusion time can be computed in $\mathcal{O}(K\log(K))$.

Then, for each  of the $K-1$ steps  of the algorithm, we  only need to
update the  two slopes  and the two  coefficients which  are currently
fusing.   This   only  requires  a  constant   number  of  operations.
Concerning the  next fusion time,  however, the new minimum  among the
updated $t_{k\ell}(\lambda_0^+)$ is found in $\log(K)$ if stored in an
appropriate structure.  This way  we can reach $\mathcal{O}(K\log(K))$
for the global complexity.

As  a final  remark, note  that we  use the  same storage  solution --
namely  a  binary  tree  -- as  did  \cite{hoefling2010path}  for  the
one-dimensional fused-Lasso.   By this  means, we maintain  the memory
requirement at a low level that only grows linearly in $K$.

\paragraph*{An  embedded cross-validation  procedure.}  Providing  the
whole path  of solutions is clearly  interesting for interpretability,
since we force  it to be a  tree by means of  an appropriate weighting
scheme coupled with the $\ell_1$-norm  for fusion. Still, it is always
necessary to provide  a practical way to choose  the tuning parameter,
which corresponds in  the case at hand to choosing  the level at which
to cut the  tree.  This also gives a fixed  classification between the
initial conditions.

When   the  number   $K$  of   prior  groups   is  smaller   than  $n$
(\textit{e.g.},  in the  ANOVA settings),  a natural  cross-validation
(CV) error can be defined. Although CV is often incriminated for being
time-consuming,  it is  possible  in this  case to  rely  on the  tree
structure of the solution -- or DAG in the case where split is allowed
in the algorithm -- to enhance  the performance.  Indeed, we can first
build  a tree  using a  training set  (in which  all prior  groups are
present) and then  assess its performance by measuring  its ability to
predict the remaining individuals of the  test set for any given value
of $\lambda$. Here,   we perform  the CV on  a predefined grid  of $L$  values of
$\lambda$ because  the fusion  times will be  different for  every new
training set  and it would be  memory intensive to store  the CV-error
for every of those fusion time.

To be more  specific, we consider a  split of the data in  a train set
$\mathcal{D}$ and a test set  $\mathcal{T}$ such that each prior group
is represented  in the train  set.  Using $\mathcal{D}$, we  recover a
fused-ANOVA tree  and an  estimator $\hatbbetaD_{\kappa(i)}(\lambda)$.
The test error on $\mathcal{T}$ is
\begin{equation}
  \label{eq:cv_err}
  \mathrm{CV}_{\mathrm{err}}(\mathcal{D}, \mathcal{T}, \lambda) = \sum_{i \in {\mathcal{T}}} \left(y_i - \hatbbetaD_{\kappa(i)}(\lambda)\right)^2 .
\end{equation}
A naive  approach to computing  \eqref{eq:cv_err} is to  consider each
prior  group  at  a  time  on a  grid  of  $\lambda$.   Computing  the
prediction based on a \emph{given} fitted regularization path requires
$\mathcal{O}(\log(K))$  operations  to  search  through  the  tree  of
solutions.  This has to  be done for the $K$ prior  groups and for the
$L$  values  in the  grid  of  $\lambda$.   Hence, computing  the  sum
\eqref{eq:cv_err}  naively has  a total  complexity of  $\mathcal{O}(L
K\log(K))$ (which dominates  the complexity in $\mathcal{O}(K\log(K))$
of the fit itself!).

On  the  contrary,  our   embedded  cross-validation  procedure  takes
advantage  of  the tree  structure  of  the  fit in  the  computations
whenever possible.   Indeed, along  the branch of  a cluster  $C$, the
estimator  $\hatbbetaD_{\kappa(i)}(\lambda)$  is  a  piecewise  linear
function  of  $\lambda$  and  thus the  error~\eqref{eq:cv_err}  is  a
piecewise quadratic  function of $\lambda$.  The  coefficients of this
quadratic function are easily updated  when constructing the tree, and
the error  along this branch  is computed in $\mathcal{O}(1)$  for any
$\lambda$ rather  than $\mathcal{O}(|C|\log(K))$.  More  precisely the
error  in~\eqref{eq:cv_err} of  cluster $C$  decomposes thanks  to the
Huygens formula as
\begin{displaymath}
  \sum_{i   \in  \mathcal{T}   :  \kappa(i)\in   C  }   \left(y_i  -
    \hatbbetaD_{\kappa(i)}(\lambda)\right)^2 = 
  \sum_{i   \in  \mathcal{T}   :  \kappa(i)\in   C} (y_i - \bar{\by}_C^{\mathcal{T}} )^2 + n_C^{\mathcal{T}} \left(\bar{\by}_C^{\mathcal{T}} - \hatbbetaD_{\kappa(i)}(\lambda)\right)^2,
\end{displaymath}
where    $n_C^{\mathcal{T}}=\mathrm{card}(\set{i   \in    \mathcal{T}:
  \kappa(i)\in C})$  and $\bar{\by}_C^{\mathcal{T}}$ is  the empirical
mean of individuals of cluster $C$, \textit{i.e.},
\begin{displaymath}
  \bar{\by}_C^{\mathcal{T}}  =
   \frac{1}{n_C^{\mathcal{T}}} \sum_{i \in \mathcal{T} : \kappa(i)\in C} y_i.
\end{displaymath}

It is difficult  to assess exactly the gain brought  by using the tree
structure for computing  the CV error in general.   Indeed, it depends
on the  tree itself,  the length  of its branches,  its height  and so
on. Assuming a binary balanced tree of height $\log(K)$, with branches
of equal length  and an equally spaced grid of  $\lambda$, we can show
that the complexity is in $\mathcal{O}(LK/ \log (K))$.  If some groups
fused rapidly  (as with  the fused-ANOVA weights),  the gain  could be
even greater.  In practice  (see Figure~\ref{fig:timings}.c), we often
see  a  ten-fold difference  between  our  CV  procedure and  a  naive
implementation.

\paragraph*{Timings.}   We  implemented  both   the  general  and  the
without-split version of Algorithm \ref{algo:homotopy} in \texttt{C++}
embedded   in   an   \texttt{R}-package   called   \texttt{fusedanova}
distributed on \texttt{R}-forge. It contains  a wide family of weights
which  are not  mentioned in  this  paper due  to space  requirements.
Figure \ref{fig:timings}  illustrates the  rather good  performance of
our algorithm and implementation through three numerical experiments:
\begin{enumerate}[$a)$]
\item In the left panel, we illustrate the capability of our method to
  treat large  scale problems extremely  fast: we generate  a size-$n$
  vector $\by$  such that $y_i\sim\mathcal{N}(0,1)$ and  assume $n=K$,
  meaning  one  condition   per  group\footnote{With  this  simulation
    setting, there is  no underlying clustering since our  point is to
    compare run  times here.}. We vary  $n$ from $10^2$ to  $10^8$ and
  record the  corresponding timing  in seconds.   We apply  our method
  with the exponentially adaptive weights  and average over 10 trials.
  As can be  seen, we can reconstruct a tree  on $n=10^6$ observations
  in about 10 seconds.
\item The middle panel illustrates the gain in runtime due to the fact
  that we no longer have to check for splits in the homotopy algorithm
  using a maximum-flow  solver.  We generate data as  in the preceding
  experiment  but   with  $K$  conditions  each   containing  $n_k=20$
  replicates.   When $K=10^3$,  the  gain in  seconds  brought by  not
  checking for  the possibility  of splits  is of  almost 2  orders of
  magnitude.
\item The right  panel illustrates the performance of  our embedded CV
  procedure compared  to the naive  implementation.  We used  the same
  settings as in the previous experiment.
\end{enumerate}

\begin{figure}[htbp!]
  \centering
  \begin{tabular}{@{}l@{}c@{}c@{}c@{}}
    \rotatebox{90}{\hspace{1.75em}\footnotesize timings in seconds (log)} &
    \includegraphics[width=.32\textwidth]{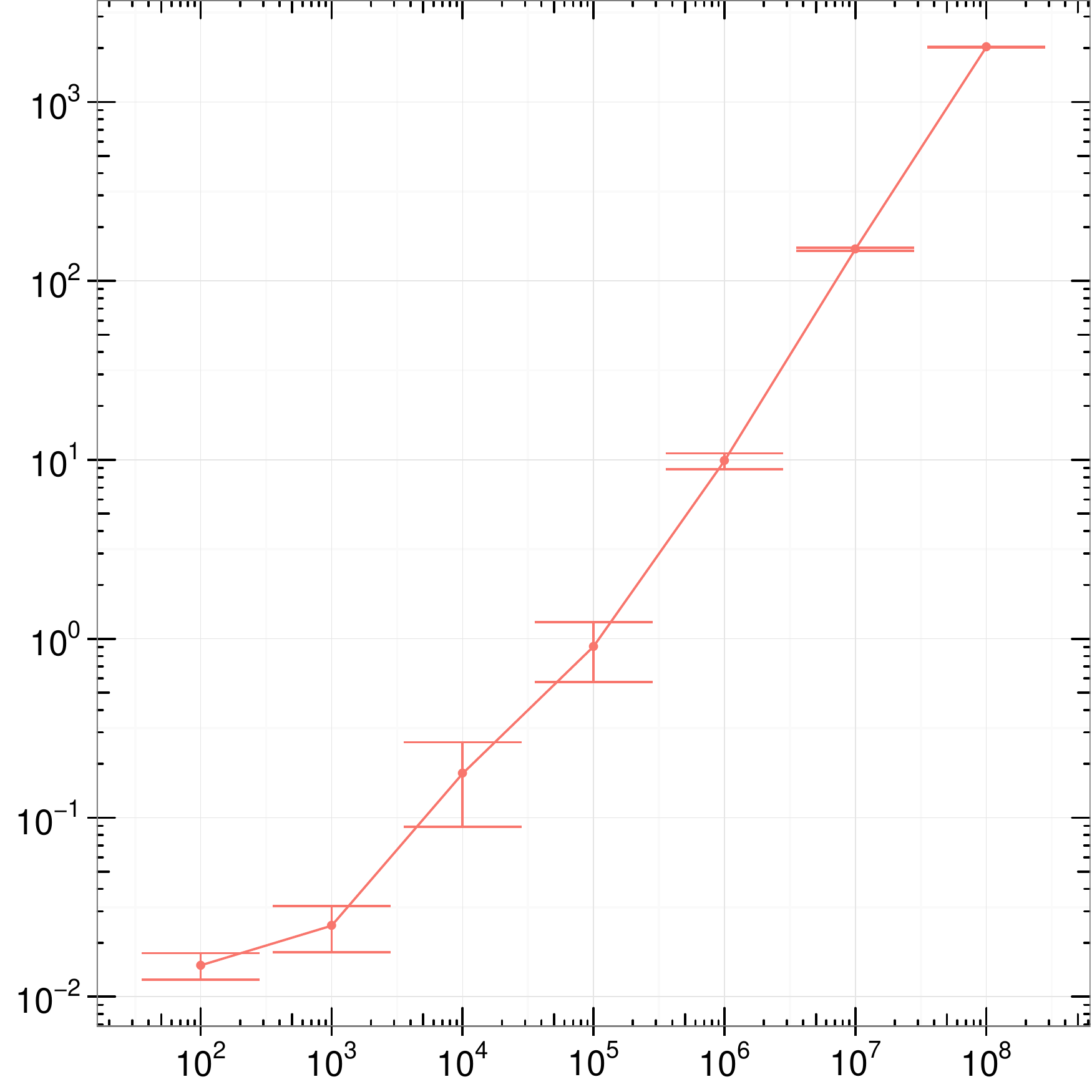} &
    \includegraphics[width=.32\textwidth]{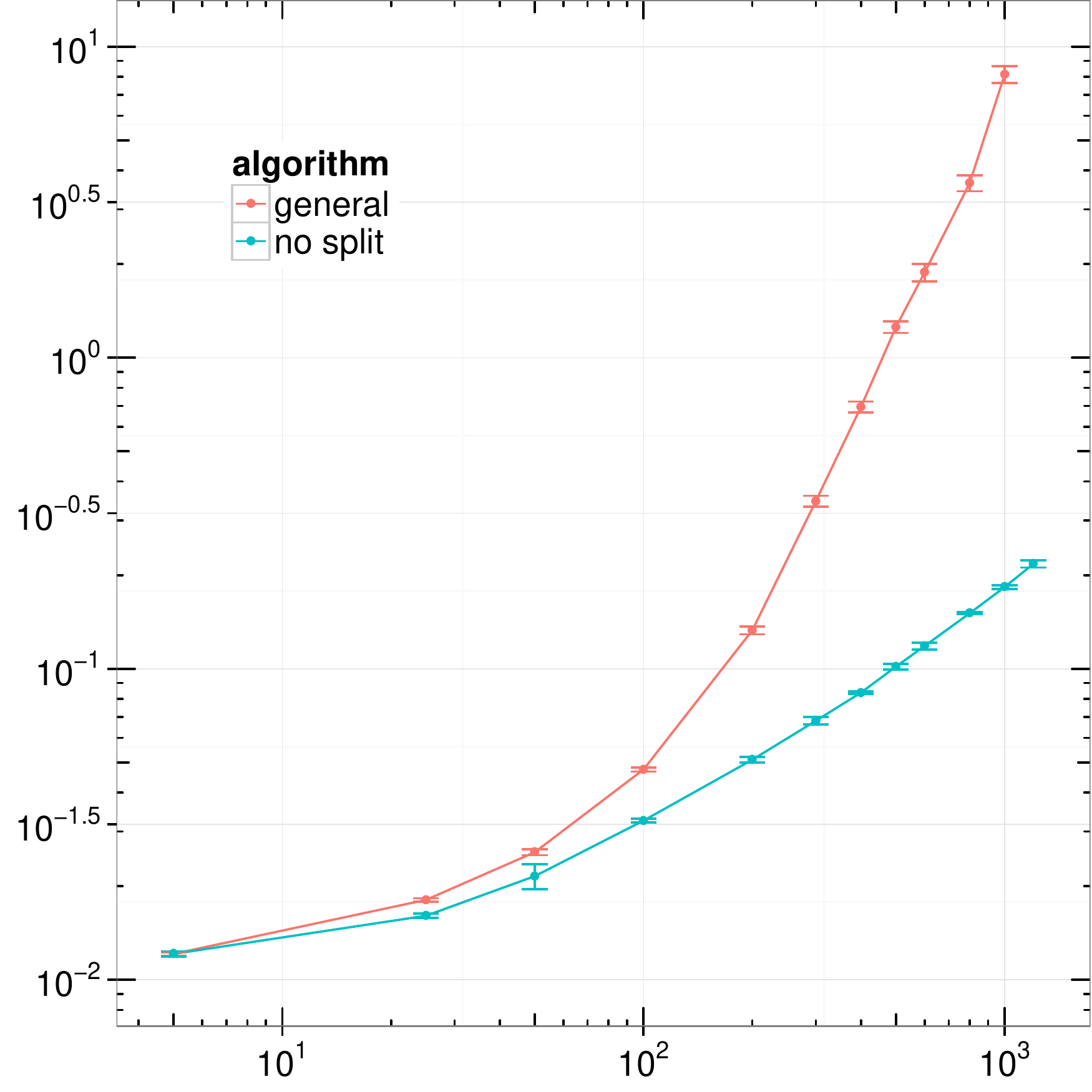} &
    \includegraphics[width=.32\textwidth]{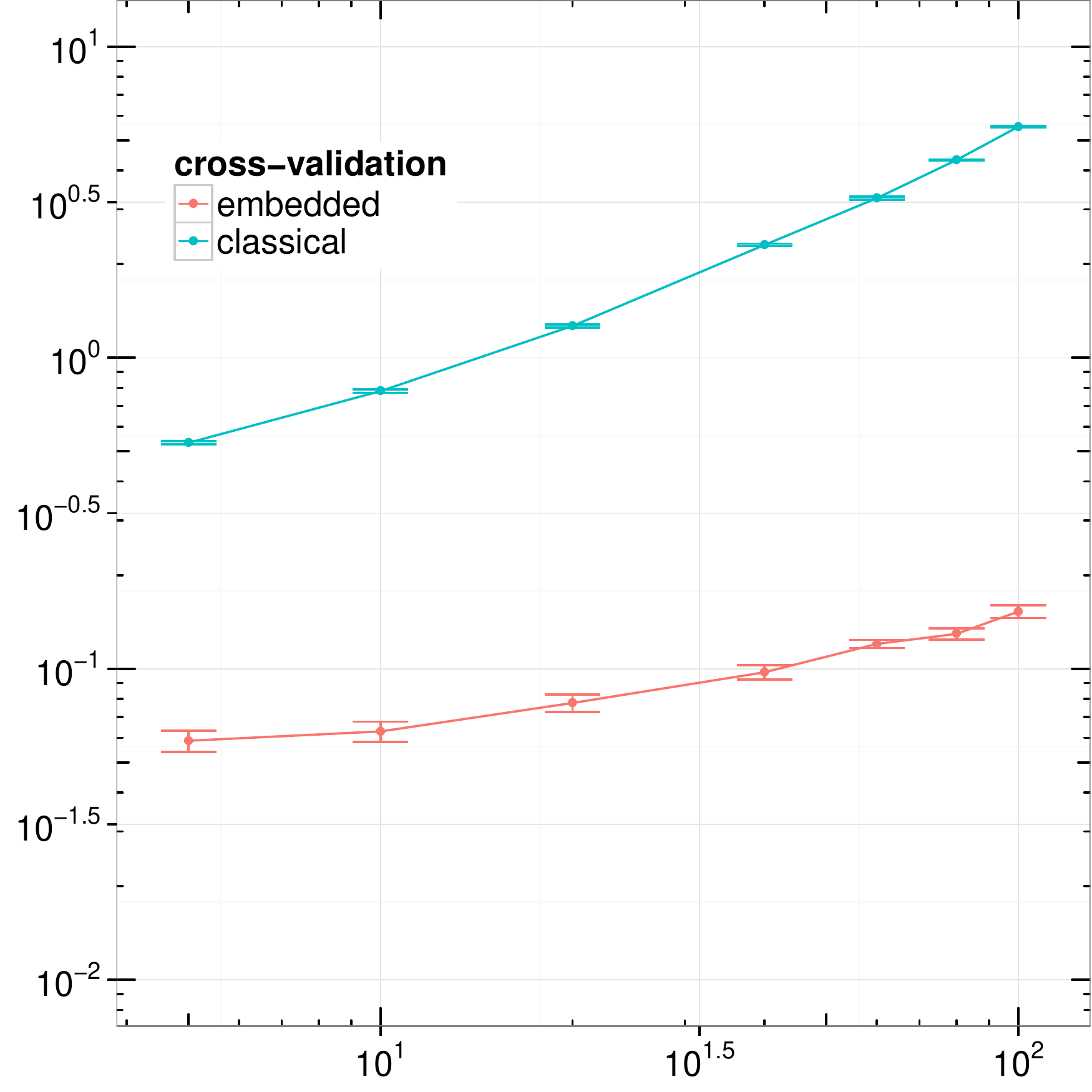} \\
    & \multicolumn{3}{c}{\footnotesize number of conditions $K$ (log)}
  \end{tabular}
  \caption{timing experiments: $a)$  time in seconds as  a function of
    the  number   of  conditions  $K$;  $b)$   timing  comparison  for
    general/without-split  algorithm; and  $c)$ timing  comparison for
    naive/embedded cross-validation.}
  \label{fig:timings}
\end{figure}

We       tried       other        implementations       to       solve
\eqref{eq:criterion_univariate_l1}  such  as the  \texttt{Clusterpath}
package  by \cite{HOCKING-clusterpath},  the \texttt{flsa}  package by
\cite{hoefling2010path}   or   the    \texttt{genlasso}   package   by
\cite{2011_AS_Tibshirani}. These implementations  do not fully exploit
the structure  of the  problem and  have runtimes  considerably longer
than ours, even for moderate $K$. Thus, we do not report their timings
here.


\section{Statistical guarantees}
\label{sec:consistency}

\paragraph*{Asymptotic settings.} To discuss the asymptotic properties
of our exponentially adaptive  weights \eqref{eq:fa_weights}, we shall
consider  the following  univariate\footnote{We numerically  study the
  multidimensional case at the end of this section.} ANOVA
model
 \begin{equation}
   \label{eq:model_fa}
   y_i = \beta_{\kappa(i)}^\star + \varepsilon_i, \quad \text{s.t.} \quad
   \E(\varepsilon_i)=0, \ \var(\varepsilon_i)=\sigma^2, \quad i=1,\dots,n,
 \end{equation}
 where $\bbeta^\star=(\beta_1^\star,\dots,\beta_K^\star)$  is the true
 vector  of parameters  and  $\varepsilon_i$ are  iid residuals.   The
 correct structure between the coefficients -- or classification -- in
 $\bbeta^\star$  is denoted  by  $\mathcal{A}^{\star}  = \set{(k,l)  :
   \beta_k^\star =  \beta_\ell^\star}$.  A usual  technical assumption
 is to consider designs the associated gram matrices of which converge
 to positive  definite matrices.   In the  one-way ANOVA  settings, we
 just need to assume that when  $n\to\infty$, then $n_k/n \to \rho_k <
 \infty$ for all $k=1,\dots,K$.  We  denote by $\bD$ the corresponding
 asymptotic  covariance matrix  which  is a  $K$-diagonal matrix  with
 diagonal entries equal to $\rho_1,\dots,\rho_K$.

 In  the univariate  case like  in \eqref{eq:model_fa},  the estimator
 associated  with   Problem  \eqref{eq:criterion_general}   using  the
 $\ell_1$-norm for fusion is
\begin{equation}
  \label{eq:fused_anova}
  \hatbbeta^{(n)} = \argmin_{\bbeta \in \Rset^K} \frac{1}{2} \sum_{k=1}^K
  n_k \left(\bar{y}_k - \beta_k \right)^2
  +  \lambda_n  \  \sum_{k\neq\ell}  w_{k\ell}  |  \beta_k  -\beta_\ell |,
\end{equation}
which is just  a rewriting of \eqref{eq:criterion_univariate_l1} where
the dependency  on $n$  of the estimator  and the tuning  parameter is
stated explicitly for the  purpose of asymptotic analysis.  Similarly,
we denote  by $\hat{\mathcal{A}}_n = \set{(k,\ell):\hat{\beta}_k^{(n)}
  = \hat{\beta}_\ell^{(n)}}$ the estimated group structure.

\paragraph*{Exponentially  adaptive weights  and the  fused-ANOVA.} In
this paragraph, we study the  exponentially adaptive weights, which we
recall here:
\begin{equation*}
  w_{k\ell}^{\text{FA}}  =  n_k  n_\ell  \exp\{-\alpha  \sqrt{n}
  |\bar{y}_k - \bar{y}_\ell|^{\gamma}\}, \quad \alpha, \gamma > 0.
\end{equation*}
We show  that they enjoy  some ``oracle  properties'' in the  sense of
\cite{2001_JASA_fan}, that  is, both  $i)$ right  model identification
(recovering  the true  classification  $\mathcal{A}^\star$) and  $ii)$
optimal estimation rate  $\sqrt{n}$.  In the context  of the penalized
ANOVA  problem  \eqref{eq:fused_anova},  we denote  these  weights  by
$w_{k\ell}^{\text{FA}}$   and  call   the  associated   estimator  the
\emph{fused-ANOVA}.    These   weights   are  adaptive   as   in   the
adaptive-Lasso  of  \cite{zou2006adaptive}:  it   is  known  that  raw
$\ell_1$ methods like the Lasso do not enjoy the aforementioned oracle
properties, yet this  can be fixed by choosing  judicious weights that
depend  on  an estimator  of  $\bbeta^\star$  which is  asymptotically
$\sqrt{n}$-consistent -- like the ordinary least squares, which equals
$(\bar{y}_1,  \dots, \bar{y}_K)$  in the  case at  hand.  Here  we are
interested in the differences between the entries of $\hatbbeta$; thus
the quantity $\sqrt{n}|\bar{y}_k -  \bar{y}_\ell|$ seems quite natural
in \eqref{eq:fa_weights}.

While studying  the asymptotic  of our estimator,  we came  across the
proposal of \cite{bondell2008simultaneous}  for adaptive weights: they
consider Problem \eqref{eq:fused_anova} with additional constraints on
the $\beta_k$'s -- that must sum to zero -- and the following weights,
which we refer to as the \emph{Cas-ANOVA} weights:
\begin{equation}
  \label{eq:ca_weights}
  w_{k\ell}^{\text{CA}} = \frac{\sqrt{n_k + n_\ell}}{|\bar{y}_k - \bar{y}_\ell|}.
\end{equation}
As  we shall  see,  though quite  interesting,  Cas-ANOVA weights  are
adaptive  on  a smaller  range  of  $\lambda_n$ than  are  fused-ANOVA
weights.   Moreover,  they lead  to  splits.   Thus, we  believe  that
fused-ANOVA is  computationally and  statistically more  efficient for
solving Problem \eqref{eq:fused_anova}.

We  now proceed  to the  Theorem  stating the  required conditions  on
$\lambda_n$ for the fused-ANOVA to enjoy the oracle properties.
\begin{theorem}[Oracle  properties]   \label{thm:oracle_prop}  Suppose
  that  $\lambda_n   n^{3/2}  \exp\set{-\alpha\sqrt{n}}  \to   0$  and
  $\lambda_n  n^{3/2}   \to  \infty$  when  $n\to\infty$.    Then  the
  fused-ANOVA   enjoys  asymptotic   normality  and   consistency  for
  recovering the true classification, \textit{i.e.},
  \begin{equation*}
    \sqrt{n}\left(\hat{\bbeta}^{(n)}   -   \bbeta^\star\right)\  \to_d
    \mathcal{N}(\bzr,
    \sigma^2 \bD^{-1})
    \quad \text{and} \quad
    \mathbb{P}(\hat{\mathcal{A}}_n   =    \mathcal{A}^\star)   \to   1
    \text{ when } n\to\infty.
  \end{equation*}
\end{theorem}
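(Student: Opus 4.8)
The plan is to follow the epi-convergence (``convexity'') argument of Knight and Fu, in the adaptive form used by \cite{zou2006adaptive}, but applied to the pairwise differences $\beta_k-\beta_\ell$, which here play the role that individual coefficients play in the adaptive Lasso. First I would recentre and rescale: writing $\mathbf{u}=\sqrt{n}(\bbeta-\bbeta^\star)$, i.e.\ $\beta_k=\beta^\star_k+u_k/\sqrt{n}$, I define $V_n(\mathbf{u})=\Phi_n(\bbeta^\star+\mathbf{u}/\sqrt{n})-\Phi_n(\bbeta^\star)$, where $\Phi_n$ is the objective of \eqref{eq:fused_anova}. Each $V_n$ is strictly convex with unique minimiser $\hat{\mathbf{u}}=\sqrt{n}(\hat{\bbeta}^{(n)}-\bbeta^\star)$, so it suffices to identify the in-distribution limit of $\arg\min V_n$ and to read off both oracle properties from it.

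Writing $\bar{y}_k-\beta^\star_k=\bar\varepsilon_k$ for the group-averaged noise, the quadratic part of $V_n$ expands (the constant cancelling) as $\tfrac12\sum_k(n_k/n)u_k^2-\sum_k(n_k/\sqrt{n})\bar\varepsilon_k u_k$. The central limit theorem applied to each $\sqrt{n_k}\,\bar\varepsilon_k$, together with $n_k/n\to\rho_k$, gives $\sum_k(n_k/\sqrt{n})\bar\varepsilon_k u_k\to_d\mathbf{W}^\top\mathbf{u}$ with $\mathbf{W}\sim\mathcal N(\bzr,\sigma^2\bD)$, and $\tfrac12\sum_k(n_k/n)u_k^2\to\tfrac12\mathbf{u}^\top\bD\mathbf{u}$, so the loss converges to $\tfrac12\mathbf{u}^\top\bD\mathbf{u}-\mathbf{W}^\top\mathbf{u}$. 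For the penalty I would split the pairs according to $\mathcal A^\star$. For a truly different pair ($\beta^\star_k\neq\beta^\star_\ell$), $\bar y_k-\bar y_\ell\to\beta^\star_k-\beta^\star_\ell\neq0$ almost surely, differentiability of $|\cdot|$ away from the origin linearises the increment to $\operatorname{sign}(\beta^\star_k-\beta^\star_\ell)(u_k-u_\ell)/\sqrt{n}$, and the prefactor $\lambda_n w_{k\ell}/\sqrt{n}\asymp\lambda_n n^{3/2}\exp\{-\alpha\sqrt{n}|\beta^\star_k-\beta^\star_\ell|^\gamma\}$ vanishes by the first hypothesis $\lambda_n n^{3/2}\exp\{-\alpha\sqrt{n}\}\to0$. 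For a truly equal pair ($\beta^\star_k=\beta^\star_\ell$) the increment is exactly $|u_k-u_\ell|/\sqrt{n}$; since $\sqrt{n}|\bar y_k-\bar y_\ell|^\gamma=O_p(1)$ the exponential stays bounded away from $0$, the weight is of order $n^2$, and the prefactor $\lambda_n w_{k\ell}/\sqrt{n}\asymp\lambda_n n^{3/2}\to\infty$ by the second hypothesis.

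Combining, the two hypotheses cleanly decouple the two behaviours, which is exactly the structure of a Fan--Li oracle property. On the estimable (truly different) differences the penalty is asymptotically negligible, so $\hat{\mathbf{u}}$ behaves like the minimiser of $\tfrac12\mathbf{u}^\top\bD\mathbf{u}-\mathbf{W}^\top\mathbf{u}$, namely $\bD^{-1}\mathbf{W}\sim\mathcal N(\bzr,\sigma^2\bD^{-1})$, yielding asymptotic normality; the propagation from finite-dimensional convergence in distribution of the convex $V_n$ to convergence of their argmins is justified by the convexity lemma of Geyer/Knight--Fu. On the null (truly equal) differences the diverging penalty forces the corresponding subgradient condition for non-fusion to be violated with probability tending to one, so I would invoke the KKT/stationarity conditions \eqref{eq:KKT_general} (equivalently Lemma \ref{lem:KKT_sum_beta}) to upgrade ``closeness'' to \emph{exact} fusion with probability $\to1$, exactly as in the second part of \citeauthor{zou2006adaptive}'s argument; boundedness $\hat{\mathbf{u}}=O_p(1)$ simultaneously rules out spurious fusion of groups separated by a $\Theta(1)$ gap. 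Absence of splits, already granted by Theorem \ref{thm:weight-nosplit} because the fused-ANOVA weights are distance-decreasing, guarantees that $\hat{\mathcal A}_n$ is a genuine nested structure and cannot un-fuse later, so $\mathbb{P}(\hat{\mathcal A}_n=\mathcal A^\star)\to1$.

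I expect the main obstacle to be the uniform stochastic control of the data-dependent weights $w_{k\ell}$, so that the two rate hypotheses do their intended work simultaneously: proving that the exponential factor genuinely annihilates the penalty on \emph{every} active difference (despite the $\lambda_n n^{3/2}$ growth and the randomness of $\bar y_k-\bar y_\ell$ around its nonzero limit) while remaining bounded below on every null difference so that the divergence is real. Equivalently, pinning down the two regimes of $\sqrt{n}\,|\bar y_k-\bar y_\ell|^\gamma$ via concentration/large-deviation bounds on the group averages, and checking that the exponent $\gamma$ and the separation $\min_{(k,\ell)\notin\mathcal A^\star}|\beta^\star_k-\beta^\star_\ell|$ enter the hypotheses precisely as stated, is the delicate accounting; the epi-convergence step and the normality conclusion are then routine.
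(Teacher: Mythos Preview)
Your proposal is correct and follows essentially the same route as the paper's own proof: recentre via $\mathbf{u}=\sqrt{n}(\bbeta-\bbeta^\star)$, expand $V_n=\Phi_n(\bbeta^\star+\mathbf{u}/\sqrt{n})-\Phi_n(\bbeta^\star)$, split the penalty over $\mathcal{A}^\star$ versus its complement, use $\sqrt{n}$-consistency of the group means to drive $\lambda_n w_{k\ell}/\sqrt{n}$ to $0$ or $\infty$ in the two regimes, conclude normality by epi-convergence \`a la Geyer/Knight--Fu, and obtain exact fusion by a KKT contradiction at the extremal unfused coefficient. One small remark: your closing appeal to Theorem~\ref{thm:weight-nosplit} is unnecessary---the paper works at a fixed sequence $\lambda_n$ and the no-split property, which concerns the path in $\lambda$, plays no role in the consistency argument.
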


The proof  is postponed to Appendix  \ref{sec:oracle_prop} and roughly
follows that of \citeauthor{zou2006adaptive}.   We have, however, some
comments related to this Theorem.

\begin{remark}[On the exponentially adaptive weights]
  The key idea behind this theorem  is that when $n$ goes to infinity,
  then $ w_{k\ell}^{\text{FA}}/\sqrt{n}$ goes to infinity if $(k,\ell)
  \in \mathcal{A}^\star$  and to zero exponentially  fast if $(k,\ell)
  \notin \mathcal{A}^\star$.  This  is due to the joint  effect of the
  $\sqrt{n}$-consistency  of the $\bar{y}_k$  and of  the exponential.
  This  is  to  be   compared  with  Cas-ANOVA  weights,  where,  when
  $n\to\infty$,  $w_{k\ell}^{\text{CA}}/\sqrt{n}$ goes to  infinity if
  $(k,\ell)  \in  \mathcal{A}^\star$,  but   only  to  a  constant  if
  $(k,\ell) \notin \mathcal{A}^\star$.
\end{remark}

\begin{remark}[On the range of $\lambda_n$]
  Theorem  \ref{thm:oracle_prop}   is  true  for  a   large  range  of
  $\lambda_n$  values.   In  particular  it is  true  for  a  constant
  $\lambda_n$.  Asymptotically all groups  belonging to the same class
  fuse  almost   immediately  (\textit{i.e.},  for  small   values  of
  $\lambda$ of the  order $ n^{3/2} \exp  \set{-\alpha \sqrt{n}}$) and
  the  groups  belonging to  different  classes  fuse for  very  large
  $\lambda$, \textit{i.e.}, of the order $n^{3/2}$.
\end{remark}

\paragraph*{Numerical  illustration  in   the  univariate  case.}   We
generate data from  model \eqref{eq:model_fa} as follows,  for $K$ the
number  of prior  groups  and  $n$ being  fixed:  the  true vector  of
parameters  $\bbeta^\star$  is  composed  of  $K$  entries  picked  up
randomly  among   $\set{1,2,3}$,  such  that  the   correct  structure
$\mathcal{A}^\star$ is always composed of 3 groups.  Then, the initial
group  sizes   $n_k$  are   drawn  from  a   multinomial  distribution
$\mathcal{M}(n,  (p_1,  \dots,  p_K))$  with   $p_k  =  1/K$  for  all
$k=1,\dots,K$,  such  that  the   $n_k$  are  approximately  balanced.
Finally,  we let  $\varepsilon_i\sim\mathcal{N}(0,1)$ to  generate the
vector of data $\by = (y_1,\dots,y_n)$.

We compare  the capability of  three weighting schemes to  recover the
true grouping $\mathcal{A}^\star$, namely the fused-ANOVA weights, the
Cas-ANOVA   weights,   and   the  so-called   \emph{default   weights}
corresponding to $w_{k\ell} = n_k  n_\ell$, which are not adaptive but
produce  a path  of solutions  that contains  no split.   Such weights
correspond to the Clusterpath weights  adapted to the ANOVA setup.  We
use our own code for each method.  Typically, the computational burden
required by Cas-ANOVA is huge, compared to the other two procedures as
the  path  of  solutions   may  contain  splits.   Qualitatively,  the
difference  would be  as  in Figure  \ref{fig:timings}, middle  panel.
Thus, we  typically force the  algorithm not  to split when  using the
Cas-ANOVA weights.

We generate data  as specified below, and for each  procedure we check
whether  there exists  at least  one $\lambda$  for which  the correct
structure is identified along the  path of solutions.  The probability
of true support recovery is evaluated by replicating this experiment a
large number  of times ($8096$ times\footnote{this  number arises from
  the  manifold  computer  cores  available.}).   To  investigate  the
asymptotic behavior of  each method, we vary $n$ from  50 to 1,000 and
consider two  scenarios for the  initial number of groups  $K$. First,
$K$ is fixed  at $10$ such that  the number of elements  in each group
grows with $n$. In the second scenario, $K$ grows with $n$ through the
relationship  $K=2.5  \cdot \log(n)$.   The  results  are reported  on
Figure~\ref{fig:consistency},  with  the  first  (resp.   the  second)
scenario on  the left  (resp.  the right)  panel. The  results confirm
Theorem   \ref{thm:oracle_prop}.    The   two   adaptive   procedures,
Cas-ANOVA,  and  to  a   greater  extent,  fused-ANOVA,  dominate  the
non-adaptive weights.  As expected from Section \ref{sec:consistency},
fused-ANOVA  always  dominate  Cas-ANOVA,   as  experienced  in  other
scenarios (\textit{e.g.},  $K=C \cdot \sqrt{n}$) not  reported here to
save space.

\begin{figure}[htbp!]
  \centering
  \begin{tabular}{@{}l@{\hspace{.75em}}ccr}
    & \small $a) \ K = \text{cst.}$ & \small $b) \ K = C\log(n)$ & \\
    \rotatebox{90}{\hspace{4.5em}\small
      $\hat{\prob}(\hat{\mathcal{A}}_n= \mathcal{A}^\star)$} &
    \includegraphics[width=.33\textwidth]{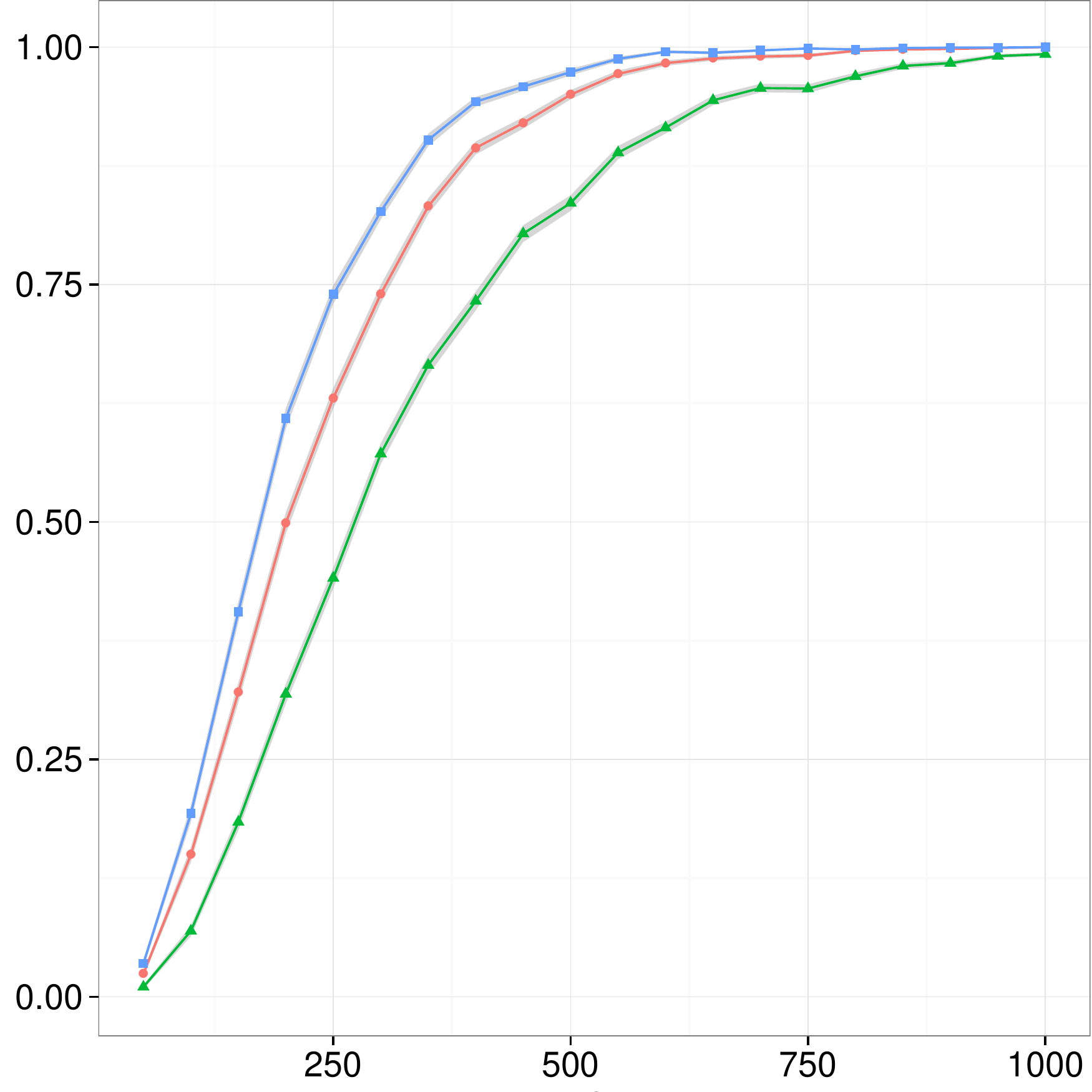} &
    \includegraphics[width=.33\textwidth]{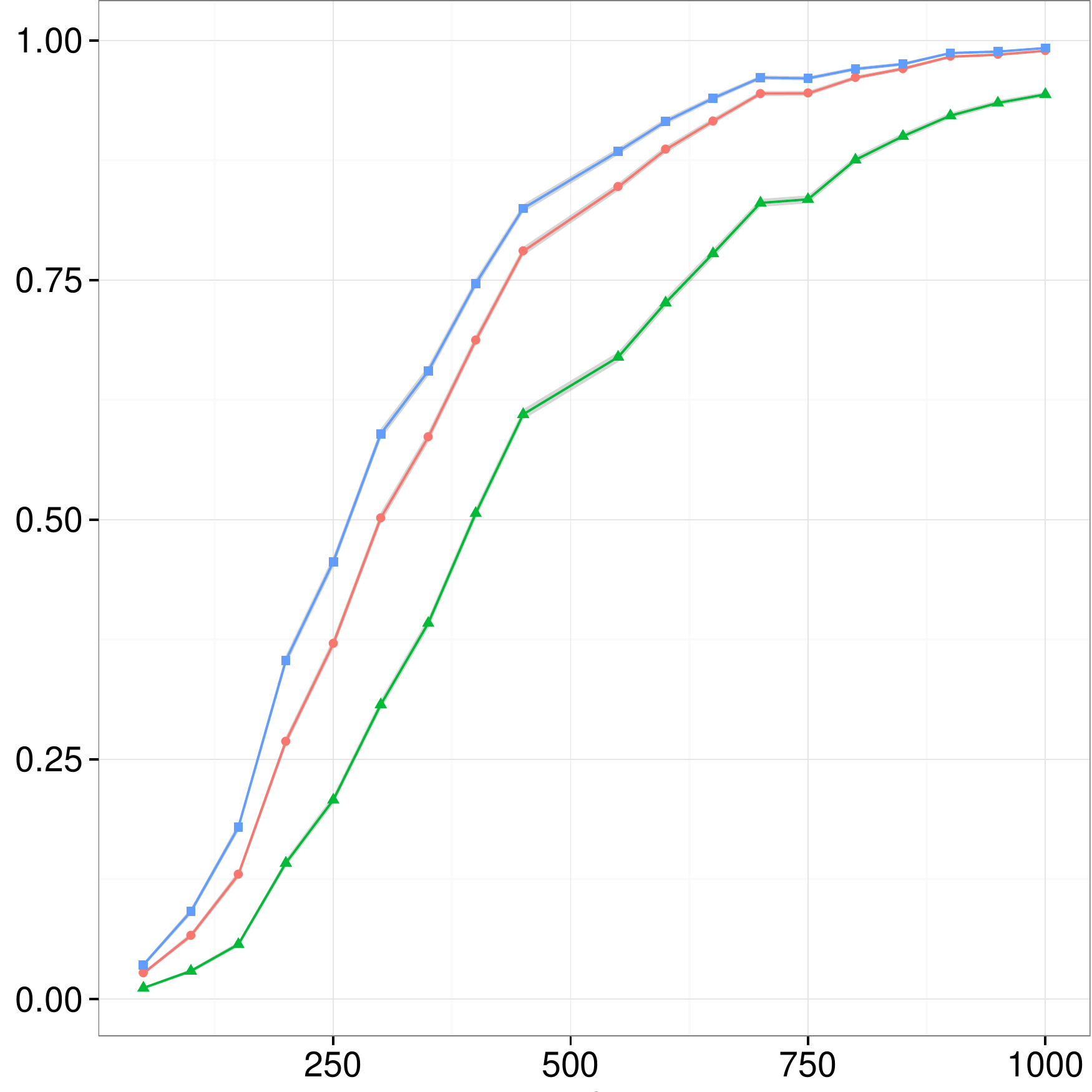} &
    \includegraphics[width=.17\textwidth,clip=true, trim=0 -110pt 0 0]{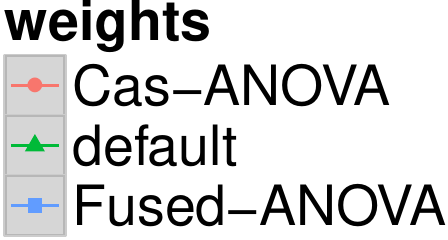} \\
    & \multicolumn{2}{c}{sample size $n$}
  \end{tabular}

  \caption{Univariate case: estimated probability of consistency as a function of the
    sample size  $n$, for  various weights and  in two  scenarios: the
    number of  initial groups $K$ is  either $a)$ fixed  to a constant
    (10)  or $b)$  increases in  $C  \log(n)$ with  $C=2.5$. The  true
    number of groups in $\mathcal{A}^\star$ is 3.}
  \label{fig:consistency}
\end{figure}

\paragraph*{Numerical   illustration    in   the    bivariate   case.}
Theorem~\ref{thm:oracle_prop}  characterizes  the  asymptotic  of  the
fused-ANOVA  estimators  when considering  one  dimension  at a  time.
Concerning the multidimensional setting,  there are two situations. On
the first hand, there exists a dimension such that all the true groups
are   different,   \emph{i.e}   $\beta_{kj}^\star   \neq   \beta_{\ell
  j}^\star$. In  this case,  our theorem  guarantees that,  using this
particular dimension,  the recovered  classification will  converge to
the true one. On the second  hand, there exists no dimension such that
the  true  groups  are  all  different.  In  that  case,  we  have  no
theoretical  guarantee  to support  the  fused-ANOVA  weights.  It  is
nonetheless possible to aggregate  the classification obtained in each
dimension to  a consensus classification.  For a given  $\lambda$, two
individuals $k$ and $\ell$ are in the same multidimensional cluster if
they have been fused on every dimension.

In order  to evaluate empirically  the performance of  the aggregation
step, we consider a two  dimensional classification problem with three
classes and two  scenarii.  Each \emph{prior} group is  drawn from one
of  three classes.   In the  first  scenario, the  three classes  have
different means on the first dimension and the same mean on the second
dimension. The mean vectors are $(1,1.5); (2,1.5); (3,1.5)$, as in top
left panel of Figure~\ref{fig:consistency2D}.  In the second scenario,
both dimensions are informative: the first dimension separates classes
$\set{1,2}$  from  $\set{3}$  while  the  second  dimension  separates
classes  $\set{1,3}$ from  $\set{2}$.   The mean  vectors are  $(1,1);
(1,2);     (2,1)$,      as     in      top     right      panel     of
Figure~\ref{fig:consistency2D}).  We  increase the difficulty  in each
scenario by adding a Gaussian noise with increasing standard deviation
$\sigma$. Results in Figure~\ref{fig:consistency2D} corresponds to the
estimated probability of true  classification recovery along the path,
averaged over 2,000 runs.
\begin{figure}[htbp!]
  \centering
  \begin{tabular}{@{}l@{\hspace{.75em}}ccr}
    & & & \\
    \rotatebox{90}{\hspace{2.5em}\small$\bar{\by}_{k,1}$} & \includegraphics[width=.33\textwidth]{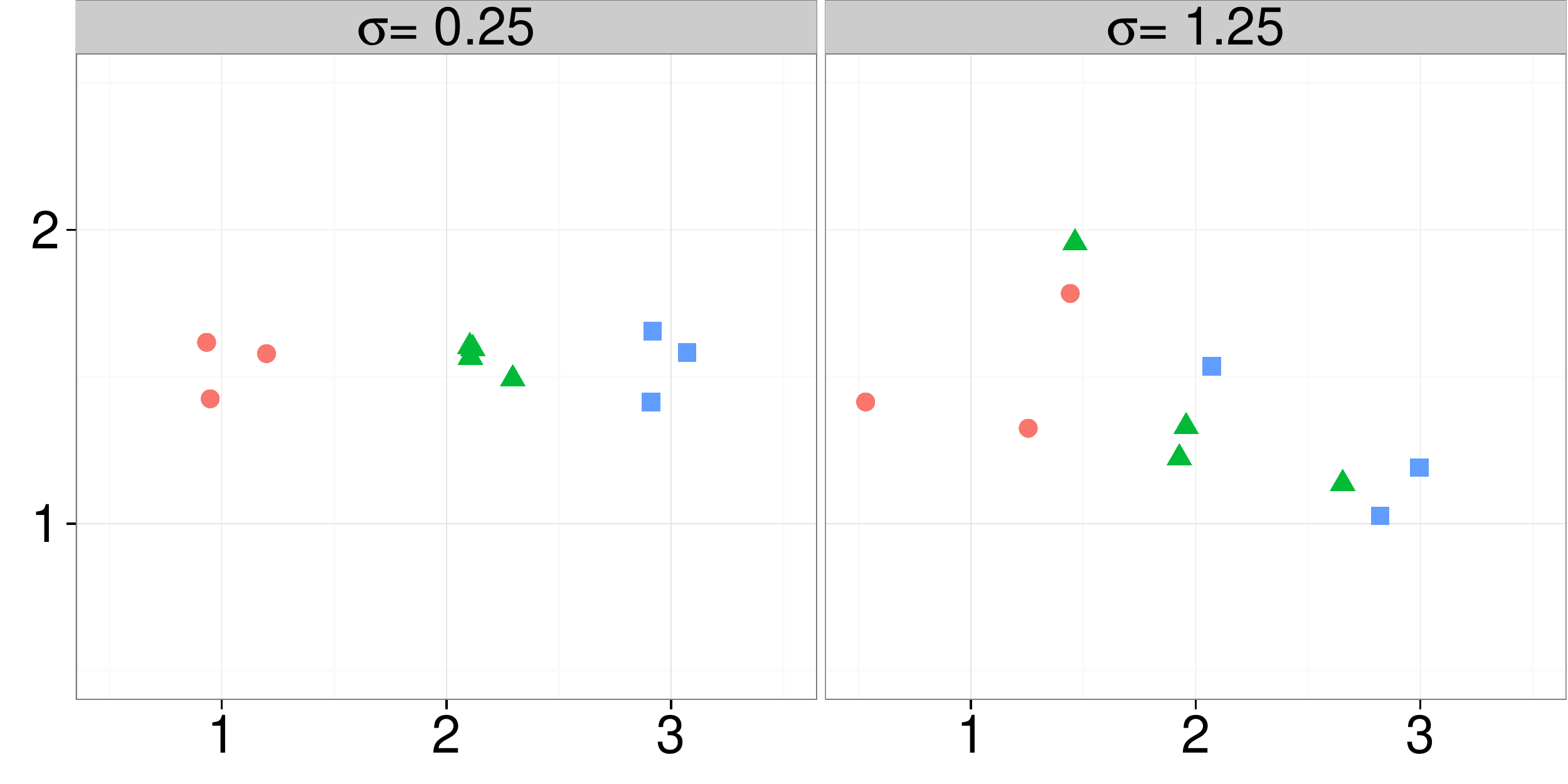} &
    \includegraphics[width=.33\textwidth]{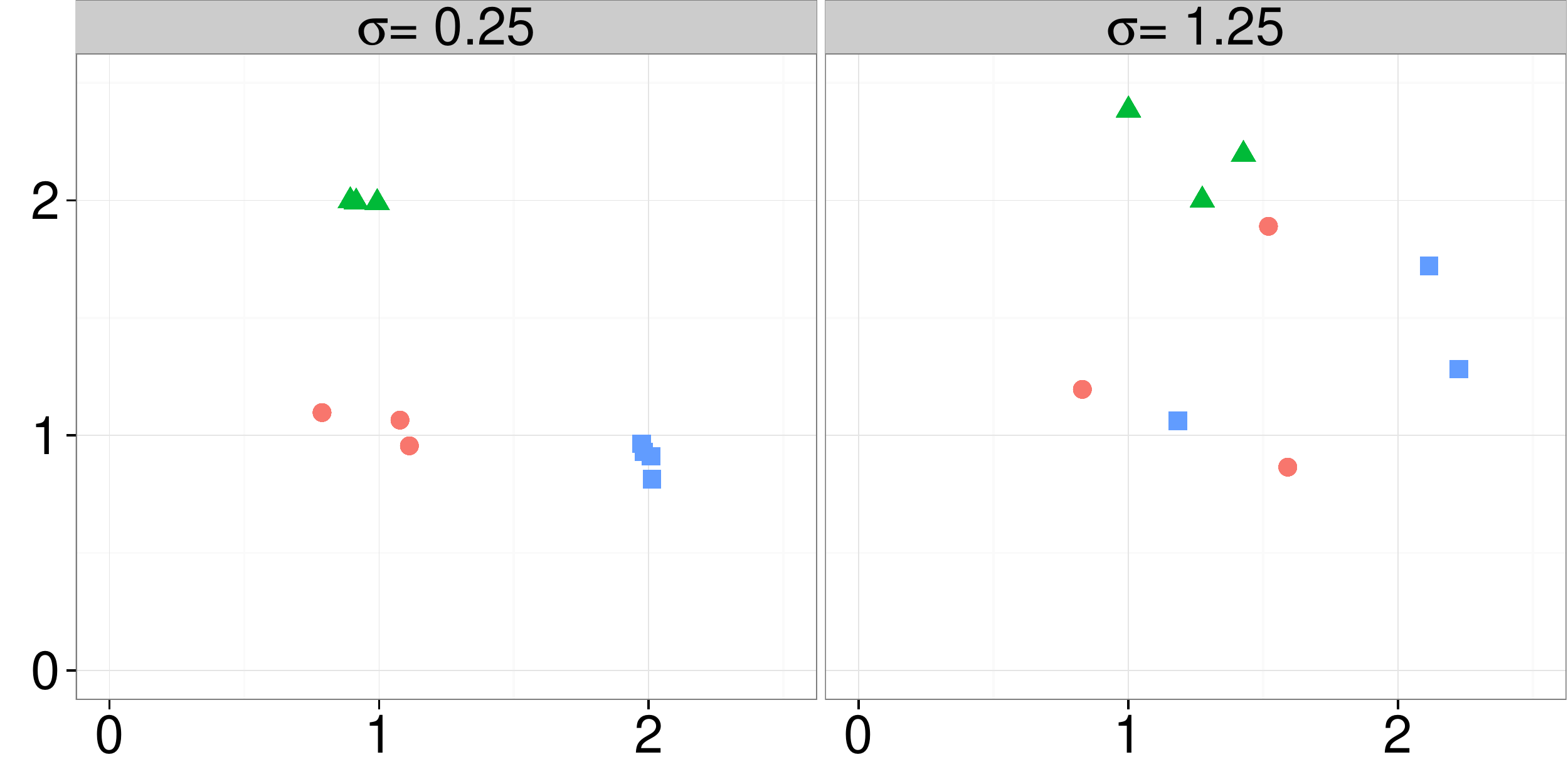} & \\
    & \small$\bar{\by}_{k,2}$ & \small$\bar{\by}_{k,2}$ & \\[2ex]

    \rotatebox{90}{\hspace{4.5em}\small$\hat{\prob}(\hat{\mathcal{A}}_n= \mathcal{A}^\star)$} &
    \includegraphics[width=.33\textwidth]{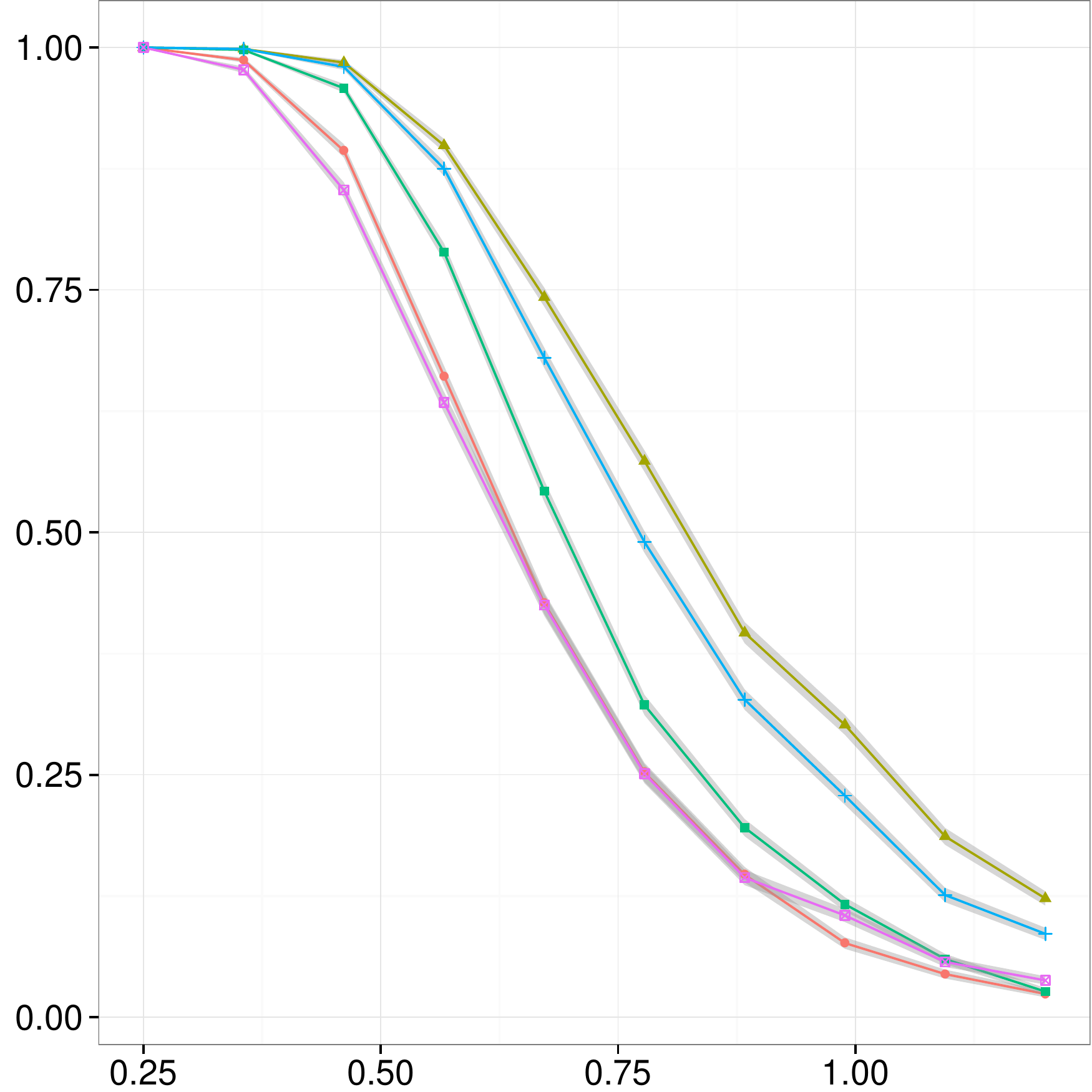}
    &
    \includegraphics[width=.33\textwidth]{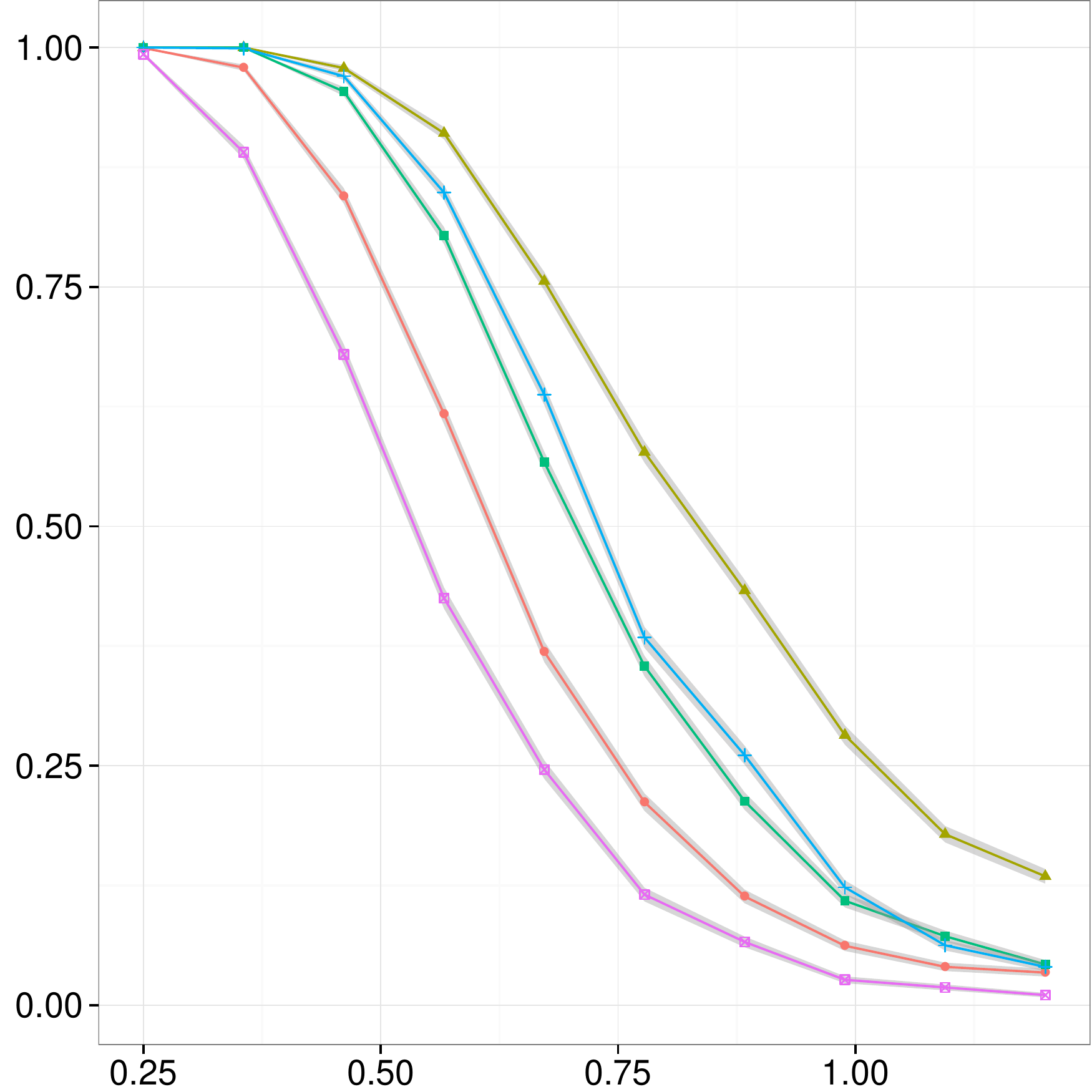} &
    \includegraphics[width=.17\textwidth,clip=true,  trim=0  -40pt  0
    0]{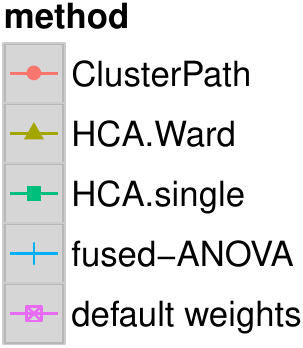}  \\
    & \multicolumn{2}{c}{standard deviation $\sigma$} & \\

   \end{tabular}

  \caption{Bivariate example: estimated  probability of consistency as
    a function of  the noise standard deviation  $\sigma$, for various
    clustering methods. The  initial number  of groups $K$ fixed  to a
    constant (10).   The true number of  groups in $\mathcal{A}^\star$
    is 3.}
  \label{fig:consistency2D}
\end{figure}

In both scenarii, the  fused-ANOVA weights with aggregation outperform
the  multidimensional  $\ell_2$-Clusterpath  as  well  as  the  single
linkage  hierarchical clustering.   The  Ward hierarchical  clustering
shows better performance but at a much higher computational cost.

In this  simple multidimensional numerical study,  we always aggregate
the classification  over the dimensions. However,  this aggregation is
not  necessarily  better  than performing  classification  on  single,
well-chosen feature. We illustrate this point in the following section
on phylogenetic data.


\section{A complete example in phylogeny}
\label{sec:application}

Evolutionary trees  -- sometimes referred  to as ``trees of  life'' --
are  one  of  the  most  emblematic  hierarchical  representations  in
computational biology.   They are  typically used in  phylogenetics to
compare biological  species based on their  similarities regarding one
or several  features.  These  features could  be phenotypic  traits or
genetic  characteristics.    In  these  tree  structures,   each  node
corresponds to a  taxonomic unit, the root node being  the most recent
common ancestor  to all  leaves on the  tree.  All  other intermediate
nodes  between  root  and  leaves represent  the  taxonomic  knowledge
between   the   species   of   interest.   The   study   depicted   in
\cite{phylo_data}   enters  this   framework   by  more   specifically
considering features  associated with  bacterial genomes  to determine
the  phylogenetic  relationships  between  the  taxa.   The  data  set
consists  of  various  genetic   features  associated  with  $n=1,690$
complete  bacterial  genomes  classified in  $K=903$  known  bacterial
species.

We  apply our  method on  this data  set to  assess its  capability of
capturing  the true  underlying  taxonomic structure.   To  do so,  we
consider the  following genetic  features to construct  the hierarchy:
the number of known genes, the number of known proteins and the genome
size (measured  by the  number of  bases in  millions).  We  apply the
univariate model \eqref{eq:criterion_univariate_l1} on each feature to
reconstruct a tree structure.  The indexing function $\kappa$ is built
from  the lowest  level of  classification available  that splits  the
genomes into  $K=903$ bacterial species.   We use the  default weights
and the fused-ANOVA weights \eqref{eq:fa_weights} with $\alpha$ chosen
specifically for each feature (see below).  We also apply hierarchical
clustering  using  Ward's  criterion   and  starting  from  the  known
classification  in  bacterial  species.   Hierarchical  clustering  is
applied  individually on  each feature,  as well  as across  the three
features  using  the  Euclidean   distance  to  build  the  similarity
matrix. To assess the relevance of the inferred trees, we compare them
with various  levels of the  known taxonomic classification  above the
species level, namely  genus (470 groups), family  (216 groups), order
(100 groups), class (46 groups) and  phylum (27 groups).  To this end,
we  compute  the  best  adjusted  rand-index  between  the  respective
reference classifications and the  classifications obtained by cutting
an inferred tree  at all the possible levels of  the hierarchy.  As an
example, we  report in Figure~\ref{fig:phylogenetic_tree} a  subset of
the tree inferred  by the fusion penalty with  fused-ANOVA weights and
the  cutting level  that leads  to the  best performance  in terms  of
adequacy with the true phylum taxonomy.
\begin{figure}[htbp!]
  \centering
  \begin{center}
    \begin{minipage}[c]{.79\linewidth}
      \includegraphics[angle=-90]{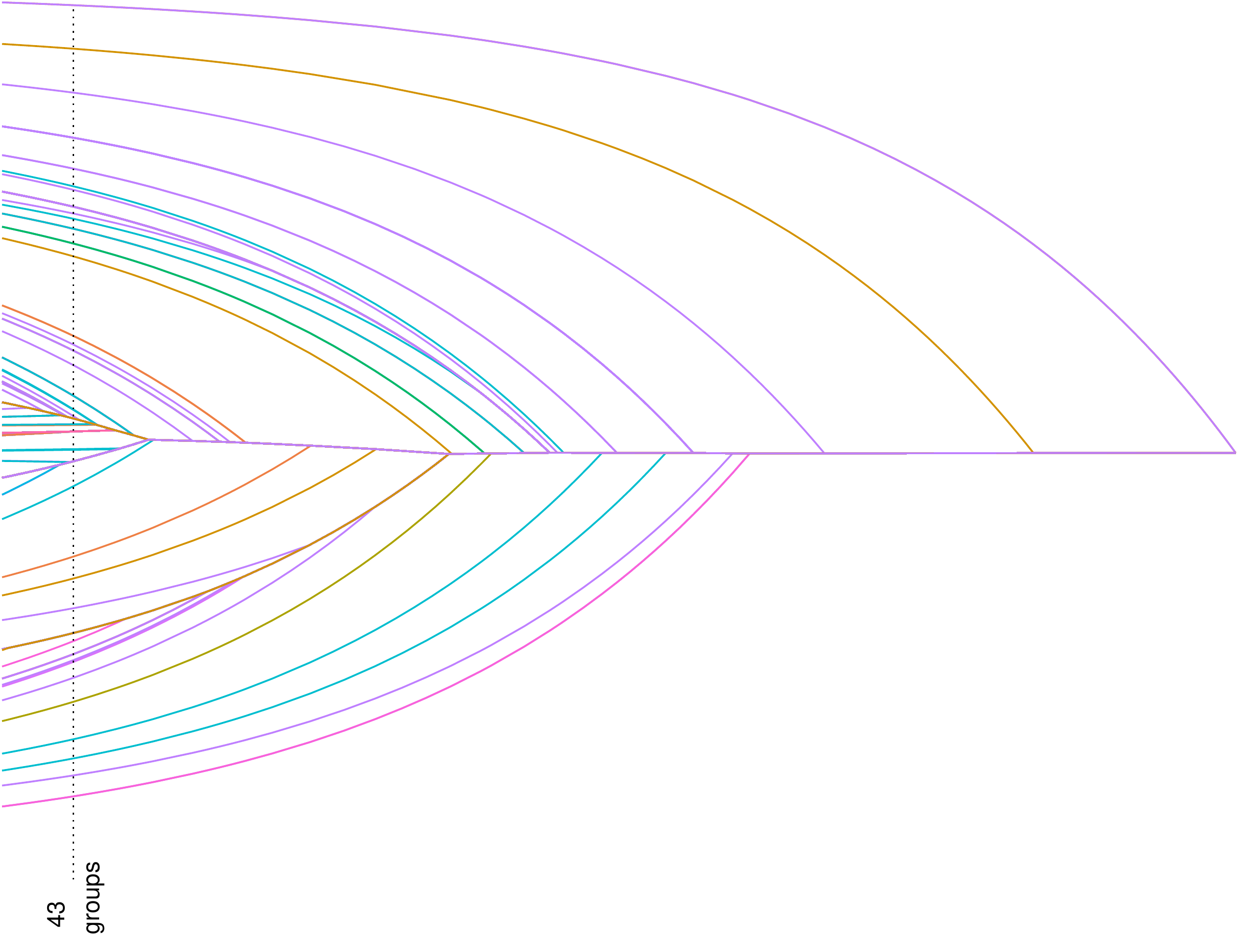} 
    \end{minipage}
    \hspace{-2cm}
    \begin{minipage}[c]{.2\linewidth}
      \includegraphics{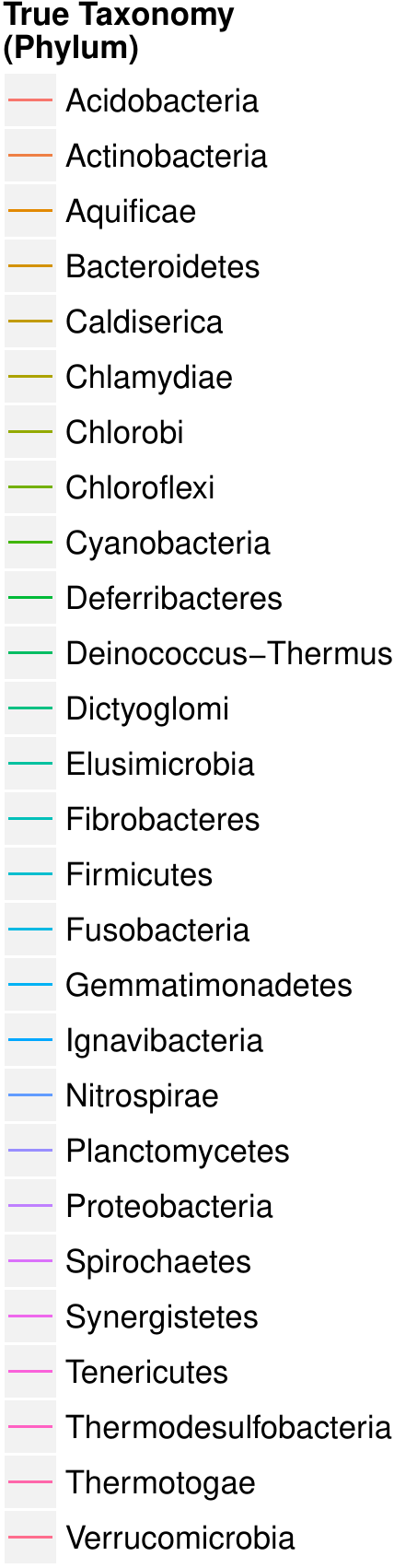}    
    \end{minipage}
  \end{center}
  
  \caption{Tree  reconstructed with  fused-ANOVA from  the ``Size.Mb''
    trait (Adjusted Rand-Index=0.71). Projected colors correspond to
    the true taxonomy (phylum).}
  \label{fig:phylogenetic_tree}
\end{figure}

More quantitative  results are reported  in Figure~\ref{fig:taxonomy},
with the  adjusted rand indexes  for the taxonomic  classifications in
terms of phylum,  order and family, using either the  number of genes,
the number of proteins or the  genome size as the feature variable for
classification.   We  also  represent  the  consensus/multidimensional
classifications either  obtained by  aggregating the  three univariate
fused-ANOVA trees  or by considering  the three features  together for
Ward hierarchical clustering.  Note  that for the fused-ANOVA weights,
we  apply our  method on  a grid  of $\alpha$  and report  the results
obtained for the best $\alpha$ in terms of adjusted rand-index.
\begin{figure}[htbp!]
  \centering
  \begin{tabular}{@{}lc@{}}
    \rotatebox{90}{\hspace{10em}\small Adjusted Rand-Index} &
    \includegraphics[width=.9\textwidth]{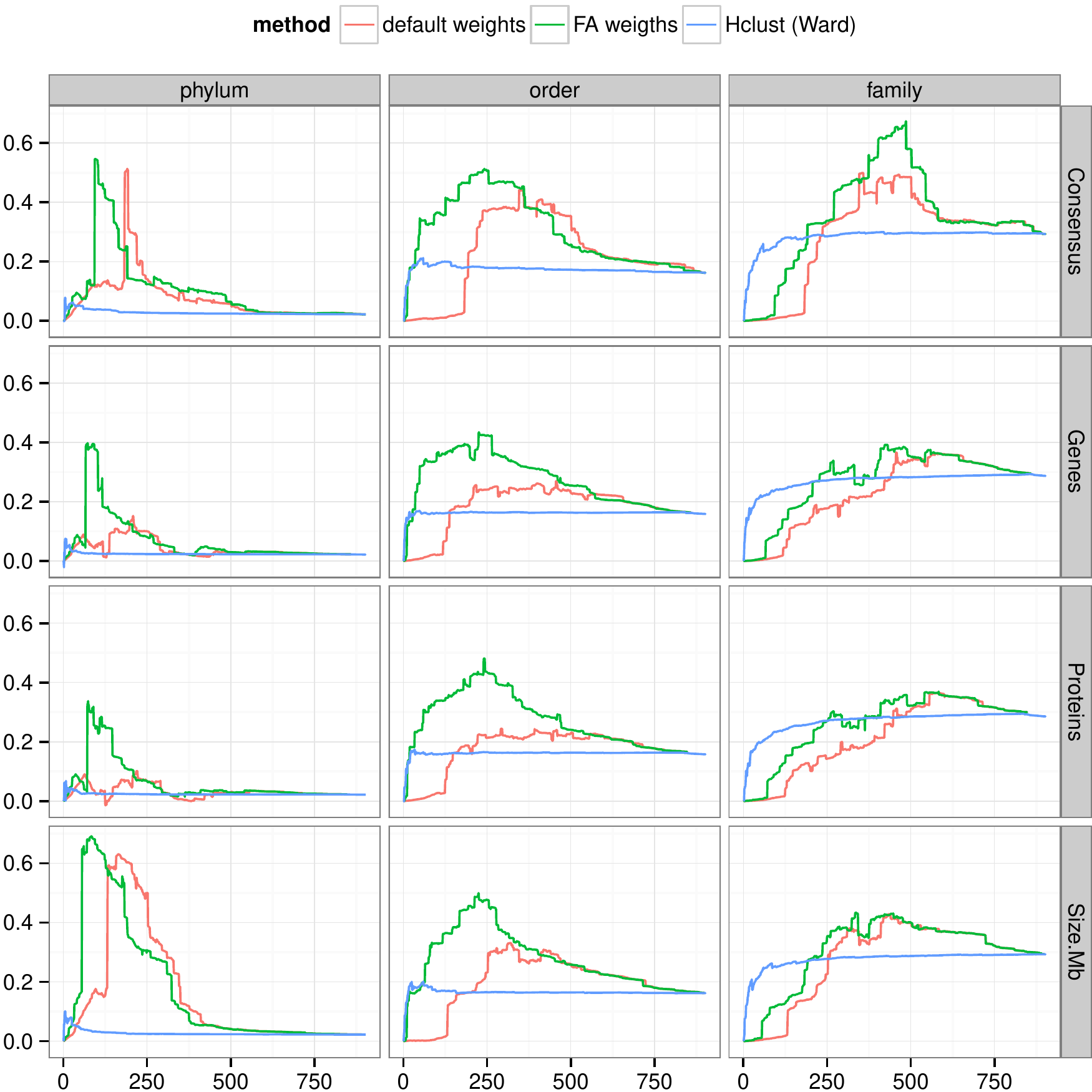} \\
    & \small number of groups \\
  \end{tabular}

  \caption{Adequacy  of  various   tree-based  clustering  methods  to
    different levels of phylogenetic classification.}
  \label{fig:taxonomy}
\end{figure}

First, we notice that the fused-ANOVA weights \emph{always} outperform
the default weights.  This is expected  since the former weights are a
special case of  the latter when $\alpha\to 0$.  Second,  we note that
the consensus  classification -- or  the one obtained  by multivariate
hierarchical clustering  -- is  not always the  best choice.   This is
particularly obvious for the phylum classification, where the ``size''
feature leads  to very good  results in terms of  adjusted rand-index.
These   results    considerably   deteriorate   for    the   consensus
classification, due to  the relatively poor results  obtained from the
``genes'' and ``proteins'' features. Finally, the most striking result
in  Figure~\ref{fig:taxonomy} is  that the  fusion penalty  approaches
clearly outperform the Ward hierarchical clustering.
At first  glance, one might  argue that  the weighting scheme  used in
fused-ANOVA is  responsible for  such good performance.   However, the
fusion  penalty with  default  weights remains  competitive  in a  few
cases.  This  supports the  fact that the  regularizing virtue  of the
fusion penalty is of great help when the problem size is high.


\section*{Acknowledgments}

We  would like to  thank Mahendra  Mariadassou for  useful discussions
about the bacterial genomes data  set and for sharing his knowledge of
phylogeny.

\bibliography{biblio_fa}
\bibliographystyle{plainnat}

\appendix


\section{Proofs}
\subsection{Theorem \ref{thm:norm-nosplit} (absence of splits with norms)}
\label{sec:norm-nosplit}

For the sake  of brevity the proof is detailed  only in the clustering
framework, \textit{i.e.},  when $\kappa(i)  = i$  and $n_k=1$  for all
$k=1,\dots,K$.   The  generalization  to  groups with  more  than  one
individual is straightforward and follows the exact same line.

Consider the objective function  in \eqref{eq:criterion_general} and a
time $\lambda_0$  at which  we have  a valid set  of clusters.   It is
obvious that clusters containing only one individual cannot split.  We
will thus consider  clusters grouping together more  than one element.
We  denote by  $C=\set{k:\beta_k (\lambda_0)  = \beta_{C}(\lambda_0)}$
such  a cluster,  with $\beta_{C}$  the current  estimated mean.   For
unitary weights, Lemma \ref{lem:KKT_sum_beta} implies that
\[
  \bzr_p = - \bar{\by}_C + \bbeta_C + \lambda_0 \sum_{i \notin C} \frac{\partial \Omega(\bbeta_C - \bbeta_{i})}{\partial \bbeta_C}(\lambda_0).
\]

Subtracting the above equation from the subgradient equation
\eqref{eq:KKT_general} for $i\in C$, one has
\begin{equation}
  \label{eq:norm_nosplit_proof}
\bar{\by}_C - \by_i + \lambda_0 \sum_{j \in C} \btau_{ij}(\lambda_0) =
\bzr_p.
\end{equation}

We now  consider any time $\lambda  \geq \lambda_0$ such  that no fusion
has   occurred.   Let   us  show   that  for   $\btau_{ij}(\lambda)  =
\frac{\lambda_0}{\lambda}  \btau_{ij}(\lambda_0)$, it  is  possible to
solve the KKT conditions, and thus show that no split occurs.

First,  the proposed $\btau_{ij}(\lambda)$  are valid  subgradients as
$\Omega(\btau_{ij}(\lambda))        \leq        1       $        since
$\Omega(\btau_{ij}(\lambda_0))   \leq   1$  and   $\lambda>\lambda_0$.
Second,  for  this particular  choice  of  subgradient  and thanks  to
\eqref{eq:norm_nosplit_proof}, the KKT conditions for all $C$ and all
$i\in C$ simplify as follows:
\begin{multline*}
  \bbeta_C - \by_i +  \lambda \sum_{j \in C} \frac{\lambda_0}{\lambda}
  \btau_{ij}(\lambda_0) + \lambda \sum_{C' \neq C} |C'| \frac{\partial
    \Omega(\bbeta_C - \bbeta_{C'})}{ \partial \bbeta_C}(\lambda) \\
  = \bbeta_C -\bar{\by}_C + \lambda \sum_{C' \neq C} |C'| \frac{\partial
    \Omega(\bbeta_C - \bbeta_{C'})}{ \partial \bbeta_C}(\lambda).
\end{multline*}

It now remains to check that we  can find a $\bbeta$ which zeroes this
subgradient equation.  Note that for  all $C'\neq C$, the differential
$\partial   \Omega(\bbeta_C  -   \bbeta_{C'})   /  \partial   \bbeta_C
(\lambda)$ is well defined.  Then, by multiplying the above expression
by $|C|$, we obtain the gradient of the following objective function
\[ \frac{1}{2}  \sum_{i=1}^n \left\|  \by_i - \bbeta_i  \right\|_2^2 +
\lambda  \sum_{C,C':C \neq  C'}  |C| \cdot  |C'|  \ \Omega(\bbeta_C  -
\bbeta_{C'}).  \]

This is a strictly convex  problem admitting one unique solution which
is solved by zeroing its gradient. Thus we necessarily have
\begin{equation*}
  \bbeta_C -\bar{\by}_C + \lambda \sum_{C' \neq C} |C'| \frac{\partial
    \Omega(\bbeta_C - \bbeta_{C'})}{ \partial \bbeta_C}(\lambda) = \bzr_p,
\end{equation*}
which ends the proof.


\subsection{Theorem  \ref{thm:weight-nosplit}: absence of  splits with
  distance-decreasing weights in 1-d}
\label{sec:nosplit}

For the sake of  brevity the proof is detailed only  in the case where
$\kappa(i) = i$ and $n_k=1$ for all $k=1,\dots,K$.  The generalization
to groups  with more  than one  individual is  straightforward, seeing
that we can replace a group $\kappa(i)$ by $n_{\kappa(i)}$ individuals
with  value  $\sum_{j} y_{j}  /  n_{\kappa(i)}$.   Also, when  $\Omega
\equiv \ell_1$, the proof remains  valid but should be done separately
on each dimension.

Throughout the proof, we may thus consider the estimator defined by
\begin{equation}
  \label{eq:crit_proof}
  \hatbbeta(\lambda) = \argmin_{\bbeta \in \Rset^n} \frac{1}{2} \sum_{i=1}^n \left( y_i -
    \beta_i \right)^2
  + \lambda \ \sum_{i,j:i\neq j} w_{ij} |\beta_i - \beta_j|.
\end{equation}

The   proof   proceeds  in   two   steps   detailed  hereafter:
\begin{enumerate}
\item in subsection \ref{subsec:order}, we show that absence of splits
  is equivalent to preservation of the order along the path;
\item   in  subsection  \ref{subsec:dual},   we  show   that  distance
  decreasing  weights  preserve  the  order,  by  considering  a  dual
  formulation of Problem \eqref{eq:crit_proof}.
\end{enumerate}

For simplicity,  we consider that  the data vector $\by$  is initially
ordered such that
\[y_1 \geq \dots, y_i \geq y_{i+1} \geq \dots \geq y_n.\]

\subsubsection{Preserving the order}
\label{subsec:order}

We say  that the  loss is order-preserving,  if $y_i\leq  y_j$ implies
that  $\hat{\beta}_i(\lambda)  \leq \hat{\beta}_j(\lambda)$,  for  all
$\lambda\geq 0$.

\begin{lemma}
  \label{lem:nosplit_order}  The  absence  of  splits is  equivalent  to
  preservation   of   the   order   along   the   path   for   Problem
  \eqref{eq:crit_proof}.
\end{lemma}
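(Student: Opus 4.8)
The plan is to prove the two implications separately, using throughout the continuity of the path $\lambda\mapsto\hatbbeta(\lambda)$ (which holds because \eqref{eq:crit_proof} is strictly convex) and, for the $\ell_1$ penalty, its piecewise-linearity. Since the data are ordered as $y_1\geq\dots\geq y_n$, I introduce for every pair $i<j$ the gap $g_{ij}(\lambda)=\hat\beta_i(\lambda)-\hat\beta_j(\lambda)$. Preservation of the order is then the statement that $g_{ij}(\lambda)\geq 0$ for all $\lambda\geq 0$ and all $i<j$, while absence of splits is the statement that the fused set $\set{(i,j):g_{ij}(\lambda)=0}$ is nondecreasing in $\lambda$, i.e. once $g_{ij}$ vanishes it stays at zero.

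For the direction \emph{no split $\Rightarrow$ order preservation}, I argue by contradiction. Fix $i<j$, so that $g_{ij}(0)=y_i-y_j\geq 0$, and suppose the order is violated, i.e. $g_{ij}(\lambda_1)<0$ for some $\lambda_1$. Since $g_{ij}$ is continuous, the intermediate value theorem yields a first time $\lambda^\star\in(0,\lambda_1)$ at which $g_{ij}(\lambda^\star)=0$ while $g_{ij}$ takes negative values immediately afterwards. At $\lambda^\star$ the coefficients $i$ and $j$ are fused, and just after they are not: this is a split, contradicting the hypothesis. Hence $g_{ij}\geq 0$ throughout.

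The converse, \emph{order preservation $\Rightarrow$ no split}, is the substantial part, and I again argue by contrapositive, considering the \emph{first} split, at some $\lambda_0$. Before $\lambda_0$ only fusions occur, so by the previous paragraph the order is preserved on $[0,\lambda_0]$; in particular every cluster is an interval of consecutive indices, since $\hat\beta_i=\hat\beta_j$ with $i<k<j$ forces $\hat\beta_k$ to lie between and hence equal. Writing the splitting cluster as $C=C_1\cup C_2$ with $C_1$ the block of smaller indices (hence larger data values) and $C_2$ the block of larger indices, I analyse the right-derivatives of the two blocks at $\lambda_0^+$ through the velocity formula obtained by differentiating \eqref{eq:KKT_sum_beta}, namely the $q=1$ case of \eqref{eq:beta_derivative}. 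A split is exactly the event at which the shared subgradient binding $C_1$ to $C_2$ saturates, and the sign of $\partial\beta_{C_1}/\partial\lambda-\partial\beta_{C_2}/\partial\lambda$ at $\lambda_0^+$ decides the direction of separation. The key structural ingredient is the additivity of the weights, $w_{C\ell}=\sum_{k\in C}w_{k\ell}$ from Lemma \ref{lem:KKT_sum_beta}: regrouping clusters lying on a common side of $C$ leaves the net external pull on $C_1$ and on $C_2$ unchanged, and since no order change (hence no side change of any external cluster) has occurred between the formation of $C$ and $\lambda_0$, this relative velocity keeps the \emph{same} sign it had while $C_1$ and $C_2$ were converging to form $C$. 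Consequently the two blocks cannot reseparate in the order-consistent direction; the split must reverse the order, so the boundary pair $(r,r+1)$ with $C_1=\set{\dots,r}$ and $C_2=\set{r+1,\dots}$ has $\hat\beta_r-\hat\beta_{r+1}$ negative just after $\lambda_0$, violating order preservation and completing the contrapositive.

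I expect the genuine difficulty to lie entirely in this second implication, and more precisely in ruling out \emph{order-preserving} splits. The bookkeeping for general $n$ -- tracking which external clusters sit above, below, or between $C_1$ and $C_2$, and verifying that weight additivity together with the absence of earlier crossings freezes the sign of the relative velocity -- is where care is needed. The case $n=3$ already displays the mechanism cleanly: the weight inequality required for two coefficients to separate in their natural order is exactly the negation of the inequality required for them to have fused in the first place, so a same-direction split is impossible and any split necessarily reverses the order.
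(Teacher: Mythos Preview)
Your first direction (no split $\Rightarrow$ order preserved) is correct and is essentially what the paper dismisses in one line as ``clear''; your intermediate-value-theorem argument is a perfectly good way to unpack it.

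The second direction has a genuine gap. You write the splitting cluster as $C=C_1\cup C_2$ and then compare the right-derivative at $\lambda_0^+$ with the relative velocity ``it had while $C_1$ and $C_2$ were converging to form $C$''. Nothing guarantees that $C_1$ and $C_2$ ever existed as separate clusters that merged to create $C$: the cluster $C$ may have been assembled through a different sequence of fusions (e.g.\ $\{1,2,3,4\}$ formed as $\{1,2\}\cup\{3,4\}$ and later split as $\{1,2,3\}\cup\{4\}$), and between its formation and $\lambda_0$ it may also have absorbed further elements. There is therefore no earlier ``relative velocity of $C_1$ versus $C_2$'' to which your freezing-of-sign argument can refer. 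Your $n=3$ heuristic works only because in that case the split pieces \emph{must} coincide with the merge pieces; this is precisely the feature that fails for larger $n$.

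The paper sidesteps all history-tracking with a one-line observation you have available but do not exploit. Lemma~\ref{lem:KKT_sum_beta} is not merely a source of slopes: for the $\ell_1$ penalty it expresses $\beta_C(\lambda)$ itself as an affine function of $\lambda$ with intercept $\bar y_C$, valid on the segment right after any event. Apply it at $\lambda_0^+$ to the two split pieces $C_{\text{down}}$ and $C_{\text{up}}$ (defined by $\beta_{\text{down}}<\beta_{\text{up}}$): two affine lines with intercepts $\bar y_{\text{down}}$ and $\bar y_{\text{up}}$ meet at $\lambda_0>0$ with down strictly below up just afterwards, so extrapolating to $\lambda=0$ forces $\bar y_{\text{down}}\geq\bar y_{\text{up}}$. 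On the other hand, if the order were preserved at $\lambda_0^+$ then the contrapositive of order preservation gives $y_i<y_j$ for every $i\in C_{\text{down}}$, $j\in C_{\text{up}}$, hence $\bar y_{\text{down}}<\bar y_{\text{up}}$, a contradiction. No bookkeeping on earlier fusions is needed, and there is no assumption linking split pieces to merge pieces.
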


\begin{proof}
  First of all, in the absence of splits in the path, it is clear that
  the order is preserved.

  Conversely, assume that there is an event at $\lambda_0$ that splits
  a  group  $C$  into  $C_{\text{down}}$  and  $C_{\text{up}}$,  where
  $\hatbeta_{\text{down}}(\lambda)   <  \hatbeta_{\text{up}}(\lambda)$
  for   all  $\lambda   \geq   \lambda_0$.   By   means  of   Equation
  \eqref{eq:KKT_sum_beta},         we         necessarily         have
  $\bar{y}_{{\text{down}}} \geq  \bar{y}_{{\text{up}}}$ as illustrated
  on  Figure   \ref{fig:slope_split}.   However,   if  the   order  is
  preserved, for all  $(i,j)\in C_{\text{down}} \times C_{\text{up}}$,
  we   must   have   $y_i<   y_j$   and   $\bar{y}_{{\text{down}}}   <
  \bar{y}_{{\text{up}}}$, which leads to a contradiction.
\end{proof}

\begin{figure}[htbp!]
  \centering
  \includegraphics[width=4cm]{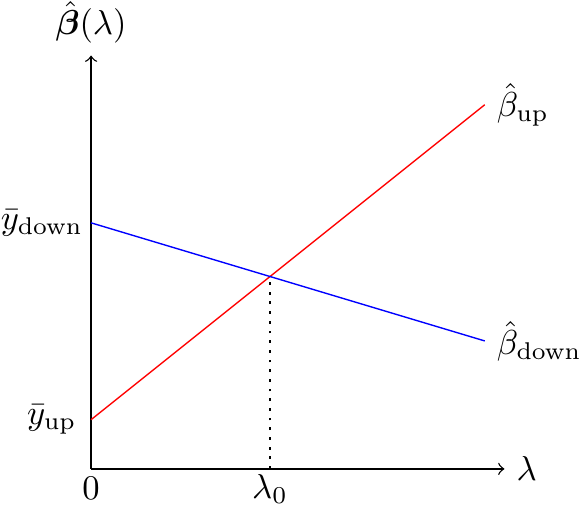}
  \caption{Equivalence  between preserving  the order  and absence  of
    splits relies on a simple geometrical argument.}
  \label{fig:slope_split}
\end{figure}

\subsubsection{The dual problem}
\label{subsec:dual}

We  follow arguments  developed by  \cite{2011_AS_Tibshirani}  for the
generalized Lasso. Indeed, Problem \eqref{eq:crit_proof} can be recast
as
\begin{equation}
  \label{eq:crit_genlasso}
  \hatbbeta(\lambda)   =  \argmin_{\bbeta  \in   \Rset^n}  \frac{1}{2}
  \left\| \by - \bX \bbeta \right\|_2^2 + \lambda \|W D \bbeta \|_1,
\end{equation}
a  generalized Lasso  problem with  $\bX =  \bI_{nn}$, $W$  a diagonal
matrix  the  diagonal   of  which  is  the   $n(n-1)/2$  vector  given
by                          \[\text{diag}(W)                         =
(w_{11},\dots,w_{1n},w_{23},\dots,w_{2n},w_{34},\dots,w_{(n-1)n})\]
and $D$  is a $n(n-1)/2  \times n$  matrix that performs  the pairwise
differences such that

\begin{equation}
  \label{eq:theDmatrix}
  D = \kbordermatrix{
  \text{couple } (i,j) & & & & & \\
  (1,1) & 1 & -1 & & & \\
  (1,2) & 1 & & -1 & & \\
  \vdots & \vdots & & & \ddots & \\
  (1,n) & 1 & & & & -1 \\
  (2,3) & & 1 & -1 & & \\
  \vdots & & \vdots & & \ddots & \\
  (2,n) & & 1 & & & -1 \\
  \vdots & & & &  &  \\
  (n-1,n) & & & & 1 & -1 \\
}.
\end{equation}
In what follows, it will be convenient to index rows of the matrix $D$
in  terms   of  the   couple  $(i,j)$,  as   is  done   in  Expression
\eqref{eq:theDmatrix}.

We  then   rely  on  the   Lagrangian  dual  of  the   primal  problem
\eqref{eq:crit_genlasso}  studied in  \cite{2011_AS_Tibshirani}, which
is
\begin{equation}
  \label{eq:crit_dual}
  \hat{\bu}(\lambda)     =    \argmin_{\bu\in\Rset^{(n(n-1)/2)}}    \frac{1}{2}
  \left\|\by - D^TW 
    \bu\right\|_2^2 \quad \text{subject to } \|\bu\|_\infty \leq \lambda,
\end{equation}
and where the correspondence between the primal and dual variables is
\begin{equation*}
  \hatbbeta = \by - D^TW \hat{\bu}.
\end{equation*}
The dual solution must satisfies
\begin{equation*}
  \hat{u}_{ij} \in \begin{cases}
    \set{+\lambda} & \text{ if } (W D\hatbbeta)_{ij} > 0, \\
    \set{-\lambda} & \text{ if } (W D\hatbbeta)_{ij} < 0, \\
    [-\lambda,+\lambda] & \text{ if } (W D\hatbbeta)_{ij} = 0,\\
  \end{cases}
\end{equation*}
where we  use the indexing in  terms of $(i,j)$ for  the vector $\bu$.
We also define $\mathcal{B}$, the set of $(i,j)$ such that $|u_{ij}| =
\lambda$, that is, the ones reaching the boundary in the dual.

The key point is to note that  the order is not preserved \emph{if and
  only if}, at  some point of the path, there  exists some $(i,j)$ and
$\lambda$ such  that $\hat{u}_{ij}(\lambda) = -\lambda$,  meaning that
$(WD\beta)_{ij} < 0$.  The rest of the proof will show that this event
is not  possible for  distance decreasing weights  and the  matrix $D$
given  by   \eqref{eq:theDmatrix}.   To   this  end,  we   proceed  by
contradiction, by supposing that the  order is not preserved along the
path. We  thus consider the  first split  event that will  disrupt the
order,  which occurs  at $\lambda_0$.   At  this point,  the order  is
preserved and there  is an $\varepsilon> 0$ such  that on $]\lambda_0,
\lambda_0+\varepsilon]$,  the order  is not  preserved.  We  note that
$\lambda_0>0$ since the order is necessarily preserved up to the first
fusion  event  that  fuses  data points  with  different  values.   At
$\lambda_0$, we must have a couple $(i^0,j^0)$ such that $\hat{u}_{i^0
  j^0}(\lambda_0) = -\lambda_0$ that  reaches the boundary.  Moreover,
the  left derivative  $\partial^-  \hat{u}_{i^0j^0}(\lambda)$ must  be
less     than    $-1$     because    the     path    is     continuous
\citep[see][]{2011_AS_Tibshirani}  and because  we consider  the first
event disrupting the order.  We  provide geometrical insight into this
point on Figure \ref{fig:dual_derivative}.
\begin{figure}[htbp!]
  \centering
  \includegraphics[width=4cm]{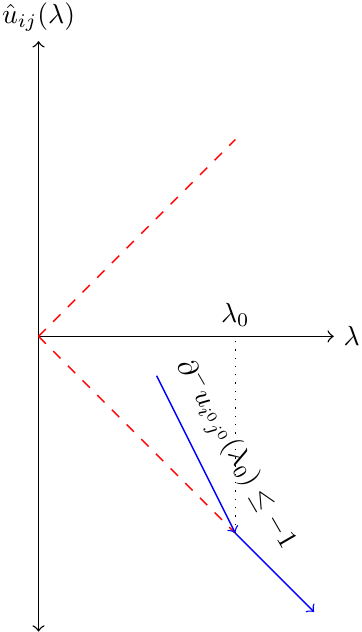}
  \caption{Geometrical insight into a split event in the dual.}
  \label{fig:dual_derivative}
\end{figure}

We    now    show     that    we    necessarily    have    $\partial^-
\hat{u}_{i^0j^0}(\lambda)>-1$,  leading to  a contradiction.   To this
end, we consider the set $\mathcal{C}$ of indices which are fused with
$i^0$  and $j^0$  just  before $\lambda_0$,  that  is, $\mathcal{C}  =
\set{i:\hatbeta_{i} = \hatbeta_{i^0} = \hatbeta_{j^0}}$. We denote by
\[ \Din = \set{ (i,j) \in \mathcal{C}\times \mathcal{C}: i<j }\]
the set of intra $\mathcal{C}$ differences and
\[\Dout  =  \set{(i,j)  \in  \mathcal{B}  :  i<j,  i  \in  \mathcal{C}
  \text{ or  } j  \in \mathcal{C}},\] the  set of  differences between
$\mathcal{C}$ and other groups. Finally we denote by $\Dother$ the set
of all other indices which are  not in $\Dout$ and $\Din$. Given those
sets we reindex the matrix $D$ and $W$ as follows.
\begin{equation*}
  D = \kbordermatrix{
    \text{set of index }   & \Cgr & \bar{\Cgr}  \\
    \Din & D_{\Din \times \Cgr} & D_{\Din\times \bar{\Cgr}} \\
    \Dout & D_{\Dout \times \Cgr} & D_{\Dout\times \bar{\Cgr}} \\
    \Dother & D_{\Dother \times \Cgr} & D_{\Dother \times \bar{\Cgr}} \\
  }
\end{equation*}

\begin{equation*}
  W = \kbordermatrix{
  \text{set of index }   & \Din & \Dout & \Dother  \\
\Din & W_{\Din^2} & 0 & 0 \\
\Dout & 0 & W_{\Dout^2} & 0 \\
\Dother & 0 & 0 & W_{\Dother^2} \\
}
\end{equation*}

By definition, for all $(i,j)\in\Dother$, $i$ and $j$ do not belong to
$\Cgr$  and  thus $D_{\Dother  \times  \Cgr}=  0$.   By simple  matrix
algebra,  the restriction  of $D^T  W$ to  the rows  in $\Cgr$  can be
written
\[(D^T  W)_{\Cgr}  = (D_{\Din  \times  \Cgr})^T  W_{\Din^2} +  (D_{\Dout
  \times \Cgr})^T W_{\Dout^2}.\]

Just        before        $\lambda_0$        and        for        all
$(i,j)\in\mathcal{D}_{\text{out}}$,  we have  $\hat{u}_{ij}(\lambda) =
\lambda$ and  so $\text{sign}(\hat{u}_{ij}) =1$, because  the order is
preserved at  this point.  Then,  following \cite{2011_AS_Tibshirani},
the   KKT    conditions   of   \eqref{eq:crit_dual}    restricted   to
$\mathcal{D}_{\text{in}}$ imply that
\begin{multline}
  \label{eq:expr_uCC}
  \hat{\bu}_{\mathcal{D}_{\text{in}}}(\lambda) = \\
  \left(W_{\Din^2} D_{\Din\times \Cgr} D_{{\Din\times \Cgr}}^T
    W_{\Din^2}\right)^+ W_{\Din^2}D_{{\Din\times \Cgr}}
  \left(\by - \lambda (W_{\Dout^2}D_{{\Dout\times \Cgr}})^T \I_{\Dout} \right),
\end{multline}
where $A^+$ denotes the Moore-Penrose pseudo-inverse of $A$. Note that
such   a    choice   is    important   since   it    guarantees   that
$\hat{\bu}(\lambda)$ is a continuous function of $\lambda$.

Expression of \eqref{eq:expr_uCC}  greatly simplifies by exploiting an
explicit formula for  the pseudo-inverse, which we  derive in the next
paragraph. 

\paragraph*{Analytic form of the  pseudo-inverse.}  In this paragraph,
we  consider   only  the  $D_{\Din   \times  \Cgr}$  matrix   and  the
$W_{\Din^2}$  matrix, which  correspond  to the  set  of intra  $\Cgr$
differences and  their weights.  For  simplicity, we just  denote them
$D$ and $W$ here and call $n'$ the group size. We have
\begin{equation*}
  D^T D = n' \bI_{n'n'} - \I_{n'} \I_{n'}^T, \quad D \I_{n'} = \bzr_{n'},
\end{equation*}
and  from  this we  get  $D  D^T  D =  n'  D$  and  thus, $D^+  =  D^T
/n'$. Finally
\begin{equation*}
  (D D^T)^+  = \frac{1}{{n'}^2} D D^T, \quad  \text{ and } (D  D^T)^+ D =
  \frac{D}{n'} .
\end{equation*}
If    we   now    consider   the    weighted   version    of   Problem
\eqref{eq:crit_genlasso}, one has
\begin{equation*}
  (W D D^T W )^+ W D = W^{-1} (D D^T)^+ W^{-1} W D = \frac{W^{-1}D}{n'}.
\end{equation*}

\paragraph*{Back  to  our   problem,}  Expression  \eqref{eq:expr_uCC}
becomes
\begin{equation}
  \label{eq:expr_uCC_simple}
  \hat{\bu}_{\mathcal{D}_{\text{in}}}(\lambda) =
  \frac{1}{n_{\Cgr}}W_{\Din^2}^{-1} D_{\Din\times\Cgr}
  \left(\by - \lambda \underbrace{(W_{\Dout^2}D_{{\Dout\times \Cgr}})^T \I_{\Dout}}_{V} \right).
\end{equation}

Let us  consider the  size-$n_{\Cgr}$ vector  $V$, which  includes the
differences   between  elements   in  $\Cgr$   and  elements   outside
$\Cgr$. Note that the $i$th column of $D_{{\Dout\times \Cgr}}$ is zero
everywhere, except for the elements  of $\Dout$ containing $i$. In the
last  case,  it  is  equal  to  $1$  if  $y_i\geq  y_j$  and  to  $-1$
otherwise. Hence,
\begin{equation*}
  V_i = \left( (D_{{\Dout\times \Cgr}})^T W_{\Dout^2} \I_{\Dout}\right)_i = \sum_{j\in\bar{\Cgr}} w_{ij} \sign(y_i - y_j).
\end{equation*}

Also recall that the matrix $D_{\Din\times \Cgr}$ encodes the pairwise
positive differences. Then for $y_i > y_{i'}$, the $(i,i')$ element of
$D_{\Din\times \Cgr} V$ equals
\begin{equation*}
  (D_{\Din\times   \Cgr}    V)_{ii'}   =   V_i   -    V_{i'}   =
  \sum_{j\in\bar{\Cgr}} w_{ij} \sign(y_i - y_j) - w_{i'j} \sign(y_{i'} - y_j).
\end{equation*}
There  are  two  possibilities:  either  $y_{j}>y_{i}\geq  y_{i'}$  or
$y_{i}\geq y_{i'}> y_{j}$.   We thus split the summation  in the above
equation into two parts:
\begin{equation*}
  (D_{\Din\times   \Cgr}    V)_{ii'}   =  
  \sum_{\substack{j\in\bar{\Cgr}\\ y_{j}>y_{i}\geq y_{i'}}} (w_{i'j} - w_{ij})
  + \sum_{\substack{j\in\bar{\Cgr}\\y_{i}\geq y_{i'}>y_j}} (w_{ij} - w_{i'j}).
\end{equation*}
And from  this we see  that if the  weights are positive  and distance
decreasing, all  the $(D_{\Din\times \Cgr} V)_{ii'}$  are negative. To
conclude,  the slopes  in Expression  \eqref{eq:expr_uCC_simple}, that
is,
\begin{equation*}
  -  \frac{\lambda }{n_{\Cgr}}W_{\Din^2}^{-1} D_{\Din\times\Cgr}(W_{\Dout^2}D_{{\Dout\times \Cgr}})^T \I_{\Dout}
\end{equation*}
are   positive,   which   is   in   contradiction   with   $\partial^-
\hat{u}_{i^0j^0}(\lambda)\leq -1$.


\subsection{Theorem     \ref{thm:oracle_prop}:     consistency     for
  exponentially adaptive weights}
\label{sec:oracle_prop}

We  essentially follow  the same  line as  for the  adaptive Lasso  in
\cite{zou2006adaptive},  yet  adapted  to  the fusion  penalty  as  in
\cite{viallon:hal-00813281,bondell2008simultaneous}.      The     main
difference comes  from the use  of the exponentially  adaptive weights
$w_{k\ell}^{\text{FA}}$.
\\

We  start  by  asymptotics  in the  vein  of  \cite{FuKnight2001}  for
Lasso-type  procedures:  Lemma  \ref{lem:asym_distr} below  gives  the
limiting     distribution     of     the     fused-ANOVA     estimator
\eqref{eq:fused_anova}  on  the  range  of interest  for  the  penalty
$\lambda_n$ which essentially proves  the asymptotic normality part of
the Theorem.

\begin{lemma}   \label{lem:asym_distr}   Suppose  $\lambda_n   n^{3/2}
  \exp\set{-\alpha\sqrt{n}} \to 0$ and $\lambda_n n^{3/2} \to \infty$.
  Then,
  \begin{equation*}
    \sqrt{n}(\hatbbeta^{(n)}    -    \bbeta^{\star})   \xrightarrow{d}
    \argmin_{\bu} V(\bu),
  \end{equation*}
  where, for $\bW \sim \mathcal{N}(\bzr, \sigma^2\bD)$,
  \begin{equation*}
    V(\bu) =
    \begin{cases}
      -2\bu^T       \bW      +
      \bu^T \bD \bu
      & \text{if $u_k = u_\ell$ for all $(k,\ell)\in\mathcal{A}^\star$} \\
      \infty & \text{otherwise.} \\
    \end{cases}
  \end{equation*}
\end{lemma}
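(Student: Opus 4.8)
The plan is to use the now-standard argument of \cite{FuKnight2001} for the limiting law of $M$-estimators with a nonsmooth penalty, adapted to the exponentially adaptive weights. First I would reparametrize around the truth by setting $\bbeta = \bbeta^\star + \bu/\sqrt{n}$ and define the recentered, rescaled objective
\begin{equation*}
  V_n(\bu) = 2\left[ F_n\!\left(\bbeta^\star + \tfrac{\bu}{\sqrt{n}}\right) - F_n(\bbeta^\star)\right],
\end{equation*}
where $F_n$ denotes the objective in \eqref{eq:fused_anova}; since $\bu\mapsto V_n(\bu)$ is convex and minimized exactly at $\sqrt{n}(\hatbbeta^{(n)}-\bbeta^\star)$, it suffices to identify its limit. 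Writing $\bar{y}_k = \beta_k^\star + \bar{\varepsilon}_k$ with $\bar{\varepsilon}_k$ the group-mean residual, the quadratic part of $V_n$ becomes $-2\sum_k \tfrac{n_k}{\sqrt{n}}\bar{\varepsilon}_k u_k + \sum_k \tfrac{n_k}{n}u_k^2$. The factor $\tfrac{n_k}{\sqrt{n}}\bar{\varepsilon}_k = n^{-1/2}\sum_{i:\kappa(i)=k}\varepsilon_i$ converges by the central limit theorem to a $\mathcal{N}(0,\sigma^2\rho_k)$ variable, and since residuals from distinct groups are independent, the whole vector converges to $\bW\sim\mathcal{N}(\bzr,\sigma^2\bD)$, while $\tfrac{n_k}{n}\to\rho_k$ gives $\sum_k \tfrac{n_k}{n}u_k^2\to\bu^T\bD\bu$. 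Hence the quadratic part tends to $-2\bu^T\bW + \bu^T\bD\bu$.

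The crux is the analysis of the penalty term $\lambda_n \sum_{k\neq\ell} w_{k\ell}^{\text{FA}}\big(|\beta_k-\beta_\ell| - |\beta_k^\star-\beta_\ell^\star|\big)$, for which I would split the pairs according to membership in $\mathcal{A}^\star$. For a between-class pair, $(k,\ell)\notin\mathcal{A}^\star$, the map $t\mapsto|t|$ is differentiable at $\beta_k^\star-\beta_\ell^\star\neq0$, so the increment equals $\tfrac{u_k-u_\ell}{\sqrt{n}}\signs(\beta_k^\star-\beta_\ell^\star) + o(n^{-1/2})$, and the coefficient $\lambda_n w_{k\ell}^{\text{FA}}/\sqrt{n}$ is of order $\lambda_n n^{3/2}\exp\{-\alpha\sqrt{n}\,|\bar{y}_k-\bar{y}_\ell|^{\gamma}\}$; since $\bar{y}_k-\bar{y}_\ell\to\beta_k^\star-\beta_\ell^\star\neq0$ and $\lambda_n n^{3/2}\exp\{-\alpha\sqrt{n}\}\to0$, this contribution vanishes. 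For a within-class pair, $(k,\ell)\in\mathcal{A}^\star$, one has $\beta_k^\star=\beta_\ell^\star$ so the increment is exactly $|u_k-u_\ell|/\sqrt{n}$, multiplied by $\lambda_n w_{k\ell}^{\text{FA}}/\sqrt{n}$ of order $\lambda_n n^{3/2}\exp\{-\alpha\sqrt{n}\,|\bar{\varepsilon}_k-\bar{\varepsilon}_\ell|^{\gamma}\}$; because $\bar{\varepsilon}_k-\bar{\varepsilon}_\ell=O_P(n^{-1/2})$ the exponential is bounded away from zero in probability, so the condition $\lambda_n n^{3/2}\to\infty$ forces this coefficient to diverge. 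Consequently the within-class contribution tends to $0$ when $u_k=u_\ell$ and to $+\infty$ otherwise, which produces exactly the constraint set appearing in $V$.

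Combining the two parts yields $V_n(\bu)\xrightarrow{d} V(\bu)$ pointwise in $\bu$, where $V$ is the piecewise-defined limit of the statement. Since each $V_n$ is convex and the limit $V$ is convex with a unique minimizer ($\bD$ positive definite ensures strict convexity on the affine subspace $\{u_k=u_\ell : (k,\ell)\in\mathcal{A}^\star\}$), I would conclude by invoking the convexity (epi-convergence) lemma for argmins of convex processes, as in \cite{FuKnight2001}: finite-dimensional convergence of convex functions to a limit with a unique minimizer transfers to convergence in distribution of the minimizers, i.e. $\sqrt{n}(\hatbbeta^{(n)}-\bbeta^\star)=\argmin_{\bu}V_n(\bu)\xrightarrow{d}\argmin_{\bu}V(\bu)$. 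The main obstacle is the penalty asymptotics of the preceding paragraph: one must control the random exponential weights carefully enough to obtain the clean within/between-class dichotomy, and it is precisely the two-sided rate condition on $\lambda_n$ that separates the regime where same-class differences are shrunk to zero from the regime where distinct-class differences are left asymptotically unpenalized.
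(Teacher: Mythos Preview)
Your proposal is correct and follows essentially the same route as the paper: reparametrize by $\bu=\sqrt{n}(\bbeta-\bbeta^\star)$, expand the recentered objective into a quadratic part converging to $-2\bu^T\bW+\bu^T\bD\bu$ and a penalty part split over $\mathcal{A}^\star$ versus its complement, use the two-sided rate condition on $\lambda_n$ to show the within-class coefficients diverge while the between-class ones vanish, and conclude via epi-convergence of convex functions. The only cosmetic differences are that the paper cites \cite{Geyer1994} rather than \cite{FuKnight2001} for the argmin step and does not spell out the $O_P(n^{-1/2})$ control of $\bar{\varepsilon}_k-\bar{\varepsilon}_\ell$ as explicitly as you do.
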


\begin{proof}
  Let  $\hatbbeta^{(n)} = \bbeta^{\star}  + \frac{\bu_n}{\sqrt{n}}$  -- or
  equivalently  $\bu_n =  \sqrt{n}(\hatbbeta^{(n)}  - \bbeta^{\star})$  --
  where $\bbeta^{\star}$ is the true vector of parameters and $\bu_n =
  \argmin_{\bu\in\Rset^K} \Phi_n(\bu)$ with
  \begin{multline*}
    \Phi_n(\bu) =  \frac{1}{2} \sum_{k=1}^K
    n_k \left(y_i - (\beta^{\star}_k + \frac{u_k}{\sqrt{n}})\right)^2
    +  \lambda_n  \  \sum_{k\neq\ell}  w_{k\ell}^{\text{FA}}  \left| \beta^{\star}_k  -
    \beta^{\star}_\ell + \frac{u_k-u_\ell}{\sqrt{n}} \right|.
  \end{multline*}
  Note that $\bu_n$ is also the minimizer of $V_n(\bu) = \Phi_n(\bu) -
  \Phi(\bzr)$ which is written
  \begin{equation*}
    V_n(\bu) =  \sum_{k} \frac{n_k}{n} u_k^2 -  2 \sum_k \frac{n_k}{n}
    \varepsilon_k     +    \frac{\lambda_n}{\sqrt{n}}    \sum_{k,\ell}
    w_{k\ell}^\text{FA} \underbrace{\sqrt{n} \left( \left|
        \beta^\star_k      -     \beta_\ell^k     +      \frac{u_k     -
          u_\ell}{\sqrt{n}}\right| - \left| \beta^\star_k - \beta_\ell^\star\right|\right)}_{T_{k\ell}^{(n)}}.
  \end{equation*}
  Let us study the limiting  behavior of $V_n$.  The basic assumptions
  for  our fused-ANOVA  Problem  \eqref{eq:fused_anova}  are having  a
  design such  that $\lim_{n\to\infty}  n_k / n  = \rho_k$  and having
  i.i.d residuals with zero mean and common variance $\sigma^2$. Thus,
  the first  two terms  in $V_n$ respectively  converge to  a constant
  $\bu^T  \bD  \bu$  and  to   a  Gaussian  $\bW  =  \mathcal{N}(\bzr,
  \sigma^2\bD)$, where $\bD$ is a  $K$-diagonal matrix such as $D_{kk}
  = \rho_k$.  For the third  term, there are two possibilities: either
  $\beta^\star_k   =   \beta_\ell^\star$    or   $\beta^\star_k   \neq
  \beta_\ell^\star$.    In   other   words,  $(k,\ell)$   belongs   to
  $\mathcal{A}^\star$ or does not. First note that
  \begin{equation*}
    T_{k\ell}^{(n)}  \xrightarrow[]{n\to\infty}
    \begin{cases}
      |u_k - u_{\ell}| &
      \text{if } (k,\ell) \in\mathcal{A}^\star,\\
      (u_k  - u_{\ell})  \sign(\beta_k^\star  - \beta_{\ell}^\star)  &
      \text{otherwise}.
    \end{cases}
  \end{equation*}
  In words, this part of the third term converges to a finite constant
  in both situations  which is null as soon  as $u_k=u_\ell$.  Second,
  consider the  remaining part of  this third term which  involves the
  weights   $w_{k\ell}^\text{FA}$.     It   suffices   to    use   the
  $\sqrt{n}$-consistency       of       the       OLS       estimators
  $(\bar{y}_1,\dots,\bar{y}_K)$ coupled  with assumptions made  on the
  limiting behavior of $\lambda_n$ to see that
  \begin{equation*}
    \frac{\lambda_n}{\sqrt{n}}         w_{k\ell}^{\text{FA}}         =
    \frac{\lambda_n}{\sqrt{n}}                n_k               n_\ell
    \exp\set{-\alpha\sqrt{n}|\bar{y}_k-\bar{y}_\ell|} \to
    \begin{cases}
      \infty & \text{if } (k,\ell) \in\mathcal{A}^\star,\\
      0 & \text{otherwise} .
    \end{cases}
  \end{equation*}
  Application of  Slutsky's Lemma gives  the limiting behavior  of the
  third term in $V_n$ and we finally get $V_n(\bu)\to V(\bu)$ with $V$
  defined as in Lemma \ref{lem:asym_distr}.

  The  final  convergence  of  $\bu_n  \to_d  \argmin_{\bu}  V(\bu)$  is
  obtained by applying epi-convergence results of \cite{Geyer1994}.
\end{proof}

Turning back to the proof  of Theorem \ref{thm:oracle_prop}, just note
that the  unique minimizer  of the convex  function $V(\bu)$  in Lemma
\ref{lem:asym_distr}    is   $\bu^\star    =    \bD^{-1}   \bW    \sim
\mathcal{N}(\bzr,       \sigma^2       \bD^{-1})$      such       that
$u^\star_k=u^\star_\ell$ for all  $(k,\ell) \in \mathcal{A}^\star$ and
the asymptotic normality part is proved.
\\

We now proceed to the consistency in terms of support recovery. First,
concerning the elements of $\hatbbeta$  that should not fuse according
to the true  $\mathcal{A}^\star$, Lemma \ref{lem:asym_distr} indicates
that
\begin{equation*}
  \prob\left((k,\ell) \notin \hat{\mathcal{A}}_n | (k,\ell) \notin
    \mathcal{A}^\star\right) = 1- \prob\left(\hat{\beta}_{k}^{(n)} =
    \hat{\beta}^{(n)}_{\ell} | \beta_k^\star \neq \beta_\ell^\star\right) \to 1.
\end{equation*}

Second, regarding  elements of $\hatbeta$  that must fuse, we  need to
prove that
\begin{equation*}
  \prob\left((k,\ell) \in \hat{\mathcal{A}}_n | (k,\ell) \in
    \mathcal{A}^\star\right) = \prob\left(\hat{\beta}_{k}^{(n)} =
    \hat{\beta}^{(n)}_{\ell} | \beta_k^\star = \beta_\ell^\star\right) \to 1.
\end{equation*}
To this  end, we  proceed as in  \cite{viallon:hal-00813281} to  get a
contradiction by considering the  largest $\hat{\beta}_{k'}$ such that
$\hat{\beta}_k  \neq  \hatbeta_\ell$   even  though  $\beta_k^\star  =
\beta_\ell^\star$.  This can be done  by inspecting the KKT conditions
asymptotically.   In   the  univariate  case  and   for  $\Omega$  the
$\ell_1$-norm,   an   optimal   $\hatbeta$  verifies   the   following
subgradient equation for all $k=1,\dots,K$. This is written
\begin{equation}
  \label{eq:KKT_asym}
  \frac{n_k}{\sqrt{n}} (\hat{\beta}_k - \bar{y}_k) = \frac{\lambda_n}{\sqrt{n}} \sum_{\substack{\ell:
      \ell\neq k \\ (k,\ell) \in \mathcal{A}^\star}}
  w_{k\ell}^{\text{FA}} \tau_{k\ell} + \frac{\lambda_n}{\sqrt{n}}
  \sum_{\substack{\ell:
      \ell\neq k \\ (k,\ell) \notin \mathcal{A}^\star}}
  w_{k\ell}^{\text{FA}} \sign\left(\hat{\beta}_k-\hat{\beta}_\ell\right).
\end{equation}

Now, in  the first  term of  the right-hand  side, suppose  that there
exists at  least one  $\ell$ such  that $(k,\ell)\in\mathcal{A}^\star$
and  $\hat{\beta}_k  \neq \hat{\beta}_\ell$  simultaneously;  consider
$\hat{\beta}_{k'}$               with              $k'               =
\argmax_{\ell:(k,\ell)\in\mathcal{A}^\star}\{  \hat{\beta}_{\ell} \}$,
the  largest coefficients  verifying  these conditions:  we must  have
$\tau_{k'\ell} =  1$ for all $\ell$  such that $\hat{\beta}_{\ell}\neq
\hat{\beta}_{k'}$ and $\beta_{\ell}^\star = \beta_{k'}^\star$.  Now if
we sum equation \eqref{eq:KKT_asym} for all $\ell$ that are fused with
$k'$ we obtain:
\begin{multline}
  \label{eq:KKT_asym_2}
  \sum_{\ell | \hat{\beta}_{\ell} = \hat{\beta}_{k'}} \frac{n_{\ell}}{\sqrt{n}} (\hat{\beta}_{k'} - \bar{y}_{\ell}) =
	\frac{\lambda_n}{\sqrt{n}}
	\sum_{\ell | \hat{\beta}_{\ell} = \hat{\beta}_{k'}} \
	\sum_{\substack{k | (k,\ell) \in \mathcal{A}^\star \\ \cap \ \hat{\beta}_{k} \neq \hat{\beta}_{k'}}}
  	w_{k\ell}^{\text{FA}} \\
 	+ \frac{\lambda_n}{\sqrt{n}}
	\sum_{\ell | \hat{\beta}_{\ell} = \hat{\beta}_{k'}} \
	\sum_{\substack{\ell:\ell\neq k \\ (k,\ell) \notin \mathcal{A}^\star}}
  	w_{k\ell}^{\text{FA}} \sign\left(\hat{\beta}_k-\hat{\beta}_\ell\right).
\end{multline}

By Lemma \ref{lem:asym_distr} and  asymptotic normality, the left-hand
side in \eqref{eq:KKT_asym_2} converges to a $\mathcal{O}_P(1)$. Then,
the second term on the right-hand  side (that is, elements that should
not      fuse)      tends       to      $0$      since      $\lambda_n
w_{k\ell}^{\text{FA}}/\sqrt{n}\to                0$               when
$(k,\ell)\notin\mathcal{A}^\star$,  as  seen previously.   Finally  we
have :
\begin{equation*}
  \frac{\lambda_n}{\sqrt{n}}
	\sum_{\ell | \hat{\beta}_{\ell} = \hat{\beta}_{k'}} \
	\sum_{\substack{k | (k,\ell) \in \mathcal{A}^\star \\ \cap \ \hat{\beta}_{k} \neq \hat{\beta}_{k'}}}
  	w_{k\ell}^{\text{FA}}
  \xrightarrow[]{n\to\infty} \infty
\end{equation*}
which is in contradiction with the rest of the subgradient equation of
$\beta_{k'}$   since   we   recall   that  the   left-hand   side   is
$\mathcal{O}_P(1)$.  Therefore  we must have  $\prob\left((k,\ell) \in
  \hat{\mathcal{A}}_n\right)\to    1$    for   all    $(k,\ell)    \in
\mathcal{A}^\star$, which completes the  proof of the consistency part
in Theorem \ref{thm:oracle_prop}.


\end{document}